\documentclass[12pt]{article}
\usepackage[utf8]{inputenc}

\usepackage{svgcolor}
\def\showauthornotes{1}
\ifnum\showauthornotes=1

\newcommand{\jnote}[1]{\textcolor{blue}{ {\textbf{(Jeff: #1)}}}}
\newcommand{\snote}[1]{\textcolor{cyan}{ {\textbf{(SiuOn: #1)}}}}
\newcommand{\tnote}[1]{\textcolor{red}{ {\textbf{(Tommaso: #1)}}}}
\else

\newcommand{\jnote}[1]{}
\newcommand{\snote}[1]{}
\newcommand{\tnote}[1]{}
\fi

\def \R {\mathbb{R}}
\def \C {\mathbb{C}}

\def \E {\mathop{\mathbb{E}}}

\def \S {\mathbb{S}}
\def \eps {\epsilon}

\renewcommand{\Pr}{\mathop{\bf Pr\/}}

\newcommand{\ot}{\otimes}

\newcommand{\Var}{\mathrm{Var}}

\newcommand{\pE}{\mathop{\widetilde{\E}}}

\usepackage{amsmath,amssymb,amsthm,amsfonts,latexsym,bm,xspace,graphicx,float,mathtools,mathdots,physics}
\usepackage{braket,caption,subcaption,ellipsis,xcolor,textcomp,hhline,pifont,combelow,booktabs}
\usepackage[colorlinks=true, allcolors=blue]{hyperref}
\usepackage{color}
\usepackage{times}
\usepackage{fullpage}
\usepackage{tikz-cd}
\usepackage[shortlabels]{enumitem}
\usepackage{thm-restate}
\usepackage[capitalise]{cleveref}
\usepackage{mdframed}
\usepackage{csquotes}
\usepackage{cancel}

\newtheorem{theorem}{Theorem}[section]
\newtheorem{lemma}[theorem]{Lemma}
\newtheorem{claim}[theorem]{Claim}
\newtheorem{proposition}[theorem]{Proposition}

\newtheorem{corollary}[theorem]{Corollary}

\newtheorem{definition}[theorem]{Definition}
\newtheorem{remark}[theorem]{Remark}

\newcommand{\poly}{\operatorname{poly}}

\newcommand{\supp}{\textnormal{supp}}

\DeclareMathOperator{\Span}{span}
\renewcommand{\Re}{\mathop{\textnormal{Re}}}

\DeclareMathAlphabet{\mathsfit}{\encodingdefault}{\sfdefault}{m}{sl}
\SetMathAlphabet{\mathsfit}{bold}{\encodingdefault}{\sfdefault}{bx}{n}

\newcommand\sC{\mathcal{C}}

\newcommand{\defas}{\coloneqq}

\newcommand{\Paren}[1]{\left(#1\right)}
\newcommand\Brac[1]{\left[#1\right]}
\newcommand\Abs[1]{\left\lvert#1\right\rvert}

\newcommand{\Norm}[1]{\left\lVert#1\right\rVert}

\newcommand{\inner}[1]{\langle#1\rangle}
\usepackage[only,llbracket,rrbracket]{stmaryrd}
\newcommand{\Iv}[1]{\left\llbracket#1\right\rrbracket}

\newcommand{\cB}{\mathcal B}
\newcommand{\cC}{\mathcal C}

\newcommand{\cE}{\mathcal E}

\newcommand{\cH}{\mathcal H}
\newcommand{\cI}{\mathcal I}

\newcommand{\cL}{\mathcal L}

\newcommand{\cN}{\mathcal N}

\newcommand{\cT}{\mathcal T}

\newcommand{\cW}{\mathcal W}

\mathchardef\mhyphen="2D
\newcommand{\val}{\text{val}}

\renewcommand{\int }{\text{Int}}

\newcommand{\Erdos}{Erd\H{o}s\xspace}
\newcommand{\Renyi}{R\'enyi\xspace}
\newcommand{\Lovasz}{Lov\'asz\xspace}

\usepackage[notes=true,later=false,camera=false]{dtrt}
\boolfalse{dt@markundef}
\usepackage{algorithm}
\usepackage{algorithmic}

\allowdisplaybreaks

\begin{document}
\title{Strong Refutation for Random Quantum 3-SAT at Constant Density}
\author{
  Siu On Chan
  \thanks{sochan@gmail.com}
  \and
  Jeff Xu
  \thanks{{Toyota Technological Institute at Chicago}.{jeffxusichao@ttic.edu}}
}

\maketitle
\thispagestyle{empty}

\abstract{
  We give a classical polynomial-time algorithm that strongly refutes random quantum $3$-SAT at sufficiently large
  constant constraint density, thereby disproving the quantum analogue of Feige's random $3$-SAT hypothesis.
  This stands in sharp contrast to classical random $3$-SAT, for which polynomial-time strong refutation is known only
  at constraint density $\Delta\gtrsim n^{1/2}$.
  Although quantum $3$-SAT shares the pairwise-independence barrier of its classical counterpart, our SDP-based
  refutation overcomes this barrier by exploiting the noncommutativity of quantum constraints.
}

\vfill
\paragraph{AI Disclosure:} The authors formulated the two key ingredients of the refutation algorithm: the reduction to pairwise-independent
states through the classical analogue, and the strong refutation of pairwise-uniform states using bow-tie structures.
AI assistance was used to work out the technical details, with the specific contributions highlighted in a later section. The authors verified the correctness and originality of all content, including the references.

\clearpage\newpage
\thispagestyle{empty}
\setcounter{page}{0}

\tableofcontents
\clearpage\newpage
\thispagestyle{empty}
\setcounter{page}{1}

\clearpage
\newpage

\section{Introduction}

Quantum \(k\)-SAT, or \(k\)-QSAT, is the quantum analogue of classical \(k\)-SAT, introduced by Bravyi~\cite{bravyi2011efficient}. For any \(k\) variables, a classical \(k\)-SAT clause selects one of the \(2^k\) sign patterns and thereby forbids a single assignment in \(\{0,1\}^k\), equivalently, one of the \(2^k\) computational-basis directions. A \(k\)-QSAT clause is the continuous analogue: instead of forbidding a basis direction, it forbids an arbitrary one-dimensional subspace of the local Hilbert space \((\C^2)^{\otimes k}\); in the random model, this forbidden direction is sampled from the Haar measure. Thus, for \(k=3\), the eight discrete forbidden directions of classical \(3\)-SAT are replaced by a continuum of directions in an eight-dimensional Hilbert space.

In the same work, Bravyi gave a classical polynomial-time algorithm for \(2\)-QSAT and proved that \(k\)-QSAT is \(\mathrm{QMA}_1\)-complete for \(k\ge 4\).
Gosset and Nagaj later closed the remaining gap by proving \(\mathrm{QMA}_1\)-completeness of \(3\)-QSAT~\cite{gosset2016quantum}.
The complexity class $\text{QMA}_1$ was introduced in \cite{bravyi2011efficient} as a quantum analogue of $\text{NP}$,
consisting of problems in $\text{QMA}$ whose Yes instances can be accepted by a quantum verifier with probability $1$.
$\text{QMA}_1$ was recently shown to be the same as $\text{QMA}$, concurrently by Grewal and Rudolph
\cite{grewal2026qma} and by Liu and Vidick \cite{liu2026achieving}.
So the complexity of $k$-QSAT seems to nicely mirror that of $k$-SAT, when the complexity classes are appropriately
quantized.

Much like the classical countperpart (e.g.~\cite{ding2015proof}), the satisfiability threshold of random $k$-QSAT has
also been studied.
Ravelomanana \cite{ravelomanana2026slice} upper bounded the random $3$-QSAT threshold by $1.947$ and the $k$-QSAT
threshold by $O(2^k/k)$, improving on previous works of Laumann, L\"auchli, Moessner, Scardicchio, and Sondhi
\cite{laumann2010product,laumann2010random} and Bravyi, Moore, and Russell \cite{bravyi2010bounds}.
On the other hand, Ambainis, Kempe, and Sattath \cite{ambainis2012quantum} lower bounded the random $k$-QSAT threshold
by $\Omega(2^k/k^2)$ using the quantum \Lovasz local lemma.

Several recent works study algorithmic variants of Hamiltonian satisfiability and optimization.  Montanaro, Shao, and
Verdon~\cite{montanaro2025testing} gave a randomized polynomial-time property tester for quantum $k$-SAT.  Hastings and
O'Donnell~\cite{hastings2022optimizing} obtained efficient classical constant-factor upper certificates for the dense
degree-$4$ fermionic Sachdev--Ye--Kitaev model, while Kocurek~\cite{kocurek2026spectral} gave a classical spectral
algorithm for strongly refuting semirandom Hamiltonian $k$-XOR.  Chen, Dalzell, Berta, Brand\~ao, and
Tropp~\cite{chen2024sparse} showed that quantum phase estimation efficiently prepares near-ground-energy states of
certain sparse random but generally nonlocal Hamiltonians.  These and many other results reveal the diverse algorithmic
phenomena exhibited by random and structured Hamiltonian models.

In this work, we study the average-case task of strong refutation, motivated by its classical counterpart. A central conjecture in average-case complexity, Feige's hypothesis~\cite[Hypothesis~2]{Fei02}, posits the hardness of strongly refuting random classical $3$-SAT:
\begin{displayquote}
For every $\eps>0$, and for sufficiently large $\Delta$, there is no polynomial-time algorithm that certifies that most random $3$-SAT instances on $n$ variables and $\Delta n$ constraints have value at most $1-\eps$.
\end{displayquote}

This naturally suggests a quantum analogue.
Following \cite{laumann2010product,laumann2010random},
we consider random quantum $k$-SAT instances generated in direct analogy with classical random $k$-SAT:
\begin{enumerate}
    \item The underlying hypergraph is sampled exactly as in the binomial/\Erdos--\Renyi model for classical random $k$-SAT.

    \item In classical $k$-SAT, each clause forbids a uniformly random computational-basis direction, equivalently one of the $2^k$ assignments in $\{0,1\}^k$. In quantum $k$-SAT, this discrete random choice is replaced by a Haar-random one-dimensional subspace of the local Hilbert space $(\C^2)^{\otimes k}$.
\end{enumerate}

Thus, the random hypergraph structure is unchanged, while the discrete randomness in the clause signings is replaced by continuous Haar randomness. In the worst-case setting, hardness of quantum \(k\)-SAT follows immediately from that of classical \(k\)-SAT, since classical clauses correspond to computational-basis directions. The average-case setting, however, is substantially less clear: under the Haar measure, the probability that a randomly chosen forbidden direction coincides with any computational-basis direction is $0$.

Perhaps surprisingly, and in sharp contrast with the intuition suggested by classical CSPs, we show that the natural quantum analogue of Feige's hypothesis fails dramatically:

\begin{restatable}{theorem}{MainTheorem}
  \label{thm:main}
  There is $\delta > 0$ and a classical deterministic polynomial-time algorithm that, for every sufficiently large constant
  $\Delta$, certifies that a random $3$-QSAT instance on $n$ variables with $\Delta n$ constraints in expectation has
  value at most $1-\delta$, with high probability over the instance.
\end{restatable}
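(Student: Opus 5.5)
We certify a bound on the largest eigenvalue of the Hamiltonian $H=\sum_{c}\Pi_c$, where $\Pi_c=\ket{\phi_c}\bra{\phi_c}\otimes I$ projects onto the Haar-random forbidden direction of clause $c$. The value of a state $\rho$ is $1-\frac{1}{m}\mathrm{tr}(H\rho)$, so it suffices to certify $\lambda_{\min}(H)\ge \delta m$, i.e.\ that every state violates a $\delta$ fraction of clauses in energy. We cannot compute $\lambda_{\min}$ on $2^n$ dimensions. Instead we relax to an SDP over low-degree quantum marginals: the variables are pseudo-density matrices on all pairs and triples of qubits (a quantum Lasserre/NPA-type level-$O(1)$ relaxation), required to be locally consistent and PSD on the natural moment matrix indexed by Pauli strings of weight at most $2$ or $3$. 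It is polynomial size, so the algorithm solves it and outputs its value. The real content is to show that, whp over the instance, the SDP optimum of $\frac1m\sum_c \mathrm{tr}(\Pi_c\rho_c)$ is at least $\delta$.

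The argument proceeds in two steps. \textbf{Step 1 (reduction to pairwise-uniform marginals).} Expand each $\ket{\phi_c}\bra{\phi_c}$ in the Pauli basis on its three qubits: $\Pi_c=\frac18\bigl(I+\sum_{P\neq I}a^c_P P\bigr)$, with Haar-random coefficients of mean zero. The weight-$1$ and weight-$2$ Pauli parts form random degree-$\le2$ polynomials (a quantum $2$-XOR-type Hamiltonian). Using the Pauli moment matrix of the SDP and a matrix Chernoff/trace-moment bound on the random sparse matrix $\sum_c a^c_P P$ (graph of average degree $\Delta$), the contribution of these terms is certified to be $O(m/\sqrt{\Delta})$ for every SDP solution. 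This is the analogue of the classical fact that random $3$-SAT minus its $3$-XOR part is easy. After this step, for large $\Delta$, it remains to lower bound the weight-$3$ part, i.e.\ to show $\frac18 m+\sum_c\sum_{|P|=3}a^c_P\,\widetilde{\E}[P]\ge \delta m$, effectively over solutions whose $1$- and $2$-local marginals are maximally mixed (pairwise uniform). Classically this is exactly the pairwise-independence barrier, where degree-$3$ XOR cannot be refuted at constant density.

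\textbf{Step 2 (exploiting noncommutativity via bow-ties).} Two clauses sharing exactly one qubit $v$ form a bow-tie. For pairwise-uniform pseudo-states we use the operator inequality that, for noncommuting Paulis on the shared qubit (e.g.\ $X_v\otimes A$ and $Z_v\otimes B$ with $A,B$ on disjoint qubits), the two terms anticommute. Hence $\widetilde\E[X_vA]^2+\widetilde\E[Z_vB]^2+\widetilde\E[Y_v C]^2\le 1$ for a suitable $C$, which is an uncertainty relation expressible as an SDP constraint at degree $O(1)$. Since $\phi_c$ is Haar random, the weight-$3$ coefficients of each clause spread across all three Pauli types on each qubit. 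Because a typical qubit lies in about $3\Delta$ clauses, local anticommutation forces each clause's weight-$3$ expectation to be small on average, bounded away from the $\frac18$ needed to cancel the identity term. Concretely, we will sum the uncertainty constraints over all bow-ties centered at $v$, then apply Cauchy--Schwarz together with spectral concentration of the random matrix indexed by (clause, clause) pairs through $v$. The target is $\sum_c\sum_{|P|=3}a^c_P\widetilde\E[P]\ge -(\frac18-\delta)m$.

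The main obstacle is Step 2: turning local anticommutation into a global bound at only constant density. The randomness of the Haar coefficients must be used to show that the cross terms across bow-ties concentrate, without losing factors of $\sqrt n$. I would first handle this with a Kikuchi-style or trace-method bound on the bow-tie adjacency matrix weighted by the $a^c_P$. Constant density suffices there because bow-ties live on $2$-local overlaps, which are abundant ($\Theta(\Delta^2 n)$ of them). Finally, we combine Steps 1 and 2 with a union over the $O(1)$ Pauli types and choose $\Delta$ large enough that the Step 1 error falls below $\delta/2$.
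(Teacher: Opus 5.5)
Your proposal has a genuine gap in Step~2, which you yourself flag as the main obstacle: as written it is a plan, not an argument. The uncertainty relation $\pE[X_vA]^2+\pE[Z_vB]^2+\pE[Y_vC]^2\le1$ is quadratic in the pseudo-expectation, so it does not sum against the linear objective. The Cauchy--Schwarz plus ``Kikuchi-style or trace-method'' bound is never specified.

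Step~1 also hands Step~2 the wrong target. First, Step~1 does not make the pseudo-state pairwise uniform. It only shows that one random combination of its $1$- and $2$-local moments is small, so Step~2 must handle arbitrary pseudo-states. Second, by stripping the weight-$\le2$ part from \emph{every} clause you discard $\Pi_c\succeq0$. For Haar-random $\phi_c$, $\E\sum_{|P|=3}(a^c_P)^2=27\cdot\frac19=3$, while $\sum_{P\neq I}(a^c_P)^2=7$. So with probability at least $1/3$ the weight-$3$ part $Q_c=\sum_{|P|=3}a^c_PP$ has $\norm{Q_c}_{\mathrm{op}}\ge(\sum_{|P|=3}(a^c_P)^2)^{1/2}>1$. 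Its spectrum is symmetric about $0$ under the global spin flip, so $\frac18(I+Q_c)$ is then not positive semidefinite (for GHZ, $Q_c$ has eigenvalue $-4$). Hence you cannot drop clauses outside good bow-ties. You would need a certified bound $\sum_c\pE[Q_c]\ge-(1-\Omega(1))m$ that controls every generic entangled clause, and nothing in the proposal supplies it.

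There is also a fixable problem in Step~1 itself. The estimate $\abs{\Tr(AM)}\le\norm{A}_{\mathrm{op}}\Tr M$ fails at constant density, because high-degree vertices make $\norm{A}_{\mathrm{op}}$ grow like $\sqrt{\log n/\log\log n}$. You need an $\infty\to1$-norm/Grothendieck bound as in \cref{lem:deviation-bound}, or degree pruning.

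Your instinct that noncommutation on the shared qubit of a bow-tie is the lever matches the paper. The missing idea is how to make it bite:
\begin{enumerate}
  \item Keep $\Pi_c\succeq0$ for every clause except the constant fraction whose forbidden vector is $\xi$-close to a product $\bigotimes_jP_{c,j}$ (positive Haar probability in fixed dimension).
  \item For a bow-tie of two such clauses, lower-bound the \emph{full} $\Pi_e+\Pi_f$, not its cubic part alone. For product projectors the cubic part is a single reflection $R_g=\prod_{j\in I_g}U_{g,j}$. Since only the shared-qubit factors fail to commute, $(R_e+R_f)^2=2I+\Tr(U_{e,i}U_{f,i})(\cdots)\preceq(2+\abs{\Tr(U_{e,i}U_{f,i})})I$. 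Hence $R_e+R_f\succeq-\sqrt{2+\abs{\Tr(U_{e,i}U_{f,i})}}\,I$, a bound linear in the state.
  \item This gives \cref{lem:exact-product-bowtie} and its $\xi$-robust form \cref{lem:useful-bowtie-domination}. Together with the $\Omega(\Delta^2n)$ bounded-degree useful bow-ties of \cref{lem:many-useful-bowties}, it yields a fractional-matching bound.
\end{enumerate}

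This inequality lives on the five qubits $I_e\cup I_f$. It must therefore be applied to a joint state there, i.e.\ the pseudo-state's marginal, which needs degree at least $10$. Separately chosen clause states give nothing, since by \cref{prop:pairwise-uniform-avoiding-vector} each clause alone has a pairwise-uniform satisfying state.

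The paper removes the leftover degree-$\le2$ witness terms by conditioning to near pairwise $\nu$-independence (\cref{thm:cond-indep-concentrated,thm:almost-twise-indep}), and handles biased $\nu$ separately (\cref{lem:biased-marginal}). Your Step~1 idea might offer a genuinely different route for this step, if you aim it only at those witness terms summed over the matching. Heuristically, local-unitary invariance of Haar measure should make the witness Bloch vectors uniform on the sphere, while usefulness and the matching weights depend only on their axes. Their signs would then be independent fair coins, and a cut-norm/Grothendieck bound should make the sum $O(\Delta^{3/2}n)$, against $\Omega(\Delta^2n)$ from the constant terms. This is a sketch that would need checking, not something the paper does.
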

%

This behavior stands in stark contrast with classical random $3$-SAT. Currently, polynomial-time strong refutation of random $3$-SAT requires
\(
\Delta \gtrsim n^{1/2}
\)
~\cite{AOW15, dorsi_trevisan23}. The relevant classical refutation algorithms are based on spectral methods or, more generally, Sum-of-Squares hierarchy. The emergence of the $n^{1/2}$ threshold has a structural explanation: every $3$-SAT clause supports a pairwise independent distribution over its satisfying assignments. More generally, for classical random $k$-CSPs, the spectral/SDP refutation threshold is governed by the amount of local independence supported by the predicate. If $t$ is the largest integer for which the satisfying assignments support a $t$-wise independent distribution, then polynomial-time spectral or SDP refutation requires clause density
\(
\Delta \geq \widetilde{\Omega}\!\left(n^{(t-1)/2}\right),
\)
and this threshold is tight~\cite{BCK15, KMOW17, ChanDorsiXu26}.

Our theorem therefore reveals a qualitative separation between the classical and quantum settings: although, as in classical $3$-SAT, each random quantum constraint supports a pairwise-uniform solution in the appropriate sense (see \cref{prop:pairwise-uniform-avoiding-vector}), pairwise independence \cite{benabbas12sdp, tulsiani2013lsplus, BCK15, KMOW17} is no longer a barrier, and
Haar-random quantum constraints can nevertheless be strongly refuted at constant clause density. 
%
%

\subsection{Proof Overview} \label{sec:overview}

Due to the superficial similarities with classical $3$-SAT, it is tempting to try to refute random quantum $3$-SAT by
the approach of Allen, O'Donnell, and Witmer \cite{AOW15}.
However, this approach does not work for $3$-QSAT for two reasons.
\cite{AOW15} considered the following induced distribution of a classical $3$-SAT instance $\cI = (V, \cC)$ and
assignment $x: V \to \set{0,1}$:
\( \E_{c \in \cC} (x_S \oplus b) \,, \)
where $S \in V^3$ is the scope of the constraint $c$ and $b \in \set{0,1}^3$ are the literal negations of $c$.
In words, their induced distribution is the average, over the constraints $c$ of $\cI$, of the local assignment $x_S$ to
$c$, shifted by the negations $b$.
\cite{AOW15} proved that the induced distribution is close to $3$-wise uniform whp when
$\Delta \geq \tilde\Omega(n^{1/2})$, by bounding its Fourier coefficients.

By contrast, quantum $3$-SAT does not have negations, so the above induced distribution does not apply.
One may naively think that the forbidden pure state $\psi$ in a $3$-QSAT plays the same role as the negations $b$ in
$3$-SAT, but in fact the literals $b$ contains more information than $\psi$: Given any $3$-SAT constraint with negations
$b$, one can turn it into the unnegated constraint (that forbids the assignment $000$) in a unique way by adding $b$
entrywise, as in the above definition of induced distribution.
On the other hand, given any $3$-QSAT constriant that forbids the pure state $\psi$, there are multiple ways to
rotate this constraint into the standard constraint (that forbids $\ket{000}$).
The non-uniqueness hinders generalizing the above induced distribution definition to $3$-QSAT.
Another issue is that we want a strong refutation at $\Delta = O(1)$, an unrechable regime to the classical 3-SAT.

\paragraph{Reduction to Pairwise Independent States}
For this reason, we instead consider the induced distribution from \cite{ChanDorsiXu26} and its quantum analog.
Given $3$-QSAT instance $\cI = (V, \cE)$ and an $n$-qubit state $\rho$, the induced state
\( \cE(\rho) = \E_{e \in \cE} \rho_e \)
is the average, over the constraints $e$, of the reduced state $\rho_e$ on $e$.
This induced state is well defined for any degree-$3$ pseudo-state $\rho$, not just a genuine state.

In \cref{sec:concentration}, we quantize the concentration arguments of \cite{ChanDorsiXu26}, and prove that in our
setting, the pairwise marginals $\cE(\rho)_{ij}$ are close to \( \E_{(i,j) \in {V \choose 2}} \rho_{ij}\,. \)
Then we use a quantum variant of the global correlation rounding argument \cite{BRS11} to deduce that the conditional
pairwise marginals $\cE(\rho|_b)_{ij}$ are close to the product state $\cE(\rho|_b)_i \otimes \cE(\rho|_b)_j \approx
\nu(\rho|_b)^{\otimes 2}$, upon appropriately chosen conditions $b$.
A similar quantum variant appeared in \cite[Theorem~14]{BH13}, but we cannot use it as is, for its lack of the law of
total probability $\E_b \cE(\rho|_b) = \cE(\rho)$, due to its use of classical conditioning \cite[Step 4 of
Algorithm~24]{BH13}.
Our \cref{thm:cond-indep-complete} instead works directly with quantum conditioning, and recovers an
approximate law of total probability.

The previous paragraph shows that $\cE(\rho|_b)$ is almost pairwise $\nu$-independent.
Our next step shows that $\cE(\rho|_b)$ is close to some state pairwise $\nu$-independent state $\tau$.
Similar statements have been shown for classical distributions in \cite{ChanDorsiXu26,ACLM26}, generalizing previous
results for the special case when $\nu$ is the uniform distribution on the boolean domain
\cite{alon03almost,odonnell18closness} or is somewhat unbiased on a general finite domain \cite{rubinfeld13robust}.
We prove the first quantum analog of this result in \cref{thm:almost-twise-indep}.
Working with quantum states is more challenging than with classical distributions, when non-negativity constraints are
replaced with positive-semidefiniteness constraints, and due to non-commutivity and quantum coherence.
Because of the usefulness of this result in the classical setting, our quantum version is of independent interest.
%

\paragraph{From Pairwise Independence to Pairwise Uniformity}
It remains to understand which one-qubit marginals $\nu$ can support a high-value pairwise $\nu$-independent state. For classical CSPs with uniformly random signings, symmetry makes the answer immediate: it suffices to consider the uniform marginal. In the absence of random signings, however, there is no a priori reason for $\nu$ to be uniform. A key observation in our setting is that the rotational symmetry of the Haar measure plays an analogous role. We show that any non-uniform marginal necessarily leaves a constant amount of room for strong refutation. Consequently, it suffices to consider SDP solutions whose local marginals are close to uniform, reducing pairwise independence to pairwise uniformity. This is formalized in \cref{lem:biased-marginal}, which shows that every sufficiently biased marginal admits a strong refutation.

\paragraph{Refutation for Pairwise-Uniform States}
Finally, by the standard global-correlation argument, we may reduce to the case where any high-value SDP solution is close to pairwise uniform. This is well aligned with the intuition from classical CSPs: pairwise-uniform solutions are precisely the genuine obstruction to strong refutation, and indeed underlie lower bounds that persist all the way up to clause density $\Delta \lesssim \sqrt{n}$. Thus, in retrospect, the main challenge is naturally expected to lie in refuting pairwise-uniform states.

\begin{figure}[H]
  \centering
  \includegraphics[width=0.58\textwidth]{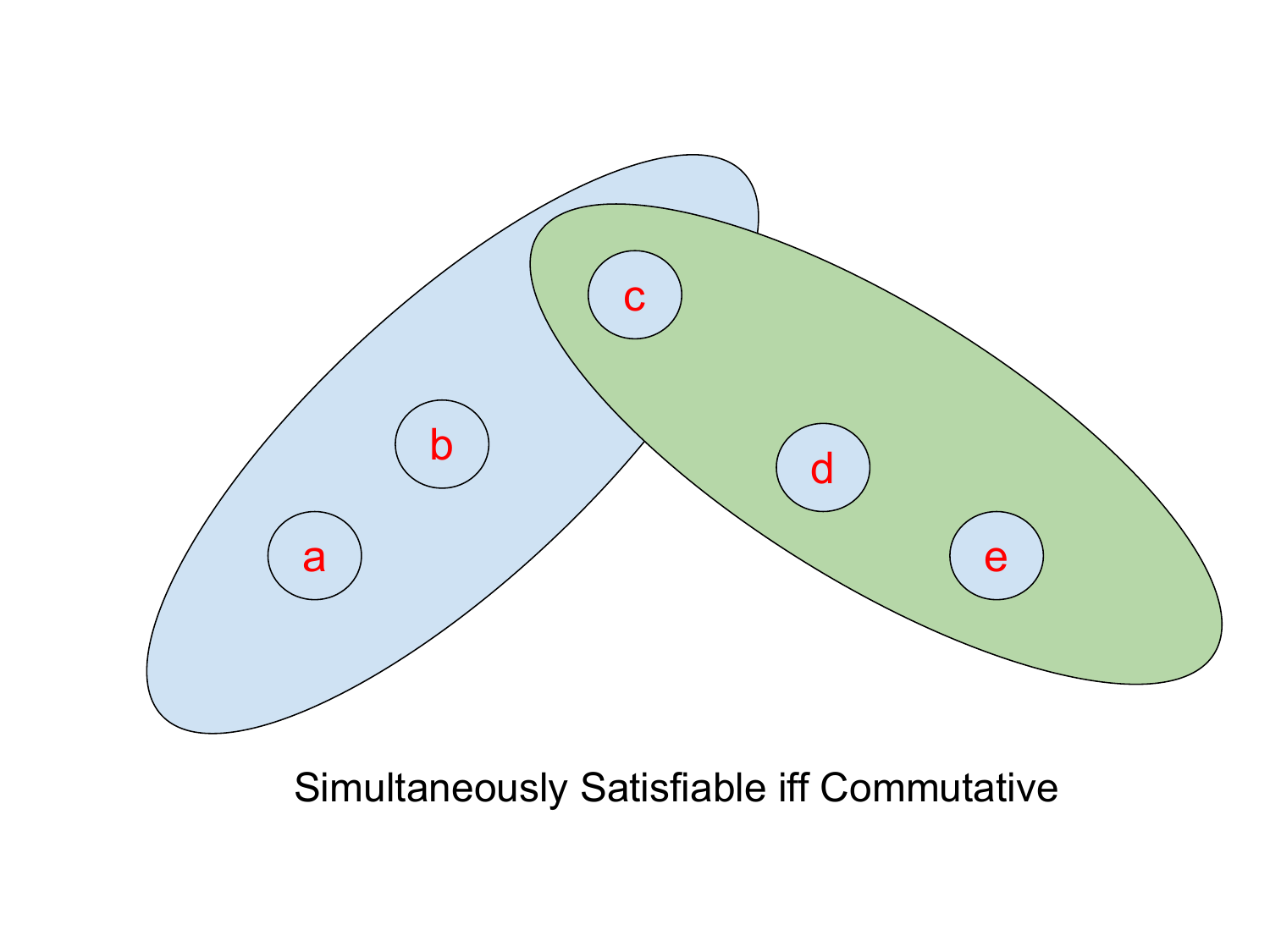}
  \caption{A bow-tie formed by two $3$-local clauses on $\{a,b,c\}$ and $\{c,d,e\}$.}
  \label{fig:bowtie}
\end{figure}

To overcome this obstacle, we exploit a simple local structure formed by two adjacent clauses, which we call a \emph{bow-tie}: a pair of $3$-QSAT constraints whose supports overlap in exactly one qubit; see \cref{fig:bowtie}. We begin with the special case in which the forbidden directions are rank one. Namely, suppose
\(
    \psi_e = u_a\otimes v_b\otimes w_c \in \C^8,
\)
for $u_a,v_b,w_c\in\C^2$, and similarly for the neighboring clause $\psi_f= w'_c\otimes v_d \otimes u_e$ . This may be viewed as a mild generalization of a classical clause: the local factors are no longer restricted to computational-basis vectors. Already in this restricted setting, however, the behavior is qualitatively different from its classical analogue.

In particular, we show that unless $w_c$ and $w_c'$ are co-linear, for two constraints sharing exactly one qubit, with rank-one forbidden directions $\psi_e$ and $\psi_f$, any state $\rho$ that is sufficiently close to pairwise uniform must incur a nontrivial amount of violated mass:
\[
    \bigl\langle \Pi_{\psi_e}+\Pi_{\psi_f},\,\rho\bigr\rangle
    \geq \Omega(1).
\]
where we let $\Pi$ be the projection matrix to the one-dimensional forbidden subspace.
Equivalently, a pairwise-uniform state cannot simultaneously achieve value arbitrarily close to one on both clauses. Thus, unlike in the classical setting, pairwise uniformity itself creates a local obstruction once two suitably positioned quantum constraints overlap.

At first sight, the rank-$1$ assumption of the forbidden direction appears too restrictive for the Haar-random model as this occurs with probability zero. The key point is that the argument is robust: the same local obstruction continues to hold whenever the forbidden directions are sufficiently close to rank-one states. Since the local Hilbert space $\C^8$ has fixed dimension, independent of $n$, a Haar-random forbidden direction has a constant probability of lying within a fixed neighborhood of the rank-one locus. Consequently, with constant probability, a random clause is sufficiently close to rank one for the bow-tie argument to apply.

Finally, ~\cref{lem:many-useful-bowties} shows that the random constraint hypergraph contains $\Omega(m)$ such usable bow-ties with high probability. Summing the resulting local violations yields an $\Omega(m)$ lower bound on the total violated mass, and hence a constant-gap strong refutation, as formalized in \cref{claim:fractional-matching-refutation}.

\paragraph{Organization}
\Cref{sec:preliminaries} fixes notation and introduces pseudo-states and the normalized objective.
\Cref{sec:proof-main-theorem} proves \cref{thm:main} by reducing to pairwise-independent clause states and separating
biased from near-uniform marginals.  \Cref{sec:almost-twise-indep} converts approximate independence into exact
independence.  \Cref{sec:bowtie-bounds} gives a refutation for close to pairwise uniform states.
The supporting conditioning and concentration results appear in \cref{sec:conditioning,sec:concentration}.


\section{Preliminaries} \label{sec:preliminaries}

\subsection{Pauli Operator}

\begin{definition}[Pauli operator]
  Recall standard Pauli matrices
  \[ X =
    \begin{bmatrix} 0 & 1 \\ 1 & 0
    \end{bmatrix} \qquad
    Y =
    \begin{bmatrix} 0 & -i \\ i & 0
    \end{bmatrix} \qquad
    Z =
    \begin{bmatrix} 1 & 0 \\ 0 & -1
  \end{bmatrix} \,. \]
  An $n$-qubit Pauli operator is a $2^n$-by-$2^n$ matrix
  $\Paren{\bigotimes_{i \in [n]} \sigma_i} \in \set{I, X, Y, Z}^{\otimes n}$.
  The weight of such a Pauli operator is the number of non-identity components $\sigma_i$.
\end{definition}

Even though this work focuses on qubits, many intermediate results can be generalized to $q$-ary qudits.
Let $\cL$ be the set of linear operators on $\C^q$.
For a system on $n$ qudits, let $\cL^n_d \defas \Span_\C \set{ \cL^W \otimes I_{[n] \setminus W} |
W \in {[n] \choose \leq d} }$ be the span of operators of degree at most $d$.

\begin{definition}[Heisenberg–-Weyl operator]
  For a $q$-ary qudit ($q \geq 2$), recall the Heisenberg--Weyl operators \cite[Definition~3.7]{khatri2024principles}
  \begin{align*}
    W_{z,x} &\defas Z(z)X(x) \qquad \text{for } 0 \leq z, x \leq q^2-1 \\
    Z(z) &\defas \sum_{0 \leq k < q} e^{2\pi ikz/q} \ket q\bra q \\
    X(x) &\defas \sum_{0 \leq k < q} \ket{k+x}\bra k \,.
  \end{align*}
  Denote the set of these $q^2$ operators by $\cW$.
  Denote by $\cW^n \defas \cW^{\otimes n}$ the set of Heisenberg–-Weyl operators on $n$ qudits, and $\cW^n_d$ those of
  weight at most $d$.
\end{definition}

The matrices $W_{z,x}$ form an orthogonal basis of $\cL$, each being unitary of Frobenius norm $\sqrt q$.
When $q = 2$, $\cW$ reduces to the Pauli matrices.

\subsection{Pseudo-Expectation}

Given a genuine state $\rho$, its expectation $\E_\rho[H] = \Tr[\rho H]$ is a linear functional that is
normalized and positive.
We call any similar functional on low-degree operators a pesudo-expectation:

\begin{definition}[Pseudo-Expectation]
  A degree-$2d$ $q$-ary $n$-qudit pseudo-expectation is a $\C$-linear functional $\pE_\rho: \cL^n_{2d} \to \C$ such
  that:
  \begin{enumerate}
    \item (Normalized) \quad $\pE_\rho [I] = 1$.
    \item (Positive) \quad $\pE_\rho [H^\dagger H] \geq 0$ for every $H \in \cL^n_d$.
  \end{enumerate}
  Given such an operator $\pE_\rho$, also say that $\rho$ is a pseudo-state.
\end{definition}

\begin{remark} \label{rem:pseudo-moment-matrix}
  A pseudo-expectation operator can be represented as a $\cW^n_d$-by-$\cW^n_d$ Hermitian matrix $M$ such that
  $M(\sigma,\tau) = \pE_\rho[\sigma^\dagger \tau]$ for $\sigma,\tau \in \cW^n_d$.
  Positivity of $\pE_\rho$ implies $M$ is positive semidefinite.
\end{remark}

\begin{definition}[Conditioning] \label{def:conditioning}
  Given any degree-$2(d+s)$ pseudo-expectation $\pE_\rho$, a POVM $\set{\cE_x}$ for a subset $T \subseteq [n]$ of $s$
  qudits (with Kraus operators $\sqrt{\cE_x}$), the pseudo-expectation $\pE_{\rho|_x}$ conditioned on the outcome $x$ is
  \begin{equation} \label{eq:conditioning}
    \pE_{\rho|_x} [H] \defas \frac{\pE_\rho [\sqrt{\cE_x} H \sqrt{\cE_x}]}{\pE_\rho [\cE_x]}
    \qquad \text{for } H \in \cL^n_{2d} \,.
  \end{equation}
\end{definition}

It is easy to show that $\pE_{\rho|_x}$ is indeed a pseudo-expectation operator of degree $2d$
(\cref{lem:conditioning}), so $\rho|_x$ is a pseudo-state.

\begin{definition}[Reduced Pseudo-State]
  Given any degree-$2d$ $n$-qudit pseudo-state $\rho$ and any subset $A \subseteq [n]$, the degree-$2d$ reduced
  pseudo-state $\rho_A$ on $A$ has pseudo-expectation
  \begin{equation} \label{eq:reduced-pseudo-state}
    \pE_{\rho_A} [H] \defas \pE_\rho [H \otimes I_{[n] \setminus A}] \qquad \text{for } H \in
    \cL^A_{2d} \,.
  \end{equation}
\end{definition}

\subsection{$t$-Wise Independent}

Next, we generalize the classical definition of $t$-wise independence to quantum states.
\begin{definition}
  A quantum state on $k$ qudits is $t$-wise $\nu$-independent if its reduced state on every subset $S$ of at most $t$
  qubits is $\nu^{\otimes |S|}$.
\end{definition}
$2$-wise $\nu$-independent is also known as pairwise $\nu$-independent.

\subsection{Random Model}

Throughout the work, we consider the binomial random graph model as in \cite{ChanDorsiXu26} and view each $k$-SAT
clause as a directed hyperedge. A directed $k$-uniform hypergraph $\cH = (V, \cE)$
has hyperedges $e \in \cE$ that are (ordered) $k$-tuples of distinct vertices.
Precisely, each potential undirected $k$-uniform hyperedge $\hat e \in {V \choose k}$ appears independently with
probability $p$, as in the \Erdos--\Renyi model of undirected hypergraphs.
Then each realized undirected hyperedge $\hat e$ will have its $k$ vertices ordered uniformly at random, from among the
$k!$ ordering.
This yields a directed hyperedge $e \in V^{\underline k}$, where $V^{\underline k}$ denotes the collection of $k$-tuples
of distinct vertices.
(The ordering of the vertices in $e$ will be useful for the analysis of our algorithm.)
Therefore $\cH$ has $p {|V| \choose k}$ hyperedges in expectation.
Each directed hyperedge $e \in V^{\underline k}$ comes with a projector constraint $\Pi_e = \ket{\psi_e} \bra{\psi_e}$,
where $\psi_e$ is a uniformly random pure state on the $k$ qubits of $e$.

For a quantum $k$-SAT instance $\cI=([n],\cE)$ with $m=|\cE|$ and forbidden projectors
$\{\Pi_e\}_{e\in\cE}$, define the normalized value of a collection of clause states $\{\phi_e\}_{e\in\cE}$ by
\[
  \val(\cI,\{\phi_e\})
  \defas \frac1m\sum_{e\in\cE}\Tr\bigl((I-\Pi_e)\phi_e\bigr)
  =1-\frac1m\sum_{e\in\cE}\Tr(\Pi_e\phi_e)
  \in[0,1].
\]
Thus $\val$ is a satisfaction objective: larger values are better, and strong refutation means certifying
$\val\leq1-\delta$ for some constant $\delta>0$.
Similarly, given a degree-$2d$ pseudo-state $\phi$ with $2d \geq k$, define $\val(\cI, \phi) = \val(\cI, \{\phi_e\})$,
where $\{\phi_e\}$ is the collection of reduced states $\phi_e$ on clauses $e$.

\section{Proof of Main Theorem} \label{sec:proof-main-theorem}

We prove the main theorem in this section, using some intermediate results that will be proved in later sections.

Given an $n$-qubit pseudo-state $\rho$ of degree (at least) $1$, its marginal $\nu(\rho)$ is the average
of the reduced state $\rho_u$ over all qubits $u \in [n]$.
Given an $n$-qubit degree-$k$ pseudo-state $\rho$ and a directed $k$-uniform hypergraph $\cH = ([n], \cE)$, the induced
state $\cE(\rho)$ is the average of the reduced state $\rho_e$ over all directed hyperedges $e \in \cE$.

The certification below only requires pair-marginal concentration at degree
$2$.  Restricting to two positions is important: a degree-$2$ certificate
controls the two-qubit marginals of the induced state, not its full $k$-qubit
density matrix.
The next definition is (the $t = 2$ special case of) \cite[Definition~4.2]{ChanDorsiXu26}, quantized.
Below $V^{\underline k}$ denotes the collection of $k$-tuples of distinct elements from $V$.

Here a degree-$2$ pseudo-state is understood in the quantum-SoS sense: its
normalized Pauli moment functional satisfies
$\widetilde{\E}_\rho[L^\dagger L]\ge0$ for every degree-$1$ global Pauli
polynomial $L$, in addition to the usual local Pauli relations and cross-site
commutation.  Equivalently,
the global degree-$1$ Pauli moment matrix is positive semidefinite.  Every
actual global state has this property; positivity of an unrelated collection
of two-qubit density matrices alone would not suffice for the spectral
certificate below.


The previous definition of pair-marginal concentration is all we need for our applications to $3$-QSAT.

\begin{definition}[Concentrated] \label{def:concentrated}
  For any $\eta > 0$, natural numbers $t$ and $q$, a $k$-uniform hypergraph $\cH = (V, \cE)$ or a $k$-uniform hyperedge
  set $\cE$ is $(\eta,q,d,t)$-concentrated if
  \[ \Norm{\E_{e \in \cE} (\rho_e)_W - \E_{e \in V^{\underline k}} (\rho_e)_W}_F \leq \eta \]
  for any degree-$d$ $q$-ary $n$-qudit pseudo-state $\rho$ and any $W \in {[k] \choose t}$.
\end{definition}

The next result is \cite[Theorem~H.15]{ChanDorsiXu26} quantized and will be proved in \cref{sec:concentration}.

\begin{restatable}{theorem}{Concentration} \label{thm:concentration}
  Fix even $t \leq k$.
  For any $\eta > 0$, any $q > 0$, there is $C = C(\eta,q,t,k)$ such that for $m \geq Cn^{t/2}$, except with probability
  $o_n(1)$, a random directed $k$-uniform hypergraph $\cH = ([n], \cE)$ with $m$ expected hyperedges is
  $(\eta,q,t,t)$-concentrated.
\end{restatable}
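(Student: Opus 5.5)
The plan is to reduce the Frobenius-norm statement to finitely many scalar deviations, one per Heisenberg--Weyl operator, and to bound each deviation by a spectral certificate on a flattened random tensor. Fix $W \in {[k] \choose t}$. For a tuple $u\in V^{\underline t}$, expand the reduced state in the orthogonal basis $\cW^t$:
\[ (\rho_u) = q^{-t}\sum_{\sigma\in\cW^t} \pE_\rho[\sigma^\dagger_u]\,\sigma , \]
where $\sigma^\dagger_u$ is $\sigma^\dagger$ placed on the qudits of $u$. Then
\[ \Norm{\E_{e\in\cE}(\rho_e)_W-\E_{e\in V^{\underline k}}(\rho_e)_W}_F^2 = q^{-t}\sum_{\sigma\in\cW^t}\Abs{\inner{A,T_\sigma}}^2 . \]
Here $T_\sigma(u)\defas\pE_\rho[\sigma_u]$ is a $t$-tensor indexed by $V^{\underline t}$. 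The tensor $A(u)\defas \frac1m\#\set{e\in\cE: e_W=u} - \frac{1}{n^{\underline t}}$ is the centered empirical distribution of the $W$-projection of a hyperedge, with the normalization adjusted for the fluctuation of $|\cE|$, which concentrates. Since $|\cW^t|=q^{2t}$ is a constant, it suffices to show $\Abs{\inner{A,T_\sigma}}\le \eta'$ for every fixed $\sigma$, simultaneously for all degree-$t$ pseudo-states $\rho$.

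Next I flatten. Write $\sigma=\sigma_1\otimes\sigma_2$ with $\sigma_1,\sigma_2\in\cW^{t/2}$, split along the first and last $t/2$ coordinates; this is where $t$ even is used. Since $\sigma_1^\dagger$ is a Heisenberg--Weyl operator up to a phase, we get $T_\sigma(U,U')=\pE_\rho[(\sigma_1^\dagger)^\dagger_U(\sigma_2)_{U'}]$ for disjoint ordered $t/2$-tuples $U,U'$. So $T_\sigma$ is, up to phases, an off-diagonal block of the degree-$t$ moment matrix $M$ of \cref{rem:pseudo-moment-matrix}, restricted to the rows $\set{(\tau_U): U\in V^{\underline{t/2}},\ \tau\in\cW^{t/2}}$. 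This principal submatrix is PSD, and its diagonal entries are $\pE_\rho[\tau^\dagger\tau]=1$. Its trace is therefore $O_q(n^{t/2})$. For a PSD matrix, the nuclear norm of any block, even with entry-wise masking to disjoint $U,U'$ handled through a $2\times 2$ block argument, is at most the trace. Hence
\[ \Abs{\inner{A,T_\sigma}} \le \Norm{A_\flat}_{op}\cdot O_q(n^{t/2}), \]
where $A_\flat$ is the $n^{t/2}\times n^{t/2}$ flattening of $A$.

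It remains to show $\Norm{A_\flat}_{op}\le \eta'' n^{-t/2}$ whp. The matrix $mA_\flat$ is a centered sparse random matrix. Its entries are essentially independent indicators of mean $p'\approx m/n^t$, and the dependence coming from the $\binom kt$ embeddings of a single hyperedge and from the random orderings is of bounded multiplicity. The expected row sums are $d\approx m/n^{t/2}\ge C$. The heuristic bound $\Norm{mA_\flat}_{op}\lesssim\sqrt d$ gives $\Norm{A_\flat}_{op}\lesssim \frac{1}{\sqrt{m\,n^{t/2}}}$, which is at most $\eta'' n^{-t/2}$ once $C\gtrsim \eta''^{-2}$.

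\textbf{Main obstacle.} In the constant-average-degree regime, the maximum row degree is $\Theta(\log n/\log\log n)$, so the naive spectral bound fails by a logarithmic factor. I would follow the Feige--Ofek / Le--Levina--Vershynin route.
\begin{itemize}
\item Delete all rows and columns of degree larger than $Kd$. On the remaining matrix, a Feige--Ofek style light/heavy pair argument (or Bandeira--van Handel) gives $O(\sqrt d)$.
\item The deleted part touches only an $e^{-\Omega(K)}$ fraction of hyperedges whp. Its contribution to $\inner{A,T_\sigma}$ is bounded trivially, because $|T_\sigma(u)|\le 1$ by Cauchy--Schwarz on $M$. That contribution is at most $e^{-\Omega(K)}+o(1)$, using the same fraction in the centering term.
\end{itemize}
Choosing $K$ and then $C$ large in terms of $\eta,q,t,k$, and taking a union bound over the constantly many $W$ and $\sigma$, finishes the proof. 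The certificate is uniform over $\rho$, since only the random matrix norm is probabilistic.
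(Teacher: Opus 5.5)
Your proposal is correct in outline, but at the probabilistic step it takes a genuinely different route from the paper. Both proofs reduce to one scalar $\Tr[\chi\gamma]$ per Heisenberg--Weyl $\gamma$. Both then write it as a pairing against an off-diagonal block of the degree-$t$ moment matrix, which is a Gram matrix of unit vectors.

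The paper then applies the complex Grothendieck inequality, bounding $\sup_M|\Tr[L^\dagger M]|$ by $K_G^{\C}$ times the same bilinear form over unit-modulus complex scalars. It discretizes the unit circle with a net. For each fixed pair of scalar vectors it applies Bernstein over the independent potential hyperedges $\hat e$, whose total variance is at most $m$, giving failure probability $\exp(-\Omega(\eta^2 m))$. A union bound over $\exp(O_{\eta,q}(n^{t/2}))$ pairs closes once $m\ge Cn^{t/2}$. This is in effect a cut-norm ($\infty\to1$) bound. That norm is insensitive to a few heavy rows, so no regularization is needed.

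You instead pair the operator norm against the nuclear norm (trace). This is weaker because $\|A\|_{\infty\to1}\le N\|A\|_{op}$, and at constant average degree $\|A\|_{op}$ is inflated by high-degree rows. That inflation is exactly what forces your Feige--Ofek/Le--Levina--Vershynin truncation and the separate accounting for deleted mass.

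Your route does go through. The flattened count matrix is a sum, over independent $\hat e$, of single-entry indicators: each realized $\hat e$ lands in exactly one cell, determined by its random ordering. So the Chernoff-based discrepancy and light-pair estimates adapt, and the deleted rows and columns carry only an $e^{-\Omega(K)}$ fraction of both the hyperedge mass and the centering mass. It buys an eigenvalue-based certificate, but the paper's algorithm never needs to certify concentration, so here it only adds machinery.

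Two small corrections:
\begin{itemize}
\item The parenthetical Bandeira--van Handel alternative would not remove the logarithm. Its $\sigma_*\sqrt{\log n}$ term, with $\sigma_*=1$, reintroduces exactly the loss you are trying to avoid.
\item The nuclear-norm step is cleanest as $\|X^\dagger Y\|_*\le\|X\|_F\|Y\|_F$, with Gram vectors indexed by tuples $U$. The map $U\mapsto(\sigma_1^\dagger)_U$ need not be injective when $\sigma_1$ has identity factors, so $T_\sigma$ is generally not literally a block of $M$.
\end{itemize}
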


By \cite[Lemma~16]{BH13}, for any positive integer $q$, there is a quantum-classical channel
\begin{equation} \label{eq:quantum-classical-channel}
  \Lambda(\sigma) = \sum_{x \in [q']} \Tr[\sigma \cE_x] \ket x\bra x
\end{equation}
on a $q$-ary qudit $\sigma$ with $q' \leq q^8$ outcomes, given by a rank-$1$ POVM $\{\cE_x\}$ and an orthonormal
basis $\{\ket x\}$.
This channel satisfies for any positive integer $s$ and traceless operator $\xi$,
\begin{equation} \label{eq:distortion}
  \norm{\xi}_1 \leq (18q)^{s/2}\norm{\Lambda^{\otimes s}(\xi)}_1 \,.
\end{equation}

Given any $q$-ary $n$-qudit degree-$d$ pseudo-state $\rho$, any subset $T \subseteq [n]$ of size at most $d$, since
$\rho_T$ is a genuine state (\cref{lem:reduced-pseudo-expectation,lem:genuine-state}), there is a probability
distribution $\mu_T \defas \Lambda^{\otimes T}(\rho_T)$ of assignments $b \in [q']^T$.
Let $\cT$ be the distribution of picking a uniformly random $s \in \set{0,\dots,\kappa}$ and a uniformly random subset
$T \subseteq [n]$ of size $s$.
Given $b \in \supp(\mu_T)$, we consider the conditioned pseudo-state $\rho|_b$.

Our conditioning set $T = \set{i_1, \dots, i_s}$ has size exactly $s$, or equivalently its elements $i_1, \dots, i_s$
are chosen without replacement.
This differs from \cite[Algorithm~5.5]{BRS11} and \cite[Algorithm~24]{BH13} whose elements are chosen with replacement.
We do so to avoid complications of remeasuring the same qudits twice and getting different outcomes.

Given an $n$-qudit pseudo-state $\rho$, let $\nu(\rho) \defas \E_{i \in [n]} \rho_i$ be its average $1$-wise marginal.
The next theorem is (the $t = 2$ case of) \cite[Theorem~4.3]{ChanDorsiXu26} quantized and will be proved in
\cref{sec:conditioning} as \cref{thm:cond-indep-concentrated}.

\begin{restatable}{theorem}{ConditionalIndependence} \label{thm:cond-indep-concentrated}
  For any $\eta > 0$, there are $\kappa = \poly(q/\eta)$ and a distribution $\cT$ over ${[n] \choose \leq
  \kappa}$, such that if a directed $k$-uniform hypergraph $\cH = ([n], \cE)$ is $(\eta/(2q),q,2,2)$-concentrated and
  $n \geq \poly(q/\eta)$, then every degree-$2(\kappa+k)$ $q$-ary $n$-qudit pseudo-state $\rho$ and distinct
  $i,j \in [k]$ satisfy
  \[ \E_{T \sim \cT} \E_{b \sim \mu_T} \norm{\cE(\rho|_b)_{ij} - \nu(\rho|_b)^{\otimes 2}}_1
  \leq \eta \,. \]
  That is, in expectation over subset $T \sim \cT$ and outcome $b \sim \mu_T = \Lambda^{\otimes T}(\rho_T)$, the
  $ij$-marginal of the conditional induced state $\cE(\rho|_b)$ is close to $\nu(\rho|_b)^{\otimes 2}$.
  Further, for every subset $S$ of size $k \geq 2$,
  \[ \norm{\E_{T,\, b} (\rho|_b)_S - \rho_S}_1 \leq \kappa k/n \,. \]
\end{restatable}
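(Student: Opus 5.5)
The plan is a quantum version of the global correlation rounding argument of \cite{BRS11}, using as potential the average entropy of the classical outcomes produced by the measurement channel $\Lambda$ of \eqref{eq:quantum-classical-channel}, with conditioning done quantumly as in \cref{def:conditioning}.

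\textbf{Step 1 (pigeonhole over $s$).} For $s\in\{0,\dots,\kappa\}$, let $\Phi_s \defas \E_{|T|=s}\E_{b\sim\mu_T}\E_{i\in[n]} H(\Lambda(\rho|_b)_i)$. Here $H(\Lambda(\rho|_b)_i)\le \log q' \le 8\log q$.

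We need a chain rule: measuring one further uniformly random qudit $j\notin T$ reduces $\Phi$ by the average mutual information between the outcomes at $i$ and $j$. Concretely,
\[ \Phi_s-\Phi_{s+1}\gtrsim \E_{T,b}\E_{i\ne j} I\bigl(\Lambda(\rho|_b)_i;\Lambda(\rho|_b)_j\bigr) - O(s/n). \]
The $O(s/n)$ term accounts for $i$ landing in $T$ and for sampling without replacement. Since $\Phi_0-\Phi_{\kappa}\le 8\log q$, a uniformly random $s\le\kappa$ has expected mutual information at most $8\log q/\kappa + O(\kappa/n)$.

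This chain rule is the point where quantum conditioning must agree with the classical picture. Because $\{\cE_x\}$ is rank-one, applying $\Lambda$ to the post-measurement state on the remaining qudits gives exactly the classical conditional distribution of $\Lambda^{\otimes 2}(\rho_{ij})$ given the outcome at $j$. This requires degree $2(\kappa+k)$, so that every relevant reduced state is genuine (\cref{lem:genuine-state}).

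\textbf{Step 2 (from classical to quantum distance).} By Pinsker, for the measured pair,
\[ \norm{\Lambda^{\otimes2}(\rho_{ij}-\rho_i\otimes\rho_j)}_1 \le O(\sqrt{I}). \]
Applying \eqref{eq:distortion} with $s=2$ to the traceless operator $\xi=\rho_{ij}-\rho_i\otimes\rho_j$ gives $\norm{\xi}_1\le 18q\,O(\sqrt I)$. Jensen over $i,j,T,b$ then bounds the average over pairs $\{i,j\}\in\binom{[n]}2$ of $\norm{(\rho|_b)_{ij}-(\rho|_b)_i\otimes(\rho|_b)_j}_1$ by $O(q\sqrt{\log q/\kappa})+o(1)$.

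\textbf{Step 3 (transfer to hyperedges).} First compare two averages over all pairs. Define
\[ \Delta \defas \E_{ij}(\rho|_b)_{ij}-\nu(\rho|_b)^{\otimes2} = \E_{ij}\bigl[(\rho|_b)_{ij}-(\rho|_b)_i\otimes(\rho|_b)_j\bigr] + \bigl(\E_{i\ne j}(\rho|_b)_i\otimes(\rho|_b)_j-\nu^{\otimes2}\bigr). \]
The first term is controlled by Step 2. The second term is not small in general, since it equals minus the covariance of the single-site marginals. This is the main obstacle.

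The plan is to avoid comparing to $\nu^{\otimes2}$ through products. Instead, note that $\E_{i\ne j}(\rho|_b)_{ij}$ itself is close to the statement's target only if the marginals are spread. So I would add a second potential term: the variance of $\Lambda(\rho|_b)_i$ across $i$, which is the mutual information between the random index and the outcome. Conditioning on a random qudit cannot increase its expected value beyond $O(1/n)$, and a pinning argument drives it small. If that fails, I would absorb it via concentration of the directed hyperedge averages with positions $i\neq j$. In that approach the covariance term appears identically in $\cE$ and in $\E_{V^{\underline k}}$, leaving only the independence error.

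Once this is handled, \cref{def:concentrated} applied to the pseudo-state $\rho|_b$ (degree at least $2$ by \cref{lem:conditioning}) replaces $\E_{e\in\cE}$ with $\E_{e\in V^{\underline k}}$ at Frobenius cost $\eta/(2q)$, hence trace cost $\le \eta/2$ on $q^2$-dimensional operators. Choosing $\kappa=\poly(q/\eta)$ and $n\ge\poly(q/\eta)$ makes the total at most $\eta$.

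\textbf{Step 4 (second claim).} For fixed $T$, $\sum_b \sqrt{\cE_b}\,\cdot\,\sqrt{\cE_b}$ is a channel acting only on $T$. So $\E_b(\rho|_b)_S=\rho_S$ whenever $S\cap T=\emptyset$, and in every case both sides are states, so the trace distance is at most $2$. Then
\[ \Pr[S\cap T\neq\emptyset]\le k\kappa/n \]
gives the bound, adjusting the constant through the precise definition of $\cT$, or via the trace-distance-over-two convention.
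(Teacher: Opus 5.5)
\textbf{What matches the paper.} Steps 1, 2 and 4, and the concentration half of Step 3, follow the paper's route. Concentration replaces $\cE(\rho|_b)_{ij}$ by $\E_{u\neq v}(\rho|_b)_{uv}$ at trace cost $q\cdot\eta/(2q)$, and \eqref{eq:distortion} with $s=2$ passes to the measured pairs. The Raghavendra--Tan entropy telescoping then bounds the average pairwise dependence. The paper runs that telescoping on the classically conditioned family $\{\mu_T\}$, where $H(x_i|x_T)=0$ for $i\in T$, so no $O(s/n)$ correction arises. It uses \cref{claim:quantum-classical-conditioning} to identify this with quantum conditioning on sites disjoint from $T$. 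That identification rests on $\sqrt{\cE_b}$ commuting with $\cE_c$ for disjoint supports, not on the POVM being rank-one. In Step 4 the constant is exact: $s$ is uniform on $\{0,\dots,\kappa\}$, so $\Pr[S\cap T\neq\emptyset]\le k\E[s]/n=\kappa k/(2n)$, and the trace-norm bound $2$ gives exactly $\kappa k/n$.

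\textbf{The gap is in Step 3.} You declare a ``main obstacle'' that does not exist, and the repair you propose would not work. With $\nu=\nu(\rho|_b)=\E_u(\rho|_b)_u$, writing $\sum_{u\neq v}=\sum_{u,v}-\sum_{u=v}$ gives exactly
\[ \E_{u\neq v}(\rho|_b)_u\otimes(\rho|_b)_v-\nu^{\otimes 2}=\frac{1}{n-1}\Bigl(\nu^{\otimes 2}-\E_u(\rho|_b)_u\otimes(\rho|_b)_u\Bigr), \]
whose trace norm is at most $2/(n-1)$. The covariance of the site marginals enters only with weight $1/(n-1)$, because a uniformly random ordered pair $u\neq v$ has nearly independent coordinates. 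This term is therefore absorbed by $n\ge\poly(q/\eta)$; the paper leaves this one line implicit.

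Your proposed variance potential is wrong as stated. Take the product state with $\ket{0}$ on half the qudits and $\ket{1}$ on the rest. Conditioning on any $T$ leaves $(\rho|_b)_i$ unchanged for $i\notin T$, so no pinning argument makes the heterogeneity of the site marginals small. Yet the statement holds for this state, precisely by the identity above. Your fallback of comparing $\cE$ with $\E_{e\in V^{\underline k}}$ via concentration just reproduces the same term rather than removing it.

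Replace that paragraph by the displayed computation and your argument closes, coinciding with the paper's.
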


The next result is the quantum version of \cite[Theorem~4.11]{ChanDorsiXu26} and \cite[Theorem~B.15]{ACLM26}, which
generalize \cite[Theorem~2.1]{alon03almost}.
It will be proved in \cref{sec:almost-twise-indep} as \cref{thm:almost-twise-indep}.
Recall the trace distance between two states $\rho$ and $\tau$ is
\[ D(\rho, \tau) = \frac 12 \norm{\rho - \tau}_1 \,. \]

\begin{restatable}{theorem}{AlmostTwise} \label{thm:almost-twise-indep}
  Consider $q$-ary qudit systems.
  Let $\nu$ be a single-qudit state $\nu$ and $\rho$ a $k$-qudit state.
  Consider
  \begin{equation} \label{eq:almost-twise-indep}
    \eps \defas \sum_{W \in {[k] \choose \leq t}} (q^2-1)^{|W|} \norm{\rho_W - \nu^{\otimes W}}_1 \,.
  \end{equation}
  Then there is a $t$-wise $\nu$-independent state $\tau$ such that
  \[ D(\rho, \tau) \lesssim_{k,q} \eps^{1/(2k+2)} \,. \]
\end{restatable}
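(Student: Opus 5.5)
The plan is the quantum analogue of the Alon--Goldreich--Mansour correction argument: first kill the low-order deviations from $\nu$ exactly, at the price of losing positivity, and then restore positivity by mixing with $\nu^{\otimes k}$. The mixing step only works when $\nu$ is well-conditioned, so I would split into two cases according to the smallest eigenvalue of $\nu$, and balance a threshold $\delta$ between them.

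\emph{Step 1 (exact correction, Hermitian but not PSD).} On a single qudit I would define the maps $E(H)\defas\Tr(H)\,\nu$ and $D\defas\mathrm{id}-E$. Then $E+D=\mathrm{id}$, $E^2=E$, $DE=0$ and $\Tr\circ D=0$. Expanding $\mathrm{id}^{\otimes k}=\bigotimes_i(E+D)$ gives
\[ \rho=\sum_{U\subseteq[k]}D^{\otimes U}(\rho_U)\otimes\nu^{\otimes([k]\setminus U)}. \]
Since $\Tr\circ D=0$, the reduced state on $W$ is $\rho_W=\sum_{U\subseteq W}D^{\otimes U}(\rho_U)\otimes\nu^{\otimes(W\setminus U)}$. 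I would then set
\[ \sigma\defas\rho-\sum_{1\le|U|\le t}D^{\otimes U}(\rho_U)\otimes\nu^{\otimes([k]\setminus U)}. \]
By the previous identity, $\sigma_W=\nu^{\otimes W}$ for all $|W|\le t$. Moreover $\sigma$ is Hermitian with trace $1$.

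For the size of the correction, note that $D^{\otimes U}(\nu^{\otimes U})=0$, so $D^{\otimes U}(\rho_U)=D^{\otimes U}(\rho_U-\nu^{\otimes U})$. Each $D$ has $1\to1$ norm at most $2$, hence $\norm{D^{\otimes U}(\rho_U)}_1\le 2^{|U|}\norm{\rho_U-\nu^{\otimes U}}_1$. Summing over $U$ gives $\norm{\sigma-\rho}_1\lesssim_{k}\eps$; the weights $(q^2-1)^{|W|}$ in \eqref{eq:almost-twise-indep} only help. In particular the smallest eigenvalue of $\sigma$ is at least $-\norm{\sigma-\rho}_{\mathrm{op}}\ge -O_k(\eps)$.

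\emph{Step 2 (well-conditioned case, $\lambda_{\min}(\nu)\ge\delta$).} Here $\nu^{\otimes k}\succeq\delta^kI$, and $\nu^{\otimes k}$ is itself $t$-wise $\nu$-independent. I would take
\[ \tau\defas(1-\lambda)\sigma+\lambda\nu^{\otimes k},\qquad \lambda\defas\Theta_k(\eps/\delta^k). \]
This $\tau$ is PSD, has trace $1$, and all its $\le t$ marginals are exactly $\nu^{\otimes W}$, because the constraints are affine and both $\sigma$ and $\nu^{\otimes k}$ satisfy them. Its distance is $D(\rho,\tau)\lesssim\eps+\lambda\lesssim_k\eps/\delta^k$.

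\emph{Step 3 (nearly pure case, $\lambda_{\min}(\nu)<\delta$).} This is the step I expect to be the main obstacle, because mixing fails when $\nu$ is nearly singular. The idea is that then $\rho$ itself is nearly a product of pure states. I would use $\tau=\nu^{\otimes k}$, which is trivially $t$-wise $\nu$-independent.

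For this I need to show that $D(\rho,\nu^{\otimes k})$ is small. Each $\rho_i$ is $\eps$-close to $\nu$, and this is where I use $t\ge1$. So $\rho_i$ has overlap at least $1-\delta-\eps$ with the top eigenvector $\ket{\phi}$ of $\nu$. Letting $P=\ket{\phi}\bra{\phi}^{\otimes k}$, a union bound gives $\Tr(P\rho)\ge1-k(\delta+\eps)$. The Fuchs--van de Graaf inequality, or gentle measurement, then gives $D(\rho,P)\lesssim\sqrt{k(\delta+\eps)}$. Similarly $D(\nu^{\otimes k},P)\lesssim\sqrt{k\delta}$, and the triangle inequality gives $D(\rho,\nu^{\otimes k})\lesssim_k\sqrt{\delta+\eps}$.

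\emph{Balancing.} Choosing $\delta$ so that $\eps/\delta^k\approx\sqrt\delta$, i.e.\ $\delta=\eps^{2/(2k+1)}$, makes both cases $O_{k,q}(\eps^{1/(2k+1)})$. This is at most $O_{k,q}(\eps^{1/(2k+2)})$ when $\eps\le1$, and the claim is trivial when $\eps>1$ since $D\le1$. If the operator-norm estimate in Step 1 needs an extra dimension factor, it is absorbed in $\lesssim_{k,q}$, and the slack between the two exponents covers a slightly lossier pure-case bound.
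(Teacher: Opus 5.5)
Your Steps 1 and 2 are sound. Step 1 is a neat alternative to the paper's Heisenberg--Weyl truncation: the $\nu$-centred expansion $\bigotimes_i(E+D)$ removes exactly the low-order parts and gives $\norm{\sigma-\rho}_1\le\eps$ directly. One justification needs tightening: $\norm{D^{\otimes U}}_{1\to1}\le 2^{|U|}$ should come from expanding $D^{\otimes U}$ into $2^{|U|}$ channels, or from the diamond norm, because induced trace norms are not multiplicative in general; the bound itself is true. For qubits Step 3 is also correct, since $\lambda_{\max}(\nu)=1-\lambda_{\min}(\nu)$. Your argument then proves the theorem without the paper's channels, even with exponent $1/(2k+1)$, which covers the paper's application.

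The statement, however, is for $q$-ary qudits, and for $q\ge3$ Step 3 fails, because a small $\lambda_{\min}(\nu)$ does not make $\nu$ nearly pure. Take $q=3$, $k=2$, $t=1$, $\nu=\frac12(|0\rangle\langle0|+|1\rangle\langle1|)$ and $\rho=\frac12(|00\rangle\langle00|+|11\rangle\langle11|)$. Then $\eps=0$ and $\lambda_{\min}(\nu)=0$, so you are in Step 3, yet $\rho$ is at trace distance $\frac12$ from $\nu^{\otimes2}$. Step 2 cannot cover this regime either. For $\rho'=(1-\eta)\rho+\eta|20\rangle\langle20|$ one has $\eps=24\eta$, and your $\sigma$ is diagonal with $\langle21|\sigma|21\rangle=-\eta/2$. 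Since $\langle2|\nu|2\rangle=0$, every $1$-wise $\nu$-independent state vanishes on $|21\rangle$. So no mixture $(1-\lambda)\sigma+\lambda\nu^{\otimes2}$ with $\lambda<1$ is PSD, while $\nu^{\otimes2}$ itself is at distance about $\frac12$ from $\rho'$.

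What is missing is a way to handle the light eigenspace of $\nu$ when the heavy part has dimension at least $2$: first move all mass off the light directions, then put back exactly the right light mass. The paper does this with channels.
\begin{itemize}
\item Let $H$ be spanned by the eigenvectors of $\nu$ with eigenvalue $\ge\delta$, and $L=H^\perp$. The trace-preserving map $\Phi(X)=\Pi_HX\Pi_H+\frac{\Pi_H}{\dim H}\Tr[X\Pi_L]$ makes $\tilde\nu=\Phi(\nu)\succeq\delta\Pi_H$.
\item Because $\Phi^{\otimes k}$ commutes with partial traces and contracts the trace norm, $\Phi^{\otimes k}(\rho)$ has deviation parameter at most $\eps$ with respect to $\tilde\nu$. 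Meanwhile $\Phi^{\otimes k}$ moves $\rho$ by only $O(\sqrt{kq\delta+\eps})$, by a gentle-measurement/Fuchs--van de Graaf bound.
\item Your Steps 1--2 then run inside $H^{\otimes k}$ with respect to $\tilde\nu$.
\item A second channel $\Psi$ with $\Psi(\tilde\nu)=\nu$ restores exact $\nu$-marginals at cost $O(\sqrt{kq\delta})$. It has heavy Kraus operator $\sqrt{\nu_H\tilde\nu^{-1}}$ and sends the remaining mass to $\nu_L/\Tr\nu_L$.
\end{itemize}
Trace preservation is essential. Simply projecting onto $H^{\otimes k}$ and renormalizing would shift each $W$-marginal by an amount of order $q\delta$, governed by correlations with the traced-out qudits. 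The factor $\delta^{-k}$ in your mixing step would then blow this up.
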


\begin{corollary} \label{cor:cond-2wise-indep}
  For any $\eta > 0$, there are $d = \poly(q/\eta)$ and a distribution $\cT$ over ${[n] \choose \leq d}$, such that if
  a directed $k$-uniform hypergraph $\cH = ([n], \cE)$ is $(\Omega(\eta/q)^{2k+2},q,2,2)$-concentrated and
  $n \geq \poly(q/\eta)$, then for every degree-$2(d+k)$ $q$-ary $n$-qudit pseudo-state $\rho$,
  \[ \E_{T \sim \cT} \E_{b \sim \Lambda^{\otimes T}(\rho_T)} \norm{\cE(\rho|_b) - \phi_b}_1 \leq \eta \,, \]
  where $\phi_b$ is a $\nu(\rho|_b)$-independent state for every subset $T \in \supp(\cT)$ and outcome
  $b \in \supp(\Lambda^{\otimes T}(\rho_T))$.
  Further, for every subset $S$ of size $k \geq 2$,
  \[ \norm{\E_{T,\, b} (\rho|_b)_S - \rho_S}_1 \leq dk/n \,. \]
\end{corollary}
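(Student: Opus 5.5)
The plan is to combine \cref{thm:cond-indep-concentrated} with \cref{thm:almost-twise-indep} at $t=2$, applied to each conditional induced state $\rho' \defas \cE(\rho|_b)$ separately, and then average over $T$ and $b$.

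\textbf{Choice of parameters.} Let $\eta' \defas c\,(\eta/q)^{2k+2}$ for a small constant $c=c(k,q)$ to be fixed at the end. Invoke \cref{thm:cond-indep-concentrated} with accuracy $\eta'$. This gives $\kappa = \poly(q/\eta') = \poly(q/\eta)$, since $k$ is fixed, and we set $d \defas \kappa$ with the same distribution $\cT$. The concentration hypothesis $(\eta'/(2q),q,2,2)$ is of the form $(\Omega(\eta/q)^{2k+2},q,2,2)$ once constants are absorbed. The "further" statement then transfers verbatim, with $\kappa k/n = dk/n$.

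\textbf{Bounding $\eps_b$.} Fix $T,b$ and write $\nu_b \defas \nu(\rho|_b)$. Then $\rho'$ is a genuine $k$-qudit state, because it is an average of reduced states of a pseudo-state of degree at least $2k$, and these are genuine by \cref{lem:genuine-state}. Define $\eps_b$ as in \eqref{eq:almost-twise-indep} with $t=2$, and split the sum by $|W|$:
\begin{itemize}[label={}]
\item The $W=\emptyset$ term vanishes.
\item The $|W|=2$ terms are exactly $(q^2-1)^2\norm{\rho'_{ij}-\nu_b^{\otimes 2}}_1$, which are controlled in expectation by \cref{thm:cond-indep-concentrated}.
\item The $|W|=1$ terms, $\norm{\rho'_i-\nu_b}_1$, are bounded by $\norm{\rho'_{ij}-\nu_b^{\otimes2}}_1$ for any $j\neq i$, since partial trace is contractive in trace norm and $k\ge 2$.
\end{itemize}
Summing over the $O(k^2)$ choices of $W$ gives
\[
\E_{T,b}\,\eps_b \lesssim_{k,q} \eta' .
\]

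\textbf{Applying the rounding theorem.} \cref{thm:almost-twise-indep} gives, for each $b$, a pairwise $\nu_b$-independent state $\phi_b$ with $\norm{\rho'-\phi_b}_1 \lesssim_{k,q} \eps_b^{1/(2k+2)}$. The map $x\mapsto x^{1/(2k+2)}$ is concave, so Jensen's inequality yields
\[
\E_{T,b}\norm{\cE(\rho|_b)-\phi_b}_1 \lesssim_{k,q} (\E\,\eps_b)^{1/(2k+2)} \lesssim (\eta')^{1/(2k+2)} .
\]
Choosing $c$ small makes the right-hand side at most $\eta$.

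\textbf{Expected obstacle.} The main care point is bookkeeping rather than mathematics. First, the concentration parameter and degree must match the form claimed in the corollary; in particular the $q$-dependence of the hidden constants in \cref{thm:almost-twise-indep} must be absorbed into the $\Omega(\cdot)$ of the concentration parameter. Second, $\phi_b$ must be chosen measurably or deterministically for each $b$, which is harmless because only finitely many outcomes $b$ have positive probability.
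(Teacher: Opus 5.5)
Your proposal is correct and follows the paper's proof: you invoke \cref{thm:cond-indep-concentrated} with accuracy of order $(\eta/q)^{2k+2}$, set $d=\kappa$ so the approximate law of total probability carries over verbatim, bound the $|W|=1$ terms by the $|W|=2$ terms, and apply \cref{thm:almost-twise-indep} with $t=2$ to each $\cE(\rho|_b)$. You also make explicit two steps the paper leaves implicit, namely the partial-trace contraction behind its ``we may assume $\eps_U\le\eps_W$'' and the Jensen step turning $\E_{T,b}\eps_b$ into a bound on $\E_{T,b}\eps_b^{1/(2k+2)}$; as in the paper, the $\Omega(\cdot)$ in the concentration parameter has to absorb $k,q$-dependent constants.
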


\begin{proof}
  We will apply \cref{thm:almost-twise-indep}.
  Note that $\eps_\emptyset = 0$ and we may assume $\eps_U \leq \eps_W$ for $U \subseteq W$.
  It suffices to bound $\eps_W$ for subsets $W$ of size exactly $2$.
  \cref{thm:cond-indep-concentrated} implies
  \[ \E_{T \sim \cT} \E_{b \sim \Lambda^{\otimes T}(\rho_T)} \sum_{W \in {[k] \choose \leq 2}}
  \norm{\cE(\rho|_b)_W - \nu(\rho|_b)^{\otimes W}}_1 \leq (\eta/q)^{2k+2} \,. \]
  For $T \in \supp(\cT), b \in \supp(\Lambda^{\otimes T}(\rho_T))$, let $\phi_b$ be the pairwise
  $\nu(\rho|_b)$-independent distribution closest to $\cE(\rho|_b)$.
  The result follows by \cref{thm:almost-twise-indep}.
\end{proof}

\subsection{$\eps$-Net of Constraints}

The induced state $\cE(\rho)$ considered in the previous subsection is an average over all constraints in $\cE$.
For our proof, it is helpful to split the constraints into groups, where all constraints in the same group forbids
pure states that are close to each other.
For this reason, we will consider an $\eps$-net over pure product states on $k$ qubits, because our refutation are based
on such constraints.
In fact, a constraint forbids a pure product states with probability $0$.
So we will also consider pure states that are $\delta$-close to a pure product state.
We will then consider the induced state $\cE_{\tilde\psi}(\rho)$ that is the average over all constraints forbidding a
state close to $\tilde\psi$.
Doing so will help us strongly refute pseudo-states whose marginal $\nu$ is biased in \cref{sec:biased-marginal}.

Given $k \geq 1$ and $\eps>0$, let $\cN_\eps$ be an $\eps$-net of the pure states on $k$ qubits of cardinality
$O(1/\eps)^{2^{k+1}}$, obtained by taking any maximal set whose points have pairwise distance at least $\eps$.
The upper bound on cardinality is due to pure states lying in the unit sphere of $\C^{2^k}$, which is isometric to
the unit sphere of $\R^{2^{k+1}}$.
Given a pure state $\psi$ on $k$ qubits, let $\cN_\eps(\psi)$ be the pure state in $\cN_\eps$ closest to $\psi$,
breaking ties arbitrarily.

\begin{definition} \label{def:eps-net-instance}
  Given a $k$-QSAT instance $\cI$ and an $\eps$-net $\cN_\eps$ as above, denote by $\tilde\cI$ the instance
  obtained from $\cI$ by replacing the forbidden state $\psi$ of every constraint with the closest state $\tilde\psi
  \defas \cN_\eps(\psi)$ in the $\eps$-net.
  Denote by $\tilde\Psi$ the multiset of forbidden states of the constraints in $\tilde\cI$.
  Given $\tilde\psi \in \tilde\Psi$, let $\cE_{\tilde\psi}$ be the collection of constraints in $\tilde\cI$ whose
  forbidden state is $\tilde\psi$.
  The instance $\cI$ is $\eta$-concentrated if the directed multihypergraph of
  scopes $\cE_{\tilde\psi}$ is $\eta$-concentrated for every
  $\tilde\psi\in\tilde\Psi$.
\end{definition}

\begin{lemma} \label{lem:cond-2wise-indep-multi}
  For any $\eta > 0$, there are $d > 0$ and a distribution $\cT$ over ${[n] \choose \leq d}$, such that
  for any $k$-QSAT instance $\cI = ([n], \cC)$, if $\cE_{\tilde\psi}$ is $(\Omega(\eta/q)^{2k+2},2,2,2)$-concentrated
  for $\tilde\psi \in \tilde\Psi$, then for every degree-$2(d+k)$ $n$-qubit pseudo-state $\rho$,
  \[ \E_{T \sim \cT} \E_{b \sim \Lambda^{\otimes T}(\rho_T)} \E_{\tilde\psi \in \tilde\Psi}
  \Norm{\cE_{\tilde\psi}(\rho|_b) - \phi_{b,\tilde\psi}}_1 \leq \eta \,, \]
  where $\phi_{b,\tilde\psi}$ is a pairwise $\nu(\rho|_b)$-independent state for subset $T \in \supp(\cT)$, outcome
  $b \in \supp(\Lambda^{\otimes T}(\rho_T))$, constraint type $\tilde\psi \in \cN_\eps$.
  Further, for every subset $S$ of size $k \geq 2$,
  \begin{equation} \label{eq:approx-total-prob-multi}
    \norm{\E_{T,\, b} (\rho|_b)_S - \rho_S}_1 \leq dk/n \,.
  \end{equation}
\end{lemma}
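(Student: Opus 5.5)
The plan is to run the argument of \cref{cor:cond-2wise-indep} once, with a single conditioning distribution $\cT$ shared across all constraint types $\tilde\psi$, and then average over $\tilde\psi\in\tilde\Psi$.

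\textbf{Choosing $\cT$.} The point to watch is that $\cT$ and $d$ in \cref{thm:cond-indep-concentrated} depend only on $(\eta,q)$ (here $q=2$). They do not depend on the hypergraph. So fix $\kappa=\poly(2/\eta')$ with $\eta'=\Omega(\eta/q)^{2k+2}$, and let $\cT$ be the distribution from that theorem. The conditioned pseudo-states $\rho|_b$ and the marginal $\nu(\rho|_b)=\E_{i\in[n]}(\rho|_b)_i$ then depend only on $\rho$, $T$ and $b$, and not on $\tilde\psi$.

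\textbf{Applying the theorems per type.} Fix $\tilde\psi\in\tilde\Psi$. By hypothesis $\cE_{\tilde\psi}$ is $(\eta',2,2,2)$-concentrated. Applying \cref{thm:cond-indep-concentrated} to the hypergraph $([n],\cE_{\tilde\psi})$ gives, for every pair $W\in{[k]\choose 2}$,
\[
\E_{T\sim\cT}\E_{b}\norm{\cE_{\tilde\psi}(\rho|_b)_W-\nu(\rho|_b)^{\otimes 2}}_1\le\eta'.
\]
The proof of that theorem uses only concentration of the hyperedge multiset through \cref{def:concentrated}. Repeated scopes in $\cE_{\tilde\psi}$ are therefore harmless, since the averages $\E_{e\in\cE}$ are simply taken with multiplicity. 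I would state this explicitly.

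Singletons come for free. The term $\norm{\cE_{\tilde\psi}(\rho|_b)_i-\nu(\rho|_b)}_1$ is at most the pair bound, because the partial trace is contractive in trace norm. Summing over the $O(k^2)$ sets $W$ with weights $(q^2-1)^{|W|}$, the quantity $\eps$ of \cref{thm:almost-twise-indep} has expectation $O_k(\eta')$.

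Now let $\phi_{b,\tilde\psi}$ be the pairwise $\nu(\rho|_b)$-independent state supplied by \cref{thm:almost-twise-indep}, applied to the genuine state $\cE_{\tilde\psi}(\rho|_b)$. This state is genuine because it is an average of genuine reduced states, by \cref{lem:genuine-state}. The theorem gives trace distance $\lesssim \eps^{1/(2k+2)}$. Since $x\mapsto x^{1/(2k+2)}$ is concave, Jensen's inequality gives
\[
\E_{T,b}\norm{\cE_{\tilde\psi}(\rho|_b)-\phi_{b,\tilde\psi}}_1\lesssim_k (\eta')^{1/(2k+2)}\le\eta,
\]
once the hidden constant in $\Omega(\eta/q)$ is chosen small enough.

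\textbf{Averaging over types.} Averaging over $\tilde\psi\in\tilde\Psi$ and exchanging the order of expectations (all sums are finite) gives the first claim. The second claim, \eqref{eq:approx-total-prob-multi}, involves neither $\cE$ nor $\tilde\psi$. It is exactly the last statement of \cref{thm:cond-indep-concentrated}, taken with the same $\cT$, with $d=\kappa$ and degree $2(d+k)$.

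\textbf{Main obstacle.} The only real risk is that the construction of $\cT$ in \cref{thm:cond-indep-concentrated} secretly depends on $\cE$, for example through a choice of $s$ tuned to the hypergraph. If it did, a single $\cT$ would not serve all types at once. I expect it does not: the proof uses a potential argument on the global quantity $\E_{i,j}$ of mutual information over all pairs $i,j\in[n]$, and concentration only to transfer the bound from all pairs to the pairs appearing in $\cE$. If this check failed, the fallback would be to run the potential argument with the sum of the potentials for each type, which is bounded by $O(|\cN_\eps|)$, and absorb this into $\kappa$ since $|\cN_\eps|=O(1)$.
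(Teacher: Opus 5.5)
Your proposal is correct and follows the paper's route. The paper's proof is simply ``apply \cref{cor:cond-2wise-indep} to each $\cE_{\tilde\psi}$'', which is what you do after inlining that corollary (\cref{thm:cond-indep-concentrated} plus \cref{thm:almost-twise-indep}, with the Jensen step made explicit) and then averaging over $\tilde\psi$.

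Your ``main obstacle'' is already settled by the quantifier order in \cref{thm:cond-indep-concentrated}. There $\cT$ (a uniform $s\in\{0,\dots,\kappa\}$, then a uniform $s$-subset) is fixed before the hypergraph is chosen, so one $\cT$ serves every type and the fallback is unnecessary. The only slip is cosmetic: $(\eta',2,2,2)$-concentration gives a per-pair bound of $2q\eta'$ rather than $\eta'$, and this factor is absorbed into the hidden constant.
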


\begin{proof}
  Apply \cref{cor:cond-2wise-indep} to each $\cE_{\tilde\psi}$.
\end{proof}

Crucially, the states $\phi_{b,\tilde\psi}$ from the previous lemma are pairwise $\nu$-independent, where
$\nu = \nu(\rho|_b)$ is unrelated to $\tilde\psi$.
This fact will be useful for the strong refutation for biased $\nu$ in \cref{lem:biased-marginal}.

\subsection{Biased Marginal} \label{sec:biased-marginal}

The next lemma shows that if the induced states $\set{\phi_\psi}$ are $\nu$-independent for some $\gamma$-biased $\nu$,
then they $\Omega(\gamma)$-violate the constraints in $\tilde\cI$, provided the forbidden states $\Psi$ of $\cI$ are
approximately evenly distributed (\cref{def:smooth}):

\begin{lemma} \label{lem:biased-marginal}
  For any $\gamma > 0$, there is $\beta_{\ref{lem:biased-marginal}} > 0$ such that for any collection
  $\set{\phi_{\tilde\psi} | \tilde\psi \in \cN_\eps}$ of pairwise $\nu$-independent states on $3$ qubits, if
  $\norm{\nu-\frac I2}_F \geq \gamma$ and $\Psi$ is $\tau$-smooth,
  \[ \val(\tilde\cI, \set{\phi_{\tilde\psi}}) \leq 1-\beta_{\ref{lem:biased-marginal}} + o_{\gamma;\tau,\eps}(1)
    \,.\footnote{$o_{\gamma;\tau,\eps}(1)$ represent a quantity that, for every fixed $\gamma > 0$, is arbitrarily close
  to $0$ when $\max(\tau,\eps)$ is sufficiently close to $0$.} \]
\end{lemma}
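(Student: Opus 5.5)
Throughout, I read $\tau$-smoothness of $\Psi$ (\cref{def:smooth}) as saying that the empirical distribution of forbidden states, pushed to the net $\cN_\eps$, is within $\tau$ of the pushforward of Haar measure. The plan is to compare the value of $\set{\phi_{\tilde\psi}}$ with a Haar average and then exhibit a fixed positive violation coming from the bias of $\nu$.

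\textbf{Step 1: pass to a Haar average.} By definition,
\[ 1-\val(\tilde\cI,\set{\phi_{\tilde\psi}}) = \E_{\tilde\psi\in\tilde\Psi}\Tr(\Pi_{\tilde\psi}\phi_{\tilde\psi}) . \]
Smoothness together with $\eps$-closeness of net points lets me replace the multiset average by $\E_{\psi\sim\mathrm{Haar}}\Tr(\Pi_{\psi}\phi_{\cN_\eps(\psi)})$, up to an additive error $o_{\tau,\eps}(1)$. The states $\phi_{\cN_\eps(\psi)}$ depend on $\psi$ adversarially, so the Haar average cannot simply be pulled inside. Instead I need a pointwise lower bound on $\Tr(\Pi_\psi\phi)$ that holds uniformly over all pairwise $\nu$-independent $\phi$ and is positive for a constant fraction of $\psi$.

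\textbf{Step 2: choose a witness set of $\psi$.} Write $\nu=\frac I2+\frac12 r\cdot\vec\sigma$ with $|r|\gtrsim\gamma$, and let $\ket{v}$ be the eigenvector of $\nu$ with the larger eigenvalue $\lambda=\frac{1+|r|}{2}$. Consider product states $\psi=\ket{v}^{\otimes 3}$ and their small neighbourhoods. For any pairwise $\nu$-independent $\phi$ on 3 qubits, I bound $\bra{vvv}\phi\ket{vvv}$ from below using only the 1- and 2-marginals.

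Let $P=\ketbra{v}{v}$. Then $\Tr(\phi\, P^{\ot2}\ot I)=\lambda^2$, and inclusion–exclusion on commuting projectors gives
\[ \Tr(\phi P^{\ot 3}) \ge \Tr(\phi\,P P I)+\Tr(\phi\,P I P)-\Tr(\phi\,P I I)=2\lambda^2-\lambda=\lambda(2\lambda-1)=\lambda|r| . \]
This uses $P\ot(P+P-I)\preceq P\ot P\ot P$ in a suitable sense. Concretely, the operator $P_1P_2+P_1P_3-P_1-P_1P_2P_3=-P_1(I-P_2)(I-P_3)\preceq 0$, since all factors commute. So $\Tr(\Pi_{vvv}\phi)\ge\lambda|r|\gtrsim\gamma$ for every such $\phi$.

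By continuity, for every Haar-random $\psi$ within distance $c\gamma$ of $\ket{vvv}$, $\Tr(\Pi_\psi\phi)\ge\gamma/4$ uniformly in $\phi$.

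\textbf{Step 3: measure the witness set.} Such a neighbourhood has Haar measure at least $(c\gamma)^{O(2^{k+1})}$, a positive constant depending only on $\gamma$. It does not depend on the direction of $v$, by unitary invariance. Then
\[ \beta_{\ref{lem:biased-marginal}}\defas \tfrac{\gamma}{4}\cdot(c\gamma)^{O(1)} , \]
while all other $\psi$ contribute nonnegatively. The net rounding costs $O(\eps)$ in each trace, and smoothness costs $O(\tau)$, which gives the claimed bound.

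\textbf{Main obstacle.} The delicate step is Step 1: making precise which notion of smoothness lets the uniform pointwise bound transfer from Haar measure to $\tilde\Psi$. A bound of the form ``every net cell's empirical mass is at least $(1-\tau)$ times its Haar mass'' suffices, and I would aim to use smoothness in exactly that form.
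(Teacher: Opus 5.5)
Your Steps 2 and 3 are correct and are essentially the paper's proof. You get a lower bound on $\bra{\theta^{\otimes 3}}\phi\ket{\theta^{\otimes 3}}$ that is uniform over all pairwise $\nu$-independent $\phi$, and transport it via \cref{lem:outer-product-diff-norm} to a ball of radius $c\gamma$ around $\theta^{\otimes 3}$, whose Haar measure $\alpha(\gamma)>0$ does not depend on $\theta$. Your identity $P_1P_2P_3=P_1P_2+P_1P_3-P_1+P_1(I-P_2)(I-P_3)$ is a slicker derivation of the same bound $\lambda(2\lambda-1)\geq\gamma/\sqrt2$ than \cref{lem:biased-marginal-constraint}. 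That lemma shows $\bra{b}\phi\ket{b}$ depends only on the Hamming weight of $b$ and then subtracts; your version uses only the $\{1,2\}$ and $\{1,3\}$ marginals.

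The step you call the main obstacle is illusory, and as you set it up it rests on a misreading of \cref{def:smooth}. Smoothness there is an additive discrepancy bound over spherical caps for the unrounded multiset $\Psi$. It says nothing directly about the masses of the net cells of $\cN_\eps$, which are not caps. So neither your Step~1 replacement by a Haar average nor the multiplicative per-cell bound you propose is available from the hypothesis.

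Neither is needed. The ball $\{\chi:\norm{\chi-\theta^{\otimes 3}}\leq c\gamma\}$ is exactly the cap $C(\theta^{\otimes 3},1-(c\gamma)^2/2)$, so \cref{def:smooth} gives $\Pr_{\psi\in\Psi}[\psi\in C]\geq\alpha-\tau$ directly. Each such $\psi$ has $\norm{\tilde\psi-\theta^{\otimes 3}}\leq c\gamma+\eps$. Discarding all other constraints by nonnegativity then yields $1-\val(\tilde\cI,\{\phi_{\tilde\psi}\})\geq(\alpha-\tau)(\gamma/\sqrt2-2c\gamma-2\eps)$. This is precisely the paper's argument, so you can simply delete Step~1.
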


\begin{proof}
  Expand $\nu = \lambda\ket\theta\bra\theta + (1-\lambda)\ket{\theta^\perp}\bra{\theta^\perp}$ in its orthonormal
  eigenbasis, where $\lambda > \frac12$.
  Consider the pure product state $\theta^3$.
  \cref{lem:biased-marginal-constraint} implies any pairwise $\nu$-independent state $\phi$ satisfies
  $\bra{\theta^3}\phi \ket{\theta^3} \geq 4c \gamma$ for some $c > 0$.
  The spherical cap $C(\theta^3,t) \subset \S^{2^k-1}(\C)$ of pure states $\chi$
  such that $\norm{\theta^3-\chi} \leq c\gamma$ has measure $\alpha > 0$ under the uniform distribution of
  $\S^{2^k-1}(\C)$.
  Since $\Psi$ is $\tau$-smooth, $\Pr_{\psi \in \Psi}[\psi \in C(\theta^3,t)] \geq \alpha - \tau$.
  For $\psi \in C(\theta^3,t)$, \cref{lem:outer-product-diff-norm} implies
  $\bra{\tilde\psi}\phi_{\tilde\psi} \ket{\tilde\psi} \geq 2c\gamma-2\eps$, where
  $\tilde\psi \defas \cN_\eps(\psi)$.
  Therefore
  \begin{align*}
    1-\val(\tilde\cI,\set{\phi_{\tilde\psi}})
    &= \E_{\psi \in \Psi} \bra{\tilde\psi}\phi_{\tilde\psi} \ket{\tilde\psi}
    \geq \E_{\psi \in \Psi} \Iv{\psi \in C(\theta^3,t)} \cdot \bra{\tilde\psi}\phi_{\tilde\psi} \ket{\tilde\psi} \\
    &\geq (\alpha-\tau)(2c\gamma-2\eps) \geq 2c\alpha\gamma - o_{\gamma;\tau,\eps}(1) \,. \qedhere
  \end{align*}
\end{proof}

\begin{lemma} \label{lem:biased-marginal-constraint}
  Let $\nu$ be a state on a single qubit such that $\norm{\nu - \frac I2}_F \geq \gamma > 0$, and $\theta$ be the
  maximum eigenvector of $\nu$.
  Any pairwise $\nu$-independent state $\phi$ on $3$ qubits satisfies $\bra{\theta^3}\phi\ket{\theta^3} \geq
  \gamma/\sqrt 2$.
\end{lemma}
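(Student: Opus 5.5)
The plan is to reduce everything to pairwise marginals using commuting projectors. The first step is to convert the Frobenius bias into an eigenvalue bound. Write $\nu = \lambda\ket\theta\bra\theta + (1-\lambda)\ket{\theta^\perp}\bra{\theta^\perp}$ with $\lambda \geq \frac12$. Then $\nu - \frac I2$ has eigenvalues $\pm(\lambda-\frac12)$, so $\norm{\nu-\frac I2}_F = \sqrt2\,(\lambda-\frac12)$. The hypothesis therefore gives $2\lambda - 1 \geq \sqrt2\,\gamma$.

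Next, let $P_i \defas \ket\theta\bra\theta$ acting on qubit $i$, tensored with the identity on the other two qubits, and set $Q_i \defas I - P_i$. These are commuting projectors, and $\ket{\theta^3}\bra{\theta^3} = P_1P_2P_3$. The key identity is
\[ P_1P_2P_3 = P_1P_2 - P_1P_2Q_3 \,. \]
Since $P_1, P_2, Q_3$ commute and $P_2 \preceq I$, we have the operator inequality $P_1P_2Q_3 \preceq P_1Q_3$, because $P_1Q_3 - P_1P_2Q_3 = P_1Q_2Q_3 \succeq 0$. Hence
\[ P_1P_2P_3 \succeq P_1P_2 - P_1Q_3 \,. \]
This is a Bonferroni-type bound whose right-hand side involves only operators supported on two qubits.

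Now take the expectation against $\phi$. Each term on the right is a product of single-qubit operators on two distinct qubits, and pairwise $\nu$-independence says the corresponding two-qubit reduced state of $\phi$ is $\nu\otimes\nu$. This gives $\Tr(\phi P_1P_2) = \lambda^2$ and $\Tr(\phi P_1Q_3) = \lambda(1-\lambda)$. Therefore
\[ \bra{\theta^3}\phi\ket{\theta^3} \geq \lambda^2 - \lambda(1-\lambda) = \lambda(2\lambda-1) \geq \tfrac12\cdot\sqrt2\,\gamma = \gamma/\sqrt2 \,. \]

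There is no real obstacle here. The only point needing care is choosing the inclusion–exclusion inequality so that it uses only pairwise information with the correct sign; a naive union bound would need three-wise marginals. The operator inequality $P_1P_2Q_3 \preceq P_1Q_3$ is what avoids this, and it holds because all the projectors commute as tensor products on distinct qubits.
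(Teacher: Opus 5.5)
Your proof is correct and reaches exactly the paper's bound $\lambda(2\lambda-1)$, with leaner bookkeeping. The paper expands in the eigenbasis of $\nu$ and sets $e_b=\bra{b}\phi\ket{b}$. It then uses all three pairwise marginals to show that $e_b$ depends only on the Hamming weight of $b$. Finally it bounds $e_{000}=1-\sum_{b\neq 000}e_b\geq 1-3\lambda(1-\lambda)-(1-\lambda)^2$, discarding $e_{110}\geq 0$.

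All your projectors are diagonal in that same basis. So your inequality $P_1P_2P_3\succeq P_1P_2-P_1Q_3$, evaluated on $\phi$, is just $e_{000}=(e_{000}+e_{001})-e_{001}\geq\lambda^2-(e_{001}+e_{011})$, discarding $e_{011}\geq 0$. The paper's symmetry gives $e_{011}=e_{110}$, so the two arguments give up the same slack.

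Your route buys economy and a weaker hypothesis. It uses only the two pair marginals containing qubit $1$, and in fact only their diagonal entries in the $\theta$-basis. It needs no symmetrization step and no separate appeal to $\Tr\phi=1$. The paper's full accounting reveals more structure of the diagonal of $\phi$, but the lemma needs none of it.

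One small inaccuracy in your closing remark: the plain union bound $I-P_1P_2P_3\preceq Q_1+Q_2+Q_3$ needs only one-qubit marginals, not three-wise ones. It simply gives $3\lambda-2$, which is vacuous for $\lambda\leq\frac23$.
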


\begin{proof}
  Without loss of generality, assume $\ket 0$ and $\ket 1$ are the largest and smallest eigenvectors of $\nu$, by
  redefining $\ket 0$ and $\ket 1$ as such for the rest of this proof.
  Also let $\lambda$ be the largest eigenvalue of $\nu$.
  Then $\gamma^2 \leq \norm{\nu - \frac I2}_F^2 = 2(\lambda-\frac12)^2$, i.e.~$\lambda
  \geq  \frac12 + \frac\gamma{\sqrt 2}$.

  For $b \in \set{0,1}^3$, let $e_b \defas \bra b\phi\ket b$.
  Each $e_b \geq 0$ because $\phi$ is PSD.
  Since $\phi$ is pairwise $\nu$-independent, tracing out qubit $3$ yields
  \[ \lambda^2 = \bra{00}\nu^{\otimes 2}\ket{00} = \bra{00}\Tr_3(\phi)\ket{00} = e_{000} + e_{001}\,. \]
  Similarly, tracing out qubit $1$ and qubit $2$ respectively,
  \[ \lambda^2 = e_{000} + e_{001} = e_{000} + e_{010} = e_{000} + e_{100} \,. \]
  In particular, $e_{001} = e_{010} = e_{100}$.
  Again tracing out qubit $3$,
  \[ \lambda(1-\lambda) = \bra{01}\nu^{\otimes 2}\ket{01} = \bra{01}\Tr_3(\phi)\ket{01} = e_{010} + e_{011}\,. \]
  Similarly, rotating the roles of the three qubits in a cycle,
  \begin{equation} \label{eq:middle-weight}
    \lambda(1-\lambda) = e_{010} + e_{011} = e_{001} + e_{101} = e_{100} + e_{110} \,.
  \end{equation}
  This further implies $e_{011} = e_{101} = e_{110}$.
  Therefore $e_b$ only depends on the Hamming weight of $b$.
  Tracing out qubit $3$ once more,
  \begin{equation} \label{eq:upper-bound-weight-3}
    (1-\lambda)^2 = \bra{11}\nu^{\otimes 2}\ket{11} = \bra{11}\Tr_3(\phi)\ket{11} = e_{110} + e_{111}
    \geq e_{111} \,.
  \end{equation}

  We have $\sum_{b \in \set{0,1}^3} e_b = \Tr(\phi) = 1$, so
  \begin{align*}
    \bra{000}\phi\ket{000}
    &= e_{000} = 1 - \sum_{b \neq 000} e_b
    \overset{\eqref{eq:middle-weight},\eqref{eq:upper-bound-weight-3}}\geq 1-3\lambda(1-\lambda)-(1-\lambda)^2 \\
    &= \lambda(2\lambda-1) \geq \gamma/\sqrt 2 \,. \qedhere
  \end{align*}
\end{proof}

A spherical cap of the complex $d$-sphere $\S^d(\C) \subset \C^{d+1}$ is
\[ C(x,t) \defas \set{ y \in \S^d(\C) | \Re \inner{x,y} \geq t }\,.  \]

We test whether a multiset of points is evenly distributed using spherical caps:
\[ \sC_d \defas \set{ C(x,t) | x \in \S^d, t \in [-1,1] } \,. \]

\begin{definition}[Smooth] \label{def:smooth}
  A multiset $Y$ of points in $\S^d(\C)$ is $\tau$-smooth if
  \[ \sup_{S \in \sC_d} \abs{\sigma(S) - \mu(S)} \leq \tau\,, \]
  where $\sigma$ is the uniform distribution on $\S^d(\C)$, and $\mu(S) \defas \abs{S \cap Y}/\abs{Y}$ is the empirical
  distribution of $Y$.
\end{definition}

A spherical cap of the complex $d$-sphere $\S^d(\C)$ is a spherical cap of the real $(2d+1)$-sphere $S^{2d+1}(\R)$, and
vice versa, via the invertible isometry
\begin{equation} \label{eq:isometry}
  (\alpha_1 + \beta_1 i, \dots, \alpha_{d+1} + \beta_{d+1} i) \in \C^{d+1} \quad
  \mapsto \quad (\alpha_1, \beta_1, \dots, \alpha_{d+1}, \beta_{d+1}) \in \R^{2d+2} \,.
\end{equation}
Spherical caps of $\S^d(\C)$ therefore have the same VC dimension of $2d+3$ as those of $\S^{2d+1}(\R)$
\cite[Proposition~5.12]{BL15}.

For any fixed dimension $d$, the VC inequality (\cite[Proof of Theorem~6.7]{shalev2014understanding} or
\cite[p.~31]{devroye2001combinatorial}) implies $m$ random points $Y$ in $\S^d$ is $o_m(1)$-smooth except with
probability $o_m(1)$.

%
%
%
%
\subsection{Wrapping Up}

\MainTheorem*

\begin{proof}
  Fix $\gamma>0$ sufficiently small for \cref{claim:fractional-matching-refutation}, and let
  $\beta_{\mathrm{bias}}>0$ be the constant supplied by \cref{lem:biased-marginal} with parameter
  $\gamma/\sqrt2$.  Set $\beta_*\defas\min\{\beta_{\mathrm{bias}},\beta_{\mathrm{unif}}(\gamma)\}$, choose
  $\eps,\eta,\tau>0$ below, and take $D$ large enough to apply \cref{cor:cond-2wise-indep} classwise while retaining
  degree $4$.
  Work on the simultaneous high-probability event of \cref{claim:fractional-matching-refutation}, $\Psi$ is
  $\tau$-smooth, and each $\cE_{\tilde\psi}$ is $(\Omega(\eta^8),2,2,2)$-concentrated.

  \begin{align*}
    \abs{\val(\cI, \rho) - \E_{T,b} \val(\cI, \rho|_b)}
    &= \abs{\E_{(e,\psi) \in \cC} \E_{T,b} \bra \psi ((\rho|_b)_e - \rho_e) \ket \psi} \\
    &\leq \E_{(e,\psi) \in \cC} \norm{\E_{T,b} (\rho|_b)_e - \rho_e}_1
    \overset{\eqref{eq:approx-total-prob-multi}}= o_n(1) \,.
  \end{align*}
  Now for any degree-$4$ pseudo-state $\sigma$,
  \begin{align*}
    \abs{\val(\cI, \sigma) - \val(\tilde\cI, \sigma)}
    &= \abs{\E_{(e,\psi) \in \cC} \Tr[\sigma_e (\ket\psi\bra\psi - \ket{\tilde\psi}\bra{\tilde\psi})]} \\
    &\leq \E_{(e,\psi) \in \cC} \norm{\ket\psi\bra\psi - \ket{\tilde\psi}\bra{\tilde\psi}}_\infty \leq 2\eps \,,
  \end{align*}
  using \cref{lem:outer-product-diff-norm}.
  Also
  \begin{align*}
    \val(\tilde\cI, \sigma) &= 1-\E_{(e,\psi) \in \cC} \bra{\tilde\psi} \sigma_e \ket{\tilde\psi}
    = 1-\E_{\tilde\psi \in \tilde\Psi} \bra{\tilde\psi} \cE_{\tilde\psi}(\sigma) \ket{\tilde\psi} \\
    &= \val(\tilde\cI, \{\cE_{\tilde\psi} (\sigma)\}) \,.
  \end{align*}
  \cref{lem:cond-2wise-indep-multi} implies there are pairwise $\nu(\rho|_b)$-independent states $\phi_{b,\tilde\psi}$
  such that
  \begin{align*}
    \abs{\val(\tilde\cI, \{\cE_{\tilde\psi} (\rho|_b)\}) - \val(\tilde\cI, \{\phi_{b,\tilde\psi} \})}
    &= \abs{\E_{\tilde\psi \in \tilde\Psi} \Tr[(\cE_{\tilde\psi} (\rho|_b) - \phi_{b,\tilde\psi})
    \ket{\tilde\psi}\bra{\tilde\psi}]} \\
    &\leq \E_{\tilde\psi \in \tilde\Psi} \norm{\cE_{\tilde\psi}(\rho|_b) - \phi_{b,\tilde\psi} }_1 \,,
  \end{align*}
  which is at most $\eta$ in expectation over $T \sim \cT, b \sim \Lambda^{\otimes T} (\rho_T)$.
  The previous four display equations and triangle inequality imply
  \[ \abs{\val(\cI, \rho) - \E_{T,b} \val(\tilde\cI, \{\phi_{b,\tilde\psi}\})} \leq 2\eps + o_n(1) + \eta\,. \]
  If $\norm{\nu(\rho|_b) - \frac I2}_F \geq \gamma/\sqrt 2$, then \cref{lem:biased-marginal} implies
  $\val(\tilde\cI, \{\phi_{b,\tilde\psi}\}) \leq 1 - \beta_{\mathrm{bias}} + o_{\gamma;\tau,\eps}(1)$.
  If $\norm{\nu(\rho|_b) - \frac I2}_F \leq \gamma/\sqrt 2$, then \cref{claim:fractional-matching-refutation} implies
  $\val(\tilde\cI, \{\phi_{b,\tilde\psi}\}) \leq 1 - \beta_{\mathrm{unif}}(\gamma) + 2\eps$.

  Choose $\Delta$ sufficiently large, then
  $\eps,\tau,\eta$ sufficiently small, and finally $n$ sufficiently large.  Thus, taking $\delta=\beta_*/2$, every
  feasible degree-$D$ pseudo-state has value at most $1-\delta$.  Since $D$ is constant for fixed $\Delta$, the
  degree-$D$ quantum-SoS relaxation can be solved in polynomial time, and its optimum upper bounds the value of every
  genuine quantum state.  Solving this relaxation therefore gives the claimed deterministic strong-refutation
  algorithm.
\end{proof}

\section{Almost $t$-Wise Independent} \label{sec:almost-twise-indep}

\AlmostTwise

Following \cite[Theorem~4.11]{ChanDorsiXu26} and \cite[Theorem~B.15]{ACLM26}, we first prove
\cref{thm:almost-twise-indep} for the special case when $\nu$ is somewhat unbiased, by quantizing the short proof
of \cite[Lemma~B.16]{ACLM26}.

\begin{lemma} \label{lem:almost-twise-indep-unbiased}
  In the context of \cref{thm:almost-twise-indep}, suppose $\lambda_{\min}(\nu) \geq \gamma/q$.
  Then there is a $t$-wise $\nu$-independent state $\tau$ such that $D(\rho, \tau) \leq 2 \gamma^{-k} \eps$.
\end{lemma}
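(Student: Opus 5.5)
The plan is to quantize the "zero out the low-degree Fourier part, then mix with the product state" argument of \cite[Lemma~B.16]{ACLM26}. The classical Fourier expansion relative to $\nu$ is replaced by a decomposition built from single-qudit "replace by $\nu$" maps. For each qudit $i\in[k]$, define the linear map $\mathsf E_i(X) \defas \Tr_i(X)\otimes \nu_i$, which traces out qudit $i$ and puts $\nu$ back. These maps are idempotent, commute for distinct $i$, and preserve trace and Hermiticity. For $S\subseteq[k]$ set
\[ \mathsf D_S \defas \prod_{i\in S}(\mathrm{id}-\mathsf E_i)\prod_{i\notin S}\mathsf E_i \,. \]
Then $\sum_{S}\mathsf D_S=\mathrm{id}$, and $\mathsf D_S(\rho)$ depends only on $\rho_S$. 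Explicitly, by inclusion--exclusion,
\[ \mathsf D_S(\rho)=\sum_{U\subseteq S}(-1)^{|S\setminus U|}\rho_U\otimes\nu^{\otimes([k]\setminus U)} = \sum_{U\subseteq S}(-1)^{|S\setminus U|}\bigl(\rho_U-\nu^{\otimes U}\bigr)\otimes\nu^{\otimes([k]\setminus U)} \,. \]
The second equality holds because $\sum_{U\subseteq S}(-1)^{|S\setminus U|}=0$ for $S\ne\emptyset$.

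\textbf{Step 1: exact correction of the marginals.} Define the Hermitian, trace-one operator
\[ \tau' \defas \rho-\sum_{1\le|S|\le t}\mathsf D_S(\rho)\,. \]
I would check that $\Tr_{[k]\setminus W}\mathsf D_S(\rho)=0$ unless $S\subseteq W$. This is because $\Tr_i\circ(\mathrm{id}-\mathsf E_i)=0$. Also, on $W$ one has $\sum_{S\subseteq W}\mathsf D_S(\rho)$ restricted to $W$ equal to $\rho_W$, and $\mathsf D_\emptyset(\rho)=\nu^{\otimes k}$. Hence for $|W|\le t$, the marginal $\tau'_W$ equals the $\emptyset$ term alone, namely $\nu^{\otimes W}$. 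So $\tau'$ has the right marginals up to size $t$, and only positivity can fail.

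\textbf{Step 2: size of the correction.} Using the displayed formula, the triangle inequality, and $\|A\otimes\nu^{\otimes r}\|_1=\|A\|_1$,
\[ \Bigl\|\sum_{1\le|S|\le t}\mathsf D_S(\rho)\Bigr\|_1\le\sum_{1\le|S|\le t}\ \sum_{\emptyset\ne U\subseteq S}\|\rho_U-\nu^{\otimes U}\|_1 \,. \]
The right-hand side is bounded by $\eps$ of \eqref{eq:almost-twise-indep}: each $U$ is counted at most $\binom{k-|U|}{\le t-|U|}$ times. I expect the weights $(q^2-1)^{|W|}$ to be what absorbs this counting, although this accounting is where I am least sure of the exact constant. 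If the crude count is not enough, I would instead expand each $\mathsf D_S(\rho)$ in the Heisenberg--Weyl basis. There $\mathsf D_S(\rho)$ should correspond to the components supported exactly on $S$ after a $\nu$-dependent change of basis, which gives a count with at most $q^2-1$ nonidentity operators per site. In either case one gets $\|\tau'-\rho\|_1\le\eps$, and hence $\lambda_{\min}(\tau')\ge-\eps$, since the operator norm is at most the trace norm and $\rho\succeq0$.

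\textbf{Step 3: restore positivity by mixing.} Set $\tau\defas(1-p)\tau'+p\,\nu^{\otimes k}$. Mixing preserves $t$-wise $\nu$-independence, because both terms have marginals $\nu^{\otimes W}$ for $|W|\le t$. Since $\nu^{\otimes k}\succeq(\gamma/q)^k I$, the state $\tau$ is PSD as soon as $p(\gamma/q)^k\ge(1-p)\eps$, so it suffices to take $p=(q/\gamma)^k\eps$. Then
\[ \|\tau-\rho\|_1\le\|\tau'-\rho\|_1+p\|\nu^{\otimes k}-\tau'\|_1\le\eps+2p \,, \]
which gives $D(\rho,\tau)\lesssim\gamma^{-k}\eps$. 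If $p\geq 1$, the bound is trivial: take $\tau=\nu^{\otimes k}$ and use $D\le1$.

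The main obstacle is matching the exact constant $2\gamma^{-k}$. This means making Step 2 tight, so that the correction's norm is at most $\eps$ with the $(q^2-1)^{|W|}$ weights, and choosing the normalization so that the factor $q^k$ from $\lambda_{\min}(\nu)\ge\gamma/q$ cancels. If the bound comes out as $\lesssim_{k,q}\gamma^{-k}\eps$, that already suffices for the application in \cref{thm:almost-twise-indep}.
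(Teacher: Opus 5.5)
Your construction works as far as it goes. The $\nu$-adapted decomposition gives a Hermitian, trace-one $\tau'$ with exactly the right marginals, and mixing with $\nu^{\otimes k}$ preserves them.

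Step~2 can be closed, but not by counting per $U$: for qubits with $t=2$ and $k\ge4$, a singleton $U$ lies in $k$ admissible sets $S$ yet carries weight only $3$. Count per $S$ instead. Partial trace contracts trace norm, so $\|\rho_U-\nu^{\otimes U}\|_1\le\|\rho_S-\nu^{\otimes S}\|_1$ for $U\subseteq S$. Hence $\|\mathsf D_S(\rho)\|_1\le(2^{|S|}-1)\|\rho_S-\nu^{\otimes S}\|_1\le(q^2-1)^{|S|}\|\rho_S-\nu^{\otimes S}\|_1$, and so $\|\tau'-\rho\|_1\le\eps$.

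The genuine gap is Step~3. Comparing $\lambda_{\min}(\tau')\ge-\eps$ with $\lambda_{\min}(\nu^{\otimes k})\ge(\gamma/q)^k$ forces $p=(q/\gamma)^k\eps$. What you actually prove is $D(\rho,\tau)\le(q^k+\tfrac12)\gamma^{-k}\eps$, a factor of order $q^k$ weaker than the claim, and nothing in your argument supplies the compensating $q^{-k}$.

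This is not slack in $\lambda_{\min}(\tau')\ge-\|\tau'-\rho\|_1$. Because you pad with $\nu$, your correction is not flat: $(\rho_i-\nu)\otimes\nu^{\otimes([k]\setminus\{i\})}$ has operator norm $\|\rho_i-\nu\|_\infty\|\nu\|_\infty^{k-1}$, not $q^{-(k-1)}\|\rho_i-\nu\|_\infty$. And $\lambda_{\min}(\tau')$ really can be of order $-\eps/k$. Take qubits, $t=1$, $\nu=\operatorname{diag}(1-\gamma/2,\gamma/2)$ with $\gamma\le1/k$, and $\rho$ the GHZ-type state $\sqrt a\ket{0^k}+\sqrt{1-a}\ket{1^k}$ with $a=1-\gamma/2-\eta$. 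Then $\eps=6k\eta$, while $\tau'$ has diagonal entry at most $-\eta/4$ at every weight-one basis vector. So comparing worst eigenvalues forces a mixing weight of order $2^k\gamma^{-k}\eps/k$ there.

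The missing idea, which is how the paper gets exactly $2\gamma^{-k}\eps$, is to truncate $\Delta=\rho-\nu^{\otimes k}$ in the unweighted Heisenberg--Weyl basis, i.e.\ to pad with $I/q$ rather than $\nu$. With $\delta=\gamma^{-k}\eps$, take $\tau=\delta\nu^{\otimes k}+(1-\delta)\rho-\frac{1-\delta}{q^k}\sum_{G\in\cW^k_t}\Tr[\Delta G^\dagger]G$.

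\begin{itemize}
\item $\tau-\nu^{\otimes k}$ is a combination of Heisenberg--Weyl operators of weight $>t$. These vanish under partial trace onto at most $t$ sites, so exact $t$-wise $\nu$-independence needs no $\nu$-adapted decomposition.
\item By H\"older, $|\Tr[\Delta G^\dagger]|\le\|\rho_W-\nu^{\otimes W}\|_1$ for $G$ supported exactly on $W$, and there are exactly $(q^2-1)^{|W|}$ such $G$. This is where the weights in $\eps$ come from.
\item Since $\|G\|_\infty=1$ while $\|G\|_1=q^k$, the subtracted operator has trace norm at most $\eps$ but operator norm at most $q^{-k}\eps$. This exactly offsets $\lambda_{\min}(\nu^{\otimes k})\ge(\gamma/q)^k$, so $\tau\succeq0$ when $\delta\le1$ (otherwise the bound is trivial), and $\|\rho-\tau\|_1\le2\delta+\eps$.
\end{itemize}

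Alternatively, keep your $\tau'$ but compare in the PSD order rather than via minimum eigenvalues. Write $\mathsf D_S(\rho)=X_S\otimes\nu^{\otimes([k]\setminus S)}$; then $\mathsf D_S(\rho)\preceq\|X_S\|_\infty\lambda_{\min}(\nu)^{-|S|}\nu^{\otimes k}$. This removes the $q^k$ loss and gives $D(\rho,\tau)\le(\tfrac12+(4/3)^t\gamma^{-t})\eps$. That constant is better for small $\gamma$ when $t<k$, but it is not literally the stated one.

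As you note, your weaker bound would suffice for \cref{thm:almost-twise-indep}, which absorbs $k,q$-dependent factors. It does not prove the lemma as stated.
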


\begin{proof}
  Let $\Delta = \rho - \nu^{\otimes k}$.
  Expand $\Delta$ in the Heisenberg–-Weyl basis $\cW^k$ on $k$ qudits,
  \[ \Delta = \frac 1{q^k} \sum_{G \in \cW^k} \Tr[\Delta G^\dagger] G\,. \]
  For any Heisenberg--Weyl operator $G = \bigotimes_{i \in [k]} \sigma_i \in \cW^k$ whose non-identity components
  $\{ i \in [k] \mid \sigma_i \neq I \}$ are precisely $W$, we have $\Abs{\Tr \Delta G^\dagger} \leq \norm{\Tr_{[k]
  \setminus W} \Delta}_1 \norm{G}_\infty = \norm{\rho_W - \nu^{\otimes W}}_1$.

  Construct $\tau$ by removing from $\rho$ all terms of degree at most $t$, and rescaling to ensure positivity:
  \begin{equation} \label{eq:high-deg-perturb}
    \tau \defas \nu^{\otimes k} + \frac{1-\delta}{q^k} \sum_{G \in \cW^k \setminus \cW^k_t} \Tr[\Delta G^\dagger] G \,,
  \end{equation}
  where $\delta \defas \eps\gamma^{-k}$.

  Then $\Tr \tau = \Tr \nu^{\otimes k} = 1$ because non-identity Heisenberg--Weyl operators are traceless.
  Also $\tau^\dagger = \tau$ because $\cW^k \setminus \cW^k_t$ is invariant under taking adjoint.
  Rewriting
  \[ \tau = \delta \nu^{\otimes k} + (1-\delta) \rho - \frac{1-\delta}{q^k}
  \sum_{G \in \cW^k_t} \Tr[\Delta G^\dagger] G \,, \]
  we have
  \begin{align*}
    \lambda_{\min}(\tau)
    &\geq \delta \lambda_{\min}(\nu^{\otimes k})
    - \frac{1-\delta}{q^k} \sum_{G \in \cW^k_t} \abs{\Tr \Delta G^\dagger} \norm{G}_\infty \\
    &\geq \frac1{q^k} \Paren{\delta \gamma^k - (1-\delta) \eps } \geq 0 \,,
  \end{align*}
  where we have used $\lambda_{\min}(\nu) \geq \gamma/q$, $\delta = \eps\gamma^{-k}$, and \cref{eq:almost-twise-indep}.
  Therefore $\tau$ is a density matrix.

  For any subset $W \in {[k] \choose \leq t}$ and any Heisenberg--Weyl $G \in \cW^k \setminus \cW^k_t$, the partial
  trace $\Tr_{\overline W} G$ over the complement $\overline W = [k] \setminus W$ is zero.
  \cref{eq:high-deg-perturb} implies $\Tr_{\overline W} \tau = \Pr_{\overline W} \nu^{\otimes k} = \nu^{\otimes W}$, so
  $\tau$ is $t$-wise $\nu$-independent.

  Finally,
  \[ \rho - \tau = \delta \Delta + \frac{1-\delta}{q^k} \sum_{G \in \cW^k_t} \Tr[\Delta G^\dagger] G \,. \]
  In terms of Schatten $1$-norm, the first term on the right is at most $2\delta$, and the second is at most
  $\eps$.
  Therefore $D(\rho, \tau) \leq (2\delta + \eps)/2 \leq 2\gamma^{-k} \eps$ because $\gamma \leq 1$.
\end{proof}

We now prove \cref{thm:almost-twise-indep} without assuming the bias of $\nu$, by first massaging $\nu$ into a somewhat
unbiased $\tilde\nu$ so that \cref{lem:almost-twise-indep-unbiased} applies.
Our proof is a quantized variant of \cite[Theorem~4.11]{ChanDorsiXu26} and \cite[Theorem~B.15]{ACLM26} that fuse light
probability masses into heavy ones.
To redistribute the mass, our proof uses quantum channels $\Phi^{\otimes k}$ and $\Psi^{\otimes k}$, whose design and
analysis are more delicate than the classical setting.

\begin{proof}[Proof of \cref{thm:almost-twise-indep}]
  Let $\gamma \defas \eps^{1/(k+1)}$.
  We will assume $t \geq 1$, because the theorem is vacuous for $t = 0$.
  Let $H$ be the span of the eigenvectors of $\nu$ with eigenvalue at least
  $\gamma$, $L = H^{\perp}$, $d_H \defas \dim H$, and $\Pi_H, \Pi_L$ the projections to $H, L$ respectively.
  We will also assume $\Tr [\nu\Pi_L] > 0$, for otherwise the theorem follows from
  \cref{lem:almost-twise-indep-unbiased}.

  Consider the single-qudit channel
  \[ \Phi(\sigma) \defas \Pi_H\sigma\Pi_H + \frac{\Pi_H}{d_H}\,\Tr[\sigma\Pi_L]\,,
  \]
  which keeps the heavy part of $\sigma$ --- including all heavy--heavy coherence --- and sweeps the light mass
  $\Tr[\Pi_L\sigma]$ evenly onto $H$.
  $\Phi$ preserves trace because $\mathrm{Tr}\,\Phi(\sigma) = \mathrm{Tr}[\sigma\Pi_H] + \mathrm{Tr}[\sigma\Pi_L]
  = \mathrm{Tr}\,\sigma$.
  $\Phi$ is also completely prositive, hence a channel.
  We set $\tilde\nu \defas \Phi(\nu)$ and $\tilde\rho \defas \Phi^{\otimes k}(\rho)$.

  We now show that $\tilde\rho$ is close to $\rho$.
  We have
  \[ \alpha_\perp \defas \Tr[\rho(I - \Pi_H^{\otimes k})] \leq \sum_{i \in [k]} \Tr[\rho \Pi^{(i)}_L]
  \overset{\eqref{eq:almost-twise-indep}}\leq \eps + \sum_{i \in [k]} \Tr[\nu \Pi_L] \leq \eps + kq\gamma. \]
  The first inequality is due to $I - \Pi_H^{\otimes k} \preccurlyeq \sum_{i=1}^k \Pi_L^{(i)}$ and
  $\rho \succcurlyeq 0$; the middle inequality is because the $i$-coordinate heavy projection $\Pi_H^{(i)} \in \cL^k_1$
  (and hence \cref{eq:almost-twise-indep} applies); the last inequality uses the fact that every light eigenvalue of
  $\nu$ is less than $\gamma$ and there are at most $q$ of them.
  The all-heavy Kraus operator of $\Phi^{\otimes k}$ is $\Pi_H^{\otimes k}$, and \cref{claim:fuchs-vandegraaf} yields
  \[ D(\rho,\tilde\rho) \leq \sqrt{1 - (1 - \alpha_\perp)^2} \leq \sqrt{2\alpha_\perp}
  \leq \sqrt{2(kq\gamma + \eps)} \lesssim_{k,q} \sqrt\gamma \,. \]

  Next, we show $\tilde\rho$ is almost $t$-wise $\tilde\nu$-independent.
  Using $\Phi^{\otimes k}(\nu^{\otimes k}) = \tilde\nu^{\otimes k}$, for every $W \subseteq [k]$,
  \cref{lem:partial-trace-channel} implies
  \[ \Tr_{\overline W} (\tilde\rho - \tilde\nu^{\otimes k}) = \Phi^{\otimes W} (\rho_W - \nu^{\otimes W}) \,. \]
  $\Phi$ and hence $\Phi^{\otimes W}$ is a trace-norm contraction.
  Thus
  \[ \norm{\Tr_{\overline W} (\tilde\rho - \tilde\nu^{\otimes k})}_1 \leq \norm{\rho_W - \nu^{\otimes W}}_1 \,. \]
  Summing over all $W \in {[k] \choose \leq t}$,
  \[ \tilde\eps \defas \sum_{W \in {[k] \choose \leq t}} (q^2-1)^{|W|} \norm{\tilde\rho_W - \tilde\nu^{\otimes W}}_1
    \leq \sum_{W \in {[k] \choose \leq t}} (q^2-1)^{|W|} \norm{\rho_W - \nu^{\otimes W}}_1
  \overset{\eqref{eq:almost-twise-indep}}= \eps \,. \]

  We have $\tilde\nu = \Pi_H\nu\Pi_H + \frac{\Tr[\Pi_L\nu]}{d_H}\Pi_H \succcurlyeq{} \gamma\,\Pi_H$, so $\tilde\nu$
  has minimum eigenvalue at least $\gamma$ on its support $H$.
  Both $\tilde\rho$ and $\tilde\nu^{\otimes k}$ are supported on $H^{\otimes k}$,
  so \cref{lem:almost-twise-indep-unbiased} restricted to the subspace $H^{\otimes k}$ produces a $t$-wise
  $\tilde\nu$-independent state $\tilde\tau$ with $D(\tilde\rho,\tilde\tau) \leq 2\gamma^{-k}\tilde\eps \leq 2\gamma\,$.

  To recover $\nu$-independence, we now undo the mass transfer of $\Phi$ with a channel $\Psi$ satisfying
  $\Psi(\tilde\nu)=\nu$, and show that it moves $\tilde\tau$ by only $O_{k,q}(\sqrt\gamma)$.
  Write $\alpha \defas \Tr[\Pi_L\nu]$, $\nu_H \defas \Pi_H\nu\Pi_H$, and $\nu_L \defas \Pi_L\nu\Pi_L$, so that
  $\nu = \nu_H + \nu_L$ and $\tilde\nu = \nu_H + \frac\alpha{d_H}\Pi_H$.
  Define the single-qudit channel
  \[ \Psi(\sigma) \defas A_H\sigma A_H^\dagger + \frac{\nu_L}{\Tr \nu_L} \Tr[\sigma(I-A_H^2)] \,, \]
  where $A_H \defas \sqrt{\nu_H\,\tilde\nu^{-1}} = \sqrt\nu\,\tilde\nu^{-1/2}\Pi_H$ is the ``heavy'' Kraus operator,
  which is positive semidefinite by commutivity of $\nu_H$ and $\tilde\nu$.
  $\Psi$ preserves trace because $\Tr \Psi(\sigma) = \Tr [\sigma A_H^2] + \Tr [\sigma (I - A_H^2)] = \Tr\sigma$.
  $\Psi$ is also completely positive, hence a channel.
  Also, $\Psi(\tilde\nu) = \nu_H + \nu_L = \nu$.

  Now $\tau \defas \Psi^{\otimes k}(\tilde\tau)$ is $t$-wise $\nu$-independent:
  For any $W\in {[k]\choose \leq t}$, its marginal on $W$ is
  $\Tr_{\overline W}\tau = \Psi^{\otimes W}(\Tr_{\overline W} \tilde\tau) = \Psi^{\otimes W}(\tilde\nu^{\otimes W})
  = \Psi(\tilde\nu)^{\otimes W} = \nu^{\otimes W}$, where the first equality uses \cref{lem:partial-trace-channel} and
  trace-preservation of $\Psi$, and the second equality uses $t$-wise $\tilde\nu$-independence of $\tilde\tau$.

  It remains to show that $\tau$ is close to $\tilde\tau$; this mirrors the bound on $D(\rho,\tilde\rho)$ above.
  Since $\nu_H$ and $\tilde\nu$ commute and $\tilde\nu \succcurlyeq \nu_H$, we have $\sqrt{\nu_H\tilde\nu}
  \succcurlyeq \nu_H$.
  Hence
  \[ \Tr[\tilde\nu(I - A_H)] = 1 - \Tr\sqrt{\nu_H\tilde\nu} \le 1 - \Tr\nu_H = \alpha \le q\gamma\,. \]
  Denote by $(I-A_H)^{(i)} = I^{\otimes (i-1)} \otimes (I - A_H) \otimes I^{\otimes (k-i)}$. Then
  \[ \Tr[\tilde\tau(I - A_H^{\otimes k})] \leq \sum_{i \in [k]} \Tr[\tilde\tau(I-A_H)^{(i)}]
  = \sum_{i \in [k]} \Tr[\tilde\nu (I-A_H)] \leq k\alpha \,, \]
  where the first inequality uses $I - A_H^{\otimes k} \preccurlyeq \sum_{i \in [k]} (I-A_H)^{(i)}$ and
  $\tilde\tau \succcurlyeq 0$, and the equality uses $t$-wise $\tilde\nu$-independence of $\tilde\tau$.
  Therefore
  \[ D(\tilde\tau,\tau) \leq \sqrt{1-(1-k\alpha)^2} \leq \sqrt{2k\alpha} \lesssim_{k,q} \sqrt\gamma \,. \]

  Combining,
  \[ D(\rho,\tau) \le D(\rho,\tilde\rho) + D(\tilde\rho,\tilde\tau) + D(\tilde\tau,\tau)
  \lesssim_{k,q} \sqrt\gamma + \gamma + \sqrt\gamma \lesssim \sqrt\gamma \,. \qedhere \]
\end{proof}

\begin{claim} \label{claim:fuchs-vandegraaf}
  For any state $\rho$ and quantum channel $\Phi$ with Kraus operators $K_j$,
  \[ D(\rho,\Phi(\rho)) \leq \sqrt{1-\textstyle \sum_j \abs{\Tr[K_j \rho]}^2} \,. \]
\end{claim}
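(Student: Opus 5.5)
The plan is to compare $\rho$ with $\Phi(\rho)$ through fidelity and then apply the Fuchs--van de Graaf inequality $D(\rho,\sigma)\le\sqrt{1-F(\rho,\sigma)}$, where $F(\rho,\sigma)=\norm{\sqrt\rho\sqrt\sigma}_1^2$. The claim then reduces to the lower bound
\[ F(\rho,\Phi(\rho)) \geq \sum_j \abs{\Tr[K_j\rho]}^2 . \]

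To prove this bound, first purify $\rho$ as $\ket\psi$ on the system together with a reference $R$. Next, apply the Stinespring isometry $V\ket\psi=\sum_j (K_j\otimes I_R)\ket\psi\otimes\ket j_E$. Tracing out $E$ leaves $\Phi(\rho)$ on the system, so $V\ket\psi$ purifies $(\Phi\otimes\mathrm{id}_R)(\ket\psi\bra\psi)$.

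For the lower bound, use two standard facts about fidelity:
\begin{itemize}
\item it is monotone under the partial trace $\Tr_R$;
\item it is given by the maximum overlap over extensions (Uhlmann's theorem).
\end{itemize}
These give
\[ F(\rho,\Phi(\rho)) \ge F\bigl(\ket\psi\bra\psi,(\Phi\otimes\mathrm{id})(\ket\psi\bra\psi)\bigr) = \bra\psi (\Phi\otimes\mathrm{id})(\ket\psi\bra\psi)\ket\psi . \]
The equality holds because one argument is pure. Expanding the right-hand side in Kraus operators gives
\[ \sum_j \abs{\bra\psi (K_j\otimes I)\ket\psi}^2 = \sum_j \abs{\Tr[K_j\rho]}^2 . \]
This is the standard entanglement-fidelity lower bound, so the claim follows by combining it with Fuchs--van de Graaf.

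Alternatively, one could skip purification by writing $\Tr[K_j\rho]=\Tr[\sqrt\rho\,K_j\sqrt\rho]$ and using Cauchy--Schwarz in a variational characterization of $\sqrt F$. The purification route is cleaner, and no step should be a real obstacle.

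The only point to get right is the convention for $F$. The squared fidelity must be the one appearing in $D\le\sqrt{1-F}$, and the entanglement-fidelity identity produces exactly squared overlaps. This matters because a single $\abs{\Tr[K_j\rho]}$ lower-bounds $\sqrt F$, and using it in place of $F$ would lose a square. With the squared convention the inequality comes out exactly as stated.

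This matches how the claim is used in the proof of \cref{thm:almost-twise-indep}: keeping only the all-heavy Kraus operator $\Pi_H^{\otimes k}$ gives the bound $\sqrt{1-(1-\alpha_\perp)^2}$.
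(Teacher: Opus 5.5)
Your proposal is correct and follows essentially the same route as the paper. Both arguments purify $\rho$, identify the entanglement fidelity $\bra\psi(\Phi\otimes\mathrm{id})(\ket\psi\bra\psi)\ket\psi=\sum_j\abs{\Tr[K_j\rho]}^2$, and finish with Fuchs--van de Graaf. The only difference is where monotonicity under $\Tr_R$ enters: you apply it to fidelity and then use Fuchs--van de Graaf on the reduced states, while the paper applies Fuchs--van de Graaf to the purified pair and then uses monotonicity of trace distance. Both orderings are valid.
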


\begin{proof}
  Let $\psi$ be a purification of $\rho$ on a reference system $R$, and $\sigma_{RA} = (I_R \otimes \Phi)
  (\ket\psi\bra\psi)$ be the joint output state.
  Since $\psi$ is pure, its Uhlmann fidelity $F(\ket\psi\bra\psi, \sigma_{RA})$ with $\sigma_{RA}$ equals the overlap
  $\bra\psi\sigma_{RA}\ket\psi$.
  The last expression is Schumacher entanglement fidelity and equals $\sum_j \abs{\Tr[K_j \rho]}^2$
  \cite[Eq.~(10)]{schumacher1996sending}, where $K_j$ are the Kraus operators of $\Phi$.
  Monotonicity of trace distance and Fuchs--van de Graaf inequality imply
  \[ D(\rho,\Phi(\rho)) \leq D(\ket\psi\bra\psi, \sigma_{RA}) \leq \sqrt{1-F(\ket\psi\bra\psi, \sigma_{RA})}
  = \sqrt{1-\textstyle \sum_j \abs{\Tr[K_j \rho]}^2} \,. \qedhere \]
\end{proof}

\section{Strong Refutation from Bow-Ties}
\label{sec:bowtie-bounds}

\renewcommand{\theHlemma}{\thelemma}
\renewcommand{\theHdefinition}{\thedefinition}


In this section, we formalize the observation that a suitable \emph{bow-tie}---a pair of quantum $3$-SAT clauses
whose scopes intersect in exactly one qubit---forces a local violation for pairwise-uniform states.
The argument has three parts:
\begin{enumerate}
	\item In \cref{sec:exact-product-bowties}, we adapt the dominating-polynomial viewpoint from classical CSPs to a
	  bow-tie whose forbidden projectors are rank-one products.  The sum of the two clause projectors dominates an
	  explicit degree-$2$ operator.  Noncommutation on the shared qubit makes its constant term strictly positive, so
	  every state on the union whose within-clause one- and two-qubit marginals are maximally mixed incurs positive
	  violation on the pair.
	\item In \cref{sec:useful-bowties}, we transfer this inequality to rank-one forbidden projectors that are close to
	  product projectors and prove that a random instance contains many such \emph{useful} bow-ties.  Although a Haar-random
	  forbidden state is exactly product with probability zero, any fixed neighborhood of the product locus has
	  positive probability in fixed dimension.
	\item In \cref{sec:near-uniform-refutation}, we place the useful bow-ties in a fractional matching and aggregate
	  their local inequalities.  This gives a constant violation for states close to pairwise uniform, yielding a strong refutation. \end{enumerate}

\subsection{Refutation for Rank-One  Bow-Ties}
\label{sec:exact-product-bowties}

\begin{lemma}[Exact product bow-tie bound]
  \label{lem:exact-product-bowtie}
  Let $e,f$ be two quantum $3$-SAT clauses on variables $I_e, I_f \subseteq [n]$ with $I_e\cap I_f=\{i\}$ and rank-one product forbidden projectors
  \[
    \Pi_g=\bigotimes_{j\in I_g}P_{g,j},
    \qquad
    U_{g,j}=2P_{g,j}-I,
    \qquad g\in\{e,f\}.
  \]
  Then, as operators on incident variables $I_e\cup I_f$,
  \[
    \begin{aligned}
      \Pi_e+\Pi_f
      \succeq
      \frac{2-\sqrt{2+|\Tr(U_{e,i}U_{f,i})|}}8 I +\frac18\sum_{g\in\{e,f\}}\left(
        \sum_{j\in I_g}U_{g,j}
        +\sum_{\{j,j'\}\in{I_g\choose2}}U_{g,j}U_{g,j'}
      \right).
    \end{aligned}
  \]
\end{lemma}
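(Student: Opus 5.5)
The plan is to expand both projectors in the ``reflection basis'' and reduce the inequality to an operator-norm bound on a sum of two cubic terms. For each qubit factor, $P_{g,j}$ is a rank-one projector on $\C^2$, so $U_{g,j}=2P_{g,j}-I$ is a Hermitian reflection. It satisfies $U_{g,j}^2=I$ and $\Tr U_{g,j}=0$; equivalently $U_{g,j}=\vec n_{g,j}\cdot\vec\sigma$ for a unit Bloch vector $\vec n_{g,j}\in\R^3$. Writing $P_{g,j}=(I+U_{g,j})/2$ and multiplying out the tensor product over the three qubits of $I_g$ gives
\[
  \Pi_g=\frac18\Bigl(I+\sum_{j\in I_g}U_{g,j}+\sum_{\{j,j'\}\in{I_g\choose2}}U_{g,j}U_{g,j'}+C_g\Bigr),
  \qquad C_g\defas\prod_{j\in I_g}U_{g,j}.
\]
Here all factors act on distinct qubits and commute. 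Subtracting the right-hand side of the claimed inequality, the linear and quadratic terms cancel exactly. The claim becomes equivalent to
\[
  \sqrt{2+|\Tr(U_{e,i}U_{f,i})|}\,I + C_e + C_f \succeq 0,
\]
that is, $\lambda_{\min}(C_e+C_f)\geq -\sqrt{2+|\Tr(U_{e,i}U_{f,i})|}$. It therefore suffices to show $\norm{C_e+C_f}_\infty\le\sqrt{2+|\Tr(U_{e,i}U_{f,i})|}$.

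To bound this norm, write $C_e=X\,U_{e,i}$ and $C_f=U_{f,i}\,Y$. Here $X$ is the product of the two reflections of $e$ on $I_e\setminus\{i\}$, and $Y$ is the analogous product for $f$ on $I_f\setminus\{i\}$. Since $I_e\cap I_f=\{i\}$, the operators $X$ and $Y$ act on disjoint qubits outside $i$, so they commute with each other and with everything on qubit $i$; each squares to $I$. Then $C_e^2=C_f^2=I$ and
\[
  (C_e+C_f)^2=2I+XY\otimes\{U_{e,i},U_{f,i}\}.
\]
For traceless qubit reflections, the Pauli anticommutation relations give $\{\vec n\cdot\vec\sigma,\vec m\cdot\vec\sigma\}=2(\vec n\cdot\vec m)I$, while $\Tr(UV)=2\,\vec n\cdot\vec m$. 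Hence $\{U_{e,i},U_{f,i}\}=\Tr(U_{e,i}U_{f,i})\,I$, which is real. So $(C_e+C_f)^2=2I+\Tr(U_{e,i}U_{f,i})\,XY$. Because $XY$ is a Hermitian unitary with $\norm{XY}_\infty=1$, we get $\norm{C_e+C_f}_\infty^2=\norm{(C_e+C_f)^2}_\infty\le 2+|\Tr(U_{e,i}U_{f,i})|$. This yields the inequality after restoring the factor $\frac18$.

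The main point of care is the identity $\{U_{e,i},U_{f,i}\}=\Tr(U_{e,i}U_{f,i})\,I$. It is where rank-one on a single qubit, and hence $\Tr U=0$, is used. It is also where noncommutation enters: when $U_{e,i}=\pm U_{f,i}$ the bound degenerates to the trivial $2$, and otherwise the constant term is strictly positive. The remaining steps are routine bookkeeping: check that the expansion of $\Pi_g$ matches the stated linear and quadratic sums term by term, and interpret all operators as acting on $I_e\cup I_f$ with identities padded.
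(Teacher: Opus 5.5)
Your proposal is correct and follows the paper's proof essentially step for step. You expand each product projector via $P=(I+U)/2$, cancel the linear and quadratic terms, and bound the sum of the two cubic reflections by squaring it and using $\{U_{e,i},U_{f,i}\}=\Tr(U_{e,i}U_{f,i})I$. Your Bloch-vector justification of that anticommutator identity supplies a step the paper only asserts.
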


\begin{proof} Similar to Fourier analysis for classical CSP in the $\{\pm 1\}$ basis,
  expanding the two product projectors gives
  \[
    \Pi_e+\Pi_f
    =\frac18\left(
      2I+\sum_{g\in\{e,f\}}\left(
        \sum_{j\in I_g}U_{g,j}
        +\sum_{\{j,j'\}\in{I_g\choose2}}U_{g,j}U_{g,j'}
        +\prod_{j\in I_g}U_{g,j}
      \right)
    \right).
  \]
  The shared-qubit reflections satisfy
  $\{U_{e,i},U_{f,i}\}=\Tr(U_{e,i}U_{f,i})I$, and the product of the remaining four reflections is again a reflection
  because they act on distinct qubits.  Consequently,
  \begin{align*}
    \left(\prod_{j\in I_e}U_{e,j}+\prod_{j\in I_f}U_{f,j}\right)^2
    &=2I+\Tr(U_{e,i}U_{f,i})
      \prod_{j\in I_e\setminus\{i\}}U_{e,j}
      \prod_{j\in I_f\setminus\{i\}}U_{f,j} \\
    &\preceq\left(2+|\Tr(U_{e,i}U_{f,i})|\right)I.
  \end{align*}
  Thus the sum of the two cubic terms is at least
  $-\sqrt{2+|\Tr(U_{e,i}U_{f,i})|}\,I$.  Substituting this bound into the expansion above proves
  the lemma.
\end{proof}

With this expansion, we are ready to introduce our key observation - this provides a non-trivial lower bound for the violation mass if the two constraints do not commute on the shared qubit! Before we elaborate on this, we first show that in a random instance, it suffices for us to consider these clean local structures.

\subsection{Finding Useful Bow-Ties}
\label{sec:useful-bowties}

\paragraph{Random model and parameters.}
Throughout this subsection, fix an integer $k\geq2$ and a density $\Delta>0$, both independent of $n$.
For every sufficiently large $n$, sample the instance $\cI$ by including each unordered scope
$e\in{[n]\choose k}$ independently with probability
\(
  p\defas\frac{\Delta n}{{n\choose k}}.
\)
For every included scope, independently choose a uniformly random ordering and a forbidden projector
$\Pi_e=\ket{\psi_e}\bra{\psi_e}$, where $\ket{\psi_e}$ is Haar-random in $(\C^2)^{\otimes k}$; let $\cE$ denote the
resulting set of ordered constraints.  Thus $|\cE|\sim\operatorname{Bin}({n\choose k},p)$ and
$\E|\cE|=\Delta n$.  All probabilities below are over the random scopes, their orderings, and their forbidden
projectors.
Fix constants $B>1$ and $c,\xi>0$, independent of $n$, satisfying
\[
  2\xi<c<\frac{2-\sqrt 2}{8}.
\]
The choices $B=100$, $c=1/20$, and $\xi=1/1000$ satisfy these requirements.

\begin{definition}[Useful] \label{def:useful}
  A pair of constraints $(e,f)$ of $\cI$ sharing exactly one vertex $i$ is a useful bow-tie if there are rank-one
  product projectors
  \[
    \widehat\Pi_g=\bigotimes_{j\in I_g}P_{g,j},
    \qquad
    U_{g,j}=2P_{g,j}-I,
    \qquad g\in\{e,f\},
  \]
  satisfying:
  \begin{enumerate}
    \item $\norm{\Pi_g-\widehat\Pi_g}_{\mathrm{op}}\leq\xi$ for $g\in\{e,f\}$.
    \item $\bigl(2-\sqrt{2+|\Tr(U_{e,i}U_{f,i})|}\bigr)/8\geq c$.
    \item Every vertex $v\in I_e\cup I_f$ belongs to at most $Bk\Delta$ constraints in
      $\cE\setminus\{e,f\}$.
  \end{enumerate}
  For each useful bow-tie, fix one such pair of product witnesses.
\end{definition}

\begin{lemma}[Useful bow-tie domination]
  \label{lem:useful-bowtie-domination}
  Let $(e,f)$ be a useful bow-tie in a quantum $3$-SAT instance.  Then, as operators on $I_e\cup I_f$,
  \begin{equation} \label{eq:useful-bowtie-domination}
    \Pi_e+\Pi_f
    \succeq
    (c-2\xi)I+\frac18\sum_{g\in\{e,f\}}\left(
      \sum_{j\in I_g}U_{g,j}
      +\sum_{\{j,j'\}\in{I_g\choose2}}U_{g,j}U_{g,j'}
    \right).
  \end{equation}
\end{lemma}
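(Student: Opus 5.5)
The plan is to reduce to \cref{lem:exact-product-bowtie}, applied to the fixed product witnesses, and pay for the approximation with the operator-norm closeness in condition~1 of \cref{def:useful}.

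\textbf{Step 1: the witnesses.} Let $\widehat\Pi_e,\widehat\Pi_f$ be the rank-one product witnesses fixed for the useful bow-tie $(e,f)$, with reflections $U_{g,j}=2P_{g,j}-I$. Their scopes still satisfy $I_e\cap I_f=\{i\}$. So \cref{lem:exact-product-bowtie} applies verbatim and gives, on $I_e\cup I_f$,
\[
  \widehat\Pi_e+\widehat\Pi_f \succeq \frac{2-\sqrt{2+|\Tr(U_{e,i}U_{f,i})|}}{8}\, I + \frac18\sum_{g\in\{e,f\}}\Bigl(\sum_{j\in I_g}U_{g,j}+\sum_{\{j,j'\}\in{I_g\choose 2}}U_{g,j}U_{g,j'}\Bigr).
\]
By condition~2 of \cref{def:useful}, the scalar coefficient is at least $c$. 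Since $I\succeq 0$, we may lower it to exactly $c$.

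\textbf{Step 2: transfer to the true projectors.} Each $\Pi_g-\widehat\Pi_g$ is Hermitian, acts on $I_g$, and is tensored with the identity on the other qubits of $I_e\cup I_f$. Tensoring with the identity preserves operator norm, so condition~1 gives
\[
  \Pi_g\succeq \widehat\Pi_g-\norm{\Pi_g-\widehat\Pi_g}_{\mathrm{op}}\,I\succeq \widehat\Pi_g-\xi I.
\]
Summing over $g\in\{e,f\}$ gives $\Pi_e+\Pi_f\succeq \widehat\Pi_e+\widehat\Pi_f-2\xi I$. Combining this with Step 1 yields \eqref{eq:useful-bowtie-domination} with constant $c-2\xi$.

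\textbf{Main point to check.} The only delicate step is bookkeeping. The $U_{g,j}$ appearing in \eqref{eq:useful-bowtie-domination} must be understood as the reflections of the fixed witnesses, not of $\Pi_g$ itself, since $\Pi_g$ is generally not a product. The operator-norm perturbation must also be taken on the full space $I_e\cup I_f$, which the tensor-with-identity remark settles. Nothing beyond \cref{lem:exact-product-bowtie} and the definition is needed. Condition~3 plays no role here and is used only in later aggregation.
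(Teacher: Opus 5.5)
Your proposal is correct and follows essentially the same route as the paper: apply \cref{lem:exact-product-bowtie} to the fixed product witnesses, use condition~2 of \cref{def:useful} to bound the scalar coefficient below by $c$, and use condition~1 in the form $\Pi_g\succeq\widehat\Pi_g-\xi I$ to absorb the $2\xi$ loss. Your remarks that the $U_{g,j}$ are the witnesses' reflections and that condition~3 plays no role here are also accurate.
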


\begin{proof}
  The operator-norm bounds in \cref{def:useful} imply
  $\Pi_g\succeq\widehat\Pi_g-\xi I$ for $g\in\{e,f\}$.  Applying
  \cref{lem:exact-product-bowtie} to the two witnesses and using condition~2 of \cref{def:useful} therefore gives
  \[
    \Pi_e+\Pi_f
    \succeq\widehat\Pi_e+\widehat\Pi_f-2\xi I
    \succeq(c-2\xi)I+\frac18\sum_{g\in\{e,f\}}\left(
      \sum_{j\in I_g}U_{g,j}
      +\sum_{\{j,j'\}\in{I_g\choose2}}U_{g,j}U_{g,j'}
    \right).
  \]
\end{proof}

\begin{lemma} \label{lem:many-useful-bowties}
  With high probability over the random instance, $\cI$ contains $\Omega(\Delta^2 k^2 n)$ useful bow-ties.
\end{lemma}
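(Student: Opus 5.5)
We count bow-ties by a first-moment computation and then prove concentration, handling the three conditions of \cref{def:useful} separately. Let $N$ be the number of ordered pairs $(e,f)$ of constraints whose scopes share exactly one vertex. The number of potential such pairs of scopes is $n\binom{n-1}{k-1}\binom{n-k}{k-1}$. Each pair is present with probability $p^2$, where $p=\Delta n/\binom nk$. Hence
\[ \E N = n\binom{n-1}{k-1}\binom{n-k}{k-1}p^2 = (1+o(1))\,k^2\Delta^2 n. \]
A second-moment (or Janson/bounded-differences) computation shows $N=(1+o(1))k^2\Delta^2n$ whp. Two pairs are dependent only if they share a scope, and the number of such configurations is $O(\Delta^3 n)$, which gives $\Var N=O(n)$.

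\textbf{Conditions 1 and 2.} These depend only on the Haar-random states $\psi_e,\psi_f$, which are independent of the hypergraph. Fix concrete product witnesses, for instance $P_{e,j}$ and $P_{f,j}$ built so that $U_{e,i}$ and $U_{f,i}$ are the reflections $Z$ and $X$ on the shared qubit. Then $\Tr(U_{e,i}U_{f,i})=0$, so the constant in condition 2 is $(2-\sqrt2)/8>c$. By continuity, there is $\zeta>0$ such that any product projectors with shared-qubit factors within $\zeta$ of these also satisfy condition 2. The set of pure states $\psi$ with $\norm{\ket\psi\bra\psi-\widehat\Pi}_{\mathrm{op}}\le\xi$ for some product $\widehat\Pi$ whose shared-qubit factor lies in a $\zeta$-neighbourhood of the $Z$-eigenvector contains an open subset of the sphere $\S^{2^k-1}(\C)$, and similarly for the $X$-eigenvector. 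The dimension is fixed, so each such set has Haar measure at least some $\alpha=\alpha(\xi,\zeta,k)>0$, independent of $n$ and $\Delta$. The forbidden states are independent across constraints, so each bow-tie satisfies conditions 1 and 2 with probability at least $\alpha^2$. One subtlety is that the shared vertex sits in different positions of the two ordered tuples. This does not matter, because the Haar measure is invariant under qubit permutations.

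\textbf{Condition 3.} The degree of a fixed vertex in $\cE\setminus\{e,f\}$ is a sum of independent indicators with mean at most $k\Delta$, so by Markov or Chernoff it exceeds $Bk\Delta$ with probability at most $1/B$. A union bound over the $2k-1$ vertices of the bow-tie shows condition 3 fails with probability at most $(2k-1)/B$. We fix $B$ large enough relative to $k$; with $k=3$ and $B=100$ this failure probability is at most $0.05$. Conditioning on the bow-tie being present changes degrees by at most a constant, so this bound survives the conditioning.

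\textbf{Combining.} The expected number of useful bow-ties is at least $(\alpha^2-(2k-1)/B)\,k^2\Delta^2n\cdot(1-o(1))$. This needs $\alpha^2>(2k-1)/B$. If $\alpha$ turns out to be smaller, we do not simply subtract the failure probability. Instead we use that conditions 1 and 2 are independent of condition 3, since the states are independent of the hypergraph. Then the expected count is at least $\alpha^2(1-(2k-1)/B)\,\E N=\Omega(\Delta^2k^2n)$.

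\textbf{Concentration.} This is the main obstacle, because condition 3 is a non-local degree condition. We plan to apply a bounded-differences argument to the count $Y$ of useful bow-ties, after truncating to the whp event that the maximum degree is $O(\log n)$. Alternatively, a second-moment bound suffices: $\Var Y=O(n\,\mathrm{polylog}\,n)$, because two bow-ties are correlated only through shared scopes or through degree counts of nearby vertices, and each bow-tie is correlated with only polylogarithmically many others. Chebyshev then gives $Y\ge\frac12\E Y$ whp, which proves the lemma.
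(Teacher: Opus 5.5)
Your first-moment bound is fine, and somewhat more elementary than the paper's. The paper computes $\E X$ exactly via joint Poisson convergence of the $2k-1$ degrees. You only need a lower bound, which you get from Markov, a union bound, and the independence of the Haar states from the hypergraph. (This forces $B>2k-1$, whereas the paper allows any fixed $B>1$ by taking $\Delta$ large and using a Poisson Chernoff bound.) The genuine gap is the concentration step. You identify it as the main obstacle but do not carry it out, and the justification you give for $\Var Y=O(n\,\mathrm{polylog}\,n)$ is wrong. Condition~3 for $(e,f)$ is a function of the indicators of all $\Theta(n^{k-1})$ potential hyperedges meeting $e\cup f$. Hence, for \emph{every} pair of vertex-disjoint bow-ties, the two usefulness events share the $\Theta(n^{k-2})$ potential hyperedges meeting both $e\cup f$ and $e'\cup f'$. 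They are therefore dependent for all pairs, not just for polylogarithmically many ``nearby'' ones; the locality you correctly used for $N$ does not transfer to $Y$. Vertex-disjoint pairs carry essentially all of $\E[Y^2]=\Theta(n^2)$, so no dependency-graph count can give $\Var Y=o(n^2)$. What is missing is a quantitative statement that disjoint pairs are asymptotically independent.

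The paper supplies exactly this. Poisson convergence of the joint degree vector of the $4k-2$ vertices gives $Y_{\mathrm{disj}}\sim(\E X)^2$, and overlapping pairs contribute $O(Y_{\mathrm{disj}}/n)$ by counting. Within your framework there is a short fix. Let $F$ be the event that some hyperedge meets both $e\cup f$ and $e'\cup f'$. Then $\Pr[F]\le(2k-1)^2\binom{n-2}{k-2}p=O(\Delta/n)$. Off $F$, the two events are functions of disjoint sets of independent variables, so each disjoint configuration has covariance $O(p^4\Delta/n)$. Summed over $\Theta(n^{4k-2})$ configurations this is $O(\Delta^5 n)$. Overlapping configurations contribute $O(n)$ by counting, bounding their joint probability by $p^4$, $p^3$, or $p^2$ according to the number of distinct hyperedges. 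Together these give $\Var Y=O(n)$, and Chebyshev finishes. Your bounded-differences alternative also does not work as stated. Over the $\binom nk$ hyperedge indicators, the sum $\sum_i c_i^2$ in McDiarmid's inequality is far larger than $n^2$, and plain McDiarmid does not let you condition on a whp maximum-degree event. You would need a vertex-exposure martingale combined with a typical-bounded-differences inequality.
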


\begin{proof}
  Let $X$ be the number of ordered pairs of constraints $(e,f)$ of $\cI$ that are useful bow-ties.
  We prove concentration of $X$ using the second moment method.

  We first compute the expectation
  \[ \E X = \sum_{(e,f)} \Pr[(e,f) \text{ useful}] \,, \]
  where the sum is over all potential bow-ties $(e,f)$.
  There are ${n \choose k} \sim n^k/k!$ choices for $e$, $k$ choices for the common vertex, and then ${n-k \choose k-1}
  \sim n^{k-1}/(k-1)!$ choices for the remaining vertices of $f$.
  Each constraint appears with probability $p$.
  Consider two product projectors whose reflections on the shared qubit are orthogonal in the Hilbert--Schmidt inner
  product.  For this pair, the left-hand side of condition 2 in \cref{def:useful} equals $(2-\sqrt2)/8>c$.
  By continuity, conditions 1 and 2 hold throughout an open neighborhood of this pair, and the full support of Haar
  measure implies that they therefore hold jointly with probability $r=r(k,c,\xi)>0$, independent of $n$ and
  $\Delta$.
  Standard joint factorial moment calculations and \cite[Theorem~6.10]{JLR00} show that the degrees of the $2k-1$
  vertices in $e \cup f$ (excluding the contribution of $e$ and $f$) converge to independent Poisson random variables
  $\text{Poi}(k\Delta)$ with parameter $k\Delta$.
  Therefore condition 3 in \cref{def:useful} holds with probability $(1+o(1))u^{2k-1}$, where
  $u = \Pr[\text{Poi}(k\Delta) \leq Bk\Delta] > 0$.
  Since $B>1$, a Poisson Chernoff bound also gives $u\geq1/2$ for every sufficiently large $\Delta$.
  Thus
  \[ \E X \sim {n \choose k} k {n-k \choose k-1} p^2ru^{2k-1} \sim \Delta^2 k^2ru^{2k-1} n \,. \]

  We now compute the second moment
  \[ \E[X^2] = \sum_{(e,f),(e',f')} \Pr[(e,f),(e',f') \text{ both useful}] = Y_{\text{disj}} + Y_{\text{non-disj}}
  \,, \]
  where the sum is over all potential bow-ties $(e,f),(e',f')$, and $Y_{\text{disj}}$ denotes the contribution by
  $(e,f)$ and $(e',f')$ on disjoint vertices.
  We first compute $Y_{\text{disj}}$.
  There are ${n \choose k}{n-k \choose k}$ choices for $e,e'$, $k$ choices for the common vertex of $(e,f)$, $k$ choices
  for the common vertices of $(e',f')$.
  There are ${n-2k \choose k-1}{n-3k+1 \choose k-1}$ choices for the remaining vertices of $f$ and $f'$.
  As before, conditions 1 and 2 in \cref{def:useful} hold for $(e,f)$ with probability $r$, and also for $(e',f')$
  with probability $r$.
  The degrees of the $4k-2$ vertices in $e \cup f \cup e'\cup f'$ (excluding $e,f,e',f'$) converge to independent
  Poissons with parameter $k\Delta$, and condition 3 in \cref{def:useful} holds with probability $(1+o(1))u^{4k-2}$.
  Altogether, such vertex-disjoint $(e,f)$ and $(e',f')$ contribute
  \begin{align*}
    Y_{\text{disj}}
    &\sim {n \choose k} {n-k \choose k} k^2 {n-2k \choose k-1}{n-3k+1 \choose k-1} p^4 r^2u^{4k-2} \\
    &\sim \Delta^4k^4r^2u^{4k-2}n^2 \,.
  \end{align*}
  We now loosely upper bound the contribution $Y_{\text{non-disj}}$ to $\E[X^2]$ by $(e,f),(e',f')$ that are not
  vertex-disjoint.
  If $(e,f),(e',f')$ share exactly $\ell$ vertices, the number $N_\ell$ of choices of $(e,f),(e',f')$ is at most
  $O_k(1)^\ell/(n-3k)^\ell$ times the number $N_0$ of choices for vertex-disjoint $(e,f),(e',f')$.
  When $(e,f),(e',f')$ share $\ell \geq 1$ vertices but not hyperedges, they contribute at most
  \[ N_\ell p^4 r^2 \lesssim_{\Delta,k} Y_{\text{disj}}/n \,. \]
  When $(e,f),(e',f')$ share exactly one hyperedge, $\ell \geq k$.
  They contribute at most
  \[ N_\ell p^3 \lesssim_{\Delta,k} Y_{\text{disj}}/n \,. \]
  When $(e,f),(e',f')$ share both hyperedges, then they contribute at most $2 \E X$ (the factor $2$ accounts for
  swapping the roles of $(e',f')$).
  Summing the contributions of these cases,
  \begin{align*}
    Y_{\text{non-disj}} \leq \sum_{1 \leq \ell \leq 2k-1} O_{\Delta,k}(Y_{\text{disj}}/n)
    + 2\E X \lesssim_{\Delta,k} Y_{\text{disj}}/n \,.
  \end{align*}
  Therefore $\E[X^2] \sim (\E X)^2$.
  Chebyshev inequality implies $X=\Omega(\Delta^2k^2n)$ with high probability.  Dividing by two to forget the
  ordering of $(e,f)$ proves the claim.
\end{proof}

\subsection{Strong Refutation Near Pairwise Uniformity}
\label{sec:near-uniform-refutation}

We wrap up this section by combining the above into a strong refutation for states that are close to pairwise uniform.
Recall $\tilde\cI$ and $\tilde\Psi$ from \cref{def:eps-net-instance}.
\renewcommand{\theHclaim}{\theclaim}
\begin{claim}[Strong refutation near pairwise uniformity]
  \label{claim:fractional-matching-refutation}
  For every sufficiently small constant $\gamma>0$ and every sufficiently large constant $\Delta$, there is a constant
  $\beta_{\mathrm{unif}}(\gamma)>0$ such that, with high probability over a random quantum $3$-SAT instance
  $\cI=([n],\cE)$ of expected density $\Delta$, for any collection of pairwise $\nu$-independent clause states
  $\{\rho_{\tilde\psi}\}_{\tilde\psi \in \tilde\Psi}$ with $\norm{\nu-I/2}_1\leq\gamma$,
  \[ \val(\tilde\cI,\{\rho_{\tilde\psi}\})\le  1-\beta_{\mathrm{unif}}(\gamma) + 2\eps \,. \]
\end{claim}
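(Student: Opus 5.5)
The plan is to lower-bound the violated mass clause by clause, by pairing up clauses through useful bow-ties. The only obstacle is that the claim gives us one $3$-qubit state $\rho_{\tilde\psi}$ per constraint type, not a joint state on the five qubits of a bow-tie. I will manufacture such a joint state by a Petz-type gluing across the shared qubit. Pairwise $\nu$-independence should be exactly what makes this gluing preserve every within-clause one- and two-qubit marginal.

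\emph{Step 1: gluing.} Fix a useful bow-tie $(e,f)$ from \cref{lem:many-useful-bowties}, with $I_e=\{a,b,c\}$, $I_f=\{c,d,e'\}$ and shared qubit $c$. Place $\alpha\defas\rho_{\tilde\psi_e}$ on $I_e$ and $\beta\defas\rho_{\tilde\psi_f}$ on $I_f$, each in its clause's ordering. Wherever $c$ sits in each scope, its reduced state is $\nu$; for small $\gamma$, $\nu$ is full rank. Define
\[ \sigma\defas \beta^{1/2}\,\nu_c^{-1/2}\,\alpha\,\nu_c^{-1/2}\,\beta^{1/2}\,, \]
which is PSD by construction. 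The checks all use cyclicity of the partial trace over $c$ together with $\alpha_c=\beta_c=\nu$.
\begin{itemize}
\item Trace: $\Tr\sigma=\Tr[\beta^{1/2}\nu_c^{-1/2}\nu_c\nu_c^{-1/2}\beta^{1/2}]=1$.
\item Marginals on $f$: $\sigma_{cde'}=\beta$ exactly.
\item Marginal on $ab$: $\sigma_{ab}=\Tr_c[\nu^{-1/2}\alpha\,\nu^{-1/2}\nu]=\alpha_{ab}=\nu^{\otimes2}$.
\item Marginal on $ac$: using $\alpha_{ac}=\nu^{\otimes 2}$, we get $\sigma_{ac}=\nu_a\otimes\beta_c=\nu^{\otimes2}$; the same holds for $bc$.
\end{itemize}
So every one- and two-qubit marginal of $\sigma$ inside $I_e$ or inside $I_f$ equals a product of $\nu$'s, with no approximation.

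\emph{Step 2: local violation.} Take expectations of \cref{lem:useful-bowtie-domination} against $\sigma$.
\begin{itemize}
\item Linear terms: $\Tr(U_{g,j}\nu)=2\Tr(P_{g,j}\nu)-1$, so $\abs{\Tr(U_{g,j}\nu)}=\abs{\Tr(U_{g,j}(\nu-I/2))}\le\norm{\nu-I/2}_1\le\gamma$.
\item Quadratic terms: these factor as $\Tr(U\nu)\Tr(U'\nu)$, so each is at most $\gamma^2$ in absolute value.
\end{itemize}
This gives
\[ \Tr(\Pi_e\beta'_e)+\Tr(\Pi_f\beta'_f)\ge c-2\xi-O(\gamma)\ge (c-2\xi)/2 \]
for $\gamma$ small. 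Here the left side really is $\Tr(\Pi_e\alpha)+\Tr(\Pi_f\beta)$, since $\sigma_{I_e}$ and $\alpha$ agree on all marginals the inequality sees. Strictly, $\Tr(\Pi_e\sigma)$ involves $\sigma_{I_e}$, which may differ from $\alpha$ in its cubic part. To handle this, I would apply the argument not to $\Pi_e+\Pi_f$ directly but to $\widehat\Pi_e+\widehat\Pi_f$ via \cref{lem:exact-product-bowtie}. Its proof yields the operator inequality
\[ \widehat\Pi_e+\widehat\Pi_f-L_e-L_f\succeq c\,I, \]
where $L_g$ collects the degree $\le2$ terms. It also gives the identity $\widehat\Pi_g=L_g+\frac18(I+\prod_{j\in I_g}U_{g,j})$. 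The cubic parts are then the only non-marginal quantities, and I must check that $\Tr[\sigma\prod_{j\in I_e}U_{e,j}]=\Tr[\alpha\prod_{j\in I_e}U_{e,j}]$. I expect this step, controlling the cubic moment of the glued state against that of $\alpha$, to be the main obstacle.

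The first fallback is a symmetrized gluing. Since $\sigma_{I_f}=\beta$ exactly, I would form $\sigma'$ by the same construction with the roles of $\alpha$ and $\beta$ swapped, so that $\sigma'_{I_e}=\alpha$. I would then pair the inequality for $\sigma$, applied to the $f$-side cubic term, with the inequality for $\sigma'$, applied to the $e$-side cubic term. The cross term $\Tr(U_{e,c}U_{f,c})$ is exactly what noncommutativity controls, which is why I expect this pairing to close. Finally, $\norm{\Pi_g-\widehat\Pi_g}_{op}\le\xi$ and $\norm{\Pi_g-\Pi_{\tilde\psi_g}}_{op}\le2\eps$ (\cref{lem:outer-product-diff-norm}) convert the bound to the net projectors at additive cost $O(\xi+\eps)$.

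\emph{Step 3: aggregation.} By condition 3 of \cref{def:useful}, every clause lies in at most $M\defas 3(Bk\Delta+1)$ useful bow-ties. Put weight $1/M$ on each useful bow-tie; this is a fractional matching. Each clause's violation $\bra{\tilde\psi}\rho_{\tilde\psi}\ket{\tilde\psi}$ is nonnegative, so
\[ \sum_{e\in\cE}\bra{\tilde\psi_e}\rho_{\tilde\psi_e}\ket{\tilde\psi_e}\ \ge\ \frac1M\sum_{(e,f)\text{ useful}}\Omega(c-2\xi). \]
By \cref{lem:many-useful-bowties} the right-hand side is at least $\Omega(\Delta^2n)/O(\Delta)=\Omega(\Delta n)$. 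Dividing by $m=(1+o(1))\Delta n$ gives $1-\val\ge\beta_{\mathrm{unif}}(\gamma)-2\eps$, with $\beta_{\mathrm{unif}}$ independent of $\Delta$.
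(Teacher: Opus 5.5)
\paragraph{Gap: the bow-tie inequality needs one common five-qubit state.}
Your Steps~2 and~3 are exactly the paper's proof. It applies \eqref{eq:useful-bowtie-domination} ``to the 3-qubit states $\rho_e,\rho_f$'', bounds the six linear and six quadratic terms by $\frac34(\gamma+\gamma^2)$ using pairwise $\nu$-independence, and aggregates with the degree bound of \cref{def:useful}. The paper never constructs a joint state. You correctly saw that an operator inequality on $I_e\cup I_f$ only constrains a state on all five qubits, and the step you flagged is where the argument breaks. Your $\sigma$ does reproduce $\beta$ and every one- and two-qubit marginal of $\alpha$; those checks are right. But $\Tr(\Pi_e\sigma_{I_e})$ also depends on the cubic moment $\Tr[\sigma\prod_{j\in I_e}U_{e,j}]$, and nothing ties that moment to $\alpha$. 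So, as you suspected, the left side is not $\Tr(\Pi_e\alpha)+\Tr(\Pi_f\beta)$. The symmetrized fallback does not help. \Cref{lem:exact-product-bowtie} only bounds the \emph{sum} $\prod_{j\in I_e}U_{e,j}+\prod_{j\in I_f}U_{f,j}$ evaluated on a single common state. Adding the inequalities for $\sigma$ and $\sigma'$ leaves the two fabricated cubic moments $\Tr[\sigma\prod_{j\in I_e}U_{e,j}]$ and $\Tr[\sigma'\prod_{j\in I_f}U_{f,j}]$, each possibly as large as $1$. The result is only $-2\sqrt{2+|\Tr(U_{e,i}U_{f,i})|}-2$, which is worse than the trivial bound $-2$.

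\paragraph{The claim as stated is false, so no gluing can work.}
The statement allows an arbitrary collection of per-type states with no common state on a bow-tie. By \cref{prop:pairwise-uniform-avoiding-vector}, each $\tilde\psi\in\tilde\Psi$ admits a pairwise-uniform $\rho_{\tilde\psi}$ with $\rho_{\tilde\psi}\ket{\tilde\psi}=0$. This collection has $\nu=I/2$ and $\val(\tilde\cI,\{\rho_{\tilde\psi}\})=1$, contradicting the bound once $2\eps<\beta_{\mathrm{unif}}(\gamma)$. Now take a useful bow-tie carrying these $\alpha,\beta$. \Cref{lem:useful-bowtie-domination} shows that no five-qubit state has marginals $\alpha$ on $I_e$ and $\beta$ on $I_f$, since such a state would force violation at least $c-2\xi>4\eps$. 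Hence $\sigma_{I_e}\neq\alpha$ is unavoidable; this is the monogamy obstruction in the quantum marginal problem.

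\paragraph{What is actually missing.}
The missing ingredient is a hypothesis, not a construction. The bow-tie inequality must be applied to the marginal on $I_e\cup I_f$ of a single (pseudo-)state, such as the global conditioned pseudo-state. Its degree-$\le 2$ terms are then those of the individual clause marginals, not of type-averaged states. Those terms would need a separate concentration argument over bow-ties, which neither the claim nor the surrounding lemmas supply. The paper's one-line application of \eqref{eq:useful-bowtie-domination} to $\rho_e,\rho_f$ makes the same unjustified leap. Your proposal has therefore exposed a real gap in the paper's argument rather than introduced one.
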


\begin{proof}
  For sufficiently small $\gamma$,
  \[
    c-2\xi-\frac34(\gamma+\gamma^2)>0;
  \]
  for the numerical choices in \cref{sec:useful-bowties}, $\gamma=1/100$ suffices.  By
  \cref{lem:many-useful-bowties} and a Chernoff bound, with high probability the instance contains a collection
  $\cB$ of $\Omega(\Delta^2k^2n)$ useful bow-ties and has $m=|\cE|=\Theta(\Delta n)$.  Fix an instance in this event.

  For any $\{e,f\}\in\cB$, applying \eqref{eq:useful-bowtie-domination} to the $3$-qubit states $\rho_e,\rho_f$ and
  using pairwise $\nu$-independence gives
  \[
    \Tr(\Pi_e\rho_e)+\Tr(\Pi_f\rho_f)
    \geq c-2\xi-\frac34(\gamma+\gamma^2).
  \]
  The degree condition in \cref{def:useful} implies that each clause belongs to at most $k(\Delta kB+2)$ useful
  bow-ties.  Nonnegativity of the clause violations therefore gives
  \begin{align*}
    1-\val(\cI,\{\rho_{\tilde\psi}\})
    =\frac1m\sum_{e\in\cE}\Tr(\Pi_e\rho_e)
    &\geq\frac{1}{m k(\Delta kB+2)}
    \sum_{\{e,f\}\in\cB}\left(\Tr(\Pi_e\rho_e)+\Tr(\Pi_f\rho_f)\right) \\
    &\geq
    \left(c-2\xi-\frac34(\gamma+\gamma^2)\right)
    \frac{|\cB|}{m k(\Delta kB+2)}.
  \end{align*}
  The final ratio is $\Omega(1)$ by \cref{lem:many-useful-bowties}. Taking $\beta_{\mathrm{unif}}(\gamma)$ to be the
  corresponding constant proves $1-\val(\cI,\{\rho_{\tilde\psi}\}) \geq \Omega(1) =: \beta_{\mathrm{unif}}(\gamma)$.
  The bound on $\val(\tilde\cI, \{\rho_{\tilde\psi}\})$ follows by \cref{lem:outer-product-diff-norm}.
\end{proof}

\section{Conditioning} \label{sec:conditioning}

\ConditionalIndependence

\begin{proof}
  Since $\cE$ is $(\eta/(2q),q,2,2)$-concentrated, for every distinct $i,j \in [k]$,
  \[ \E_{T,b} \norm{\cE(\rho|_b)_{ij} - \E_{\text{distinct } u,v \in [n]} (\rho|_b)_{uv}}_F
  \leq \frac{\eta}{2q} \,. \]
  Let $R = R^{T,b}$ be the operator in the above Frobenius norm.
  The Frobenius norm of a Hermitian operator $R$ on two $q$-ary qudits upper bounds its Schatten $1$-norm by
  \[ \norm{R}_1 \leq {\textstyle\sum_r \abs{\lambda_r(R)}} \leq
  \sqrt{\textstyle \sum_r \lambda_r(R)^2}\sqrt{\textstyle \sum_r 1^2} \leq q \norm{R}_F\,, \]
  where $\lambda_r(R)$ are the eigenvalues of $R$.
  Let $(\mu|_b)_S \defas \Lambda^{\otimes S}((\rho|_b)_S)$.
  \cref{eq:distortion} implies
  \[ \norm{(\rho|_b)_{ij} - (\rho|_b)_i \otimes (\rho|_b)_j}_1 \leq 18q \cdot
  \norm{(\mu|_b)_{ij} - (\mu|_b)_i \otimes (\mu|_b)_j}_1 \,. \]
  The result now follows from \cref{thm:cond-indep-complete} below (with its $\eps = \eta/(36q)$).
\end{proof}

The next theorem is inspired by \cite{RaghavendraT12} and \cite{BH13} and is a quantum variant of
\cite[Theorem~5.6]{BRS11}.
\cite[Theorem~14]{BH13} is another quantum variant of \cite[Theorem~5.6]{BRS11} that does not work for our application,
because we need $\E_b (\rho|_b)_S \approx \rho_S$ on a $3$-QSAT constraint with scope $S$.
For this reason, our conditioning differs from the conditioning of local classical distributions of
\cite[Theorem~14]{BH13}.
Unlike the classical case, note that $\E_b (\rho|_b)_S$ is only approximately but not exactly $\rho_S$, due to the
measurement for the conditioning.

\begin{theorem} \label{thm:cond-indep-complete}
  Given the complete graph $G = (V, E)$ on $n$ vertices, there is $\kappa = O(\eps^{-2}\log q)$ such that, for every
  $n \geq 4\kappa/\eps$, every degree $2(\kappa+k)$ $q$-ary pseudo-state $\rho$ satisfies
  \begin{equation} \label{eq:cond-indep}
    \E_{T \sim \cT} \E_{b \sim \mu_T} \E_{ij \sim E}
    \norm{(\mu|_b)_{ij} - (\mu|_b)_i \otimes (\mu|_b)_j}_1 \leq \eps \,.
  \end{equation}
  where $(\mu|_b)_S \defas \Lambda^{\otimes S}((\rho|_b)_S)$.
  Further, for every subset $S$ of size $k \geq 2$,
  \begin{equation} \label{eq:approx-total-prob}
    \norm{\E_{T,\, b} (\rho|_b)_S - \rho_S}_1 \leq \kappa k/n\,.
  \end{equation}
\end{theorem}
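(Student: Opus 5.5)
The plan is to follow the potential-function argument of \cite[Theorem~5.6]{BRS11}, transplanted to the classical distributions $(\mu|_b)_S = \Lambda^{\otimes S}((\rho|_b)_S)$, with quantum conditioning via the rank-one POVM $\{\cE_x\}$ of \cref{eq:quantum-classical-channel}. For a conditioning set $T$ of size $s$ and outcome $b$, define the potential
\[ \Phi_s \defas \E_{|T|=s}\E_{b\sim\mu_T}\E_{u\in[n]} H\bigl((\mu|_b)_u\bigr), \]
which lies in $[0,\log q'] = [0,O(\log q)]$. The key identity to establish is a chain rule: if we additionally measure a random fresh qudit $v\notin T$ with the POVM, then the outcome distribution on $(T\cup\{v\},\, b x)$ is exactly $\mu_{T\cup\{v\}}$. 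Moreover, for $u\neq v$ the conditional distribution of $u$'s outcome given $(b,x)$ is $(\mu|_{bx})_u$, and the joint law of $(x_v,x_u)$ under $\rho|_b$ is $(\mu|_b)_{vu}$.

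This holds because the POVM elements on distinct qudits commute, so sequential conditioning \cref{eq:conditioning} on rank-one $\cE_x$ acting on different qudits composes: $\rho|_{b}|_{x} = \rho|_{bx}$. The claim $\Pr[b]=\pE_\rho[\bigotimes \cE_{b_i}]$ then follows from \cref{def:conditioning}. Hence $\Phi_s-\Phi_{s+1}\ge \E_{T,b}\E_{v\ne u}I_{\mu|_b}(x_u;x_v)$, up to an $O(s/n)$ correction from $v$ possibly lying in $T$ or equal to $u$. This is the reason for sampling $T$ without replacement and requiring $n\ge 4\kappa/\eps$.

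Telescoping over $s=0,\dots,\kappa$ shows that the average over $s\sim\mathrm{Unif}\{0,\dots,\kappa\}$ of the mutual information is at most $O(\log q)/\kappa + O(\kappa/n)$. Pinsker's inequality together with Jensen's inequality then converts this into
\[ \E\norm{(\mu|_b)_{ij}-(\mu|_b)_i\otimes(\mu|_b)_j}_1 \lesssim \sqrt{\log q/\kappa}+\eps/2, \]
which is at most $\eps$ once $\kappa=O(\eps^{-2}\log q)$. This gives \cref{eq:cond-indep}. The degree bookkeeping is routine: conditioning on $s\le\kappa$ qudits consumes degree $2s$, leaving a degree-$2k$ pseudo-state, so all reduced states on at most $k$ qudits are genuine states (\cref{lem:genuine-state}).

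For \cref{eq:approx-total-prob}, I would compute directly that
\[ \E_b (\rho|_b)_S = \Tr_{\overline S}\sum_b \sqrt{\cE_b}\,\rho\,\sqrt{\cE_b}, \]
the post-measurement state. When $T\cap S=\emptyset$ the measured qudits are traced out, so by trace preservation this equals $\rho_S$ exactly. Otherwise, the error in trace norm is at most $2$. Since $T$ is a uniformly random $s$-subset, $\Pr[T\cap S\ne\emptyset]\le sk/n\le \kappa k/n$, giving the bound after adjusting constants (or halving via the trace distance).

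The step I expect to be the main obstacle is justifying the chain rule rigorously for pseudo-states: in particular, that $\rho|_b$ conditioned further on $x_v$ equals $\rho|_{bx}$, and that the induced classical joint laws are consistent probability distributions at low degree. This relies on $\sqrt{\cE_x}$ for distinct qudits commuting and having degree $1$, so that the operators involved stay in $\cL^n_{2(\kappa+k)}$.
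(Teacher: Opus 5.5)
Your route is essentially the paper's. The chain rule $\rho|_b|_x=\rho|_{bx}$, with outcome law $\mu_{T\cup\{v\}}$, is the same commutation fact as \cref{claim:quantum-classical-conditioning}; it is a two-line check, not the main obstacle. Your entropy potential, telescoped over $s$, is the \cite{RaghavendraT12} step that the paper runs on the classical distributions $\mu_{T\cup\{i,j\}}$.

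The real problem is the error bookkeeping at the end. You put an additive $O(\kappa/n)$ correction inside the mutual-information bound and then claim that Pinsker plus Jensen gives $\sqrt{\log q/\kappa}+\eps/2$. That does not follow. Pinsker gives $\sqrt{O(\log q/\kappa)+O(\kappa/n)}$. Under the stated hypothesis $n\ge 4\kappa/\eps$, the second term contributes order $\sqrt{\eps}$ (times $\sqrt{\log q}$ if you crudely bound the affected entropy terms by $\log q'$), which exceeds $\eps$. As written, your argument would need $n\gtrsim\kappa/\eps^2$.

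\textbf{How to repair it.} Remove everything touching $T$ from under the square root.
\begin{itemize}
\item[(i)] Peel off pairs $\{i,j\}$ meeting $T$ in $1$-norm \emph{before} applying Pinsker, as the paper does. Each such pair contributes at most $2$, and $\Pr[\{i,j\}\cap T\neq\emptyset]\le 2\E[s]/n=\kappa/n$, so these pairs contribute at most $2\kappa/n\le\eps/2$. With the rank-one POVM they actually contribute $0$: for $u\in T$ one has $(\rho|_b)_{uv}=\ket{e_{b_u}}\bra{e_{b_u}}\otimes(\rho|_b)_v$, which is a product.
\item[(ii)] In the telescoping, draw the fresh qudit $v$ from $\overline T$, so that $T\cup\{v\}$ is exactly a uniform $(s+1)$-set. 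Then no ``$v\in T$'' case arises.
\item[(iii)] The $u\in T$ terms of your potential cancel exactly between $\Phi_s$ and $\Phi_{s+1}$. This is because $(\rho|_b)_u=\ket{e_{b_u}}\bra{e_{b_u}}$ is unaffected by further conditioning.
\item[(iv)] The only additive leftover is the diagonal term $u=v$. It can be negative, since $\Lambda(\ket{e_x}\bra{e_x})$ need not have zero entropy, but it is at most $\log q'/n$ per step. That is dominated by $\log q'/(\kappa+1)$.
\end{itemize}
The paper sidesteps even this diagonal term by telescoping classical conditional entropies $H(x_i\mid x_T)$, which vanish for $i\in T$.

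\textbf{Second, smaller slip.} For \cref{eq:approx-total-prob}, your bound $\Pr[T\cap S\ne\emptyset]\le sk/n\le\kappa k/n$, multiplied by the trace-norm error $2$, gives $2\kappa k/n$. The statement is in $\norm{\cdot}_1$, so you cannot ``halve via the trace distance.'' Instead, average over $s$ uniform on $\{0,\dots,\kappa\}$. Then $\E[s]=\kappa/2$, so $\Pr[T\cap S\neq\emptyset]\le \kappa k/(2n)$, which gives exactly $\kappa k/n$.

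Your post-measurement-state computation is fine for pseudo-states once you note two things. First, $(\rho|_b)_S$ depends only on $\rho_{S\cup T}$. Second, $\rho_{S\cup T}$ is a genuine state by \cref{lem:genuine-state}, since $|S\cup T|\le\kappa+k$. Alternatively, handle outcomes $b$ outside the support via \cref{lem:pseudo-cs}, as the paper does.
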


\begin{proof}
  Our proof is a variant of \cite[Theorem~14]{BH13}.
  Instead of their classical conditioning, we consider the post-measurement pseudo-state $\rho|_b$, in order
  to approximately recover the law of total probability.
  We also use the (classical) mutual-information analysis from \cite{RaghavendraT12}.
  Let $d \defas \kappa+k$.
  The pseudo-state $\rho$ has degree $2d$.
  Therfore $\rho|_b$ is a pseudo-state of degree $2k$ (\cref{lem:conditioning}).

  \begin{claim}
    The approximate law of total probability \cref{eq:approx-total-prob} holds.
  \end{claim}

  \begin{proof}
    Given any subset $T \subseteq [n]$ of size at most $d$, the channel $\Lambda^{\otimes T}$ has POVM $\{\cE_b\}$
    with Kraus operators $\sqrt{\cE_b}$, where $\cE_b = I_{[n] \setminus T} \otimes \bigotimes_{t \in T} \cE_{t,b(t)}$,
    and $\{\cE_{t,x}\}$ is the POVM for the single-qudit channel $\Lambda$ for $t \in T$.
    For any operator $Y$ of degree $\leq 2(d-s)$,
    \[
      \E_{b \sim \mu_T} \pE_{\rho|_b} Y
      = \sum_{b \in \supp(\mu_T)} \pE_\rho [\cE_b]
      \frac{\pE_\rho [\sqrt{\cE_b} Y \sqrt{\cE_b}]}{\pE_\rho [\cE_b]}
      = \pE_\rho \sum_{b \in \supp(\mu_T)} \sqrt{\cE_b} Y \sqrt{\cE_b} \,.
    \]
    The final sum remains unchanged if $b$ runs over all of $[q']^T$, because for $b \in [q']^T \setminus \supp(\mu_T)$,
    we have $\pE_\rho \cE_b = \mu_T(b) = 0$, so
    \[ \Abs{\textstyle \pE_\rho \sqrt{\cE_b}\, Y\, \sqrt{\cE_b}}
    \leq \sqrt{\textstyle \pE_\rho \cE_b} \sqrt{\textstyle \pE_\rho \sqrt{\cE_b} Y^\dagger Y\sqrt{\cE_b}} = 0\,, \]
    by pseudo-expectation Cauchy--Schwarz (\cref{lem:pseudo-cs}).
    Therefore
    \[ \E_{b \sim \mu_T} \pE_{\rho|_b} Y = \pE_\rho \sum_{b \in [q']^T} \sqrt{\cE_b} Y \sqrt{\cE_b}
    = \pE_\rho \Lambda^{\otimes T} (Y)\,. \]
    In particular, if $Y = I_T \otimes X$ for some operator $X$ on $\overline T = [n] \setminus T$ of degree
    $\leq 2(d-s)$, then $\E_{b \sim \mu_T} \pE_{\rho|_b} I_T \otimes X = \pE_\rho I_T \otimes X$, because
    $\Lambda^{\otimes T}$ is unital.
    Thus $\E_{b \sim \mu_T} (\rho|_b)_{\overline T} = \rho_{\overline T}$ as reduced pseudo-states, and
    $\E_{b \sim \mu_T} (\rho|_b)_S = \rho_S$ for any $S \subseteq \overline T$.
    Then
    \[ \norm{\E_{T,b}(\rho|_b)_S-\rho_S}_1 \leq \Pr_{T\sim\cT}[T\cap S\neq\emptyset]\cdot 2\,, \]
    since $\E_b(\rho|_b)_S$ and $\rho_S$ are equal unless $T$ meets $S$, and otherwise both are normalized $k$-qudit
    density matrices whose trace-norm difference is at most $2$.
    As $s$ is uniform on $\{0,\dots,\kappa\}$, $\E[s]=\kappa/2$, and each of the $s$ entries hits $S$ with probability
    $k/n$, so $\Pr_{T\sim\cT}[T\cap S\neq\emptyset]\le\frac{k}{n}\E[s]=\frac{\kappa k}{2n}$, giving the bound
    $\frac{\kappa k}{n}$ in \cref{eq:approx-total-prob}.
  \end{proof}

  To bound the average $1$-norm distance in \cref{eq:cond-indep}, we first show the distributions
  in \cref{eq:cond-indep} from quantum conditioning $(\mu|_b)_S = \Lambda^{\otimes S}((\rho|_b)_S)$
  coincide with those from classical conditioning $\mu(\cdot|b)_S = (\Lambda^{\otimes S \cup T}(\rho_{S \cup T})|_b)_S$,
  provided $S$ is disjoint from $T$ and $\abs{S \cup T} \leq d$.

  \begin{claim} \label{claim:quantum-classical-conditioning}
    For any disjoint subsets $S$ and $T$ of $[n]$ satisfying $\abs{S \cup T} \leq d$, any $b \in
    \supp(\Lambda^{\otimes T}(\rho_T))$, we have
    $\Lambda^{\otimes S}((\rho|_b)_S) = (\Lambda^{\otimes S \cup T}(\rho_{S \cup T})|_b)_S$.
  \end{claim}

  \begin{proof}
    $b \in \supp(\Lambda^{\otimes T}(\rho_T))$ is equivalent to $\pE_\rho[\cE_b] > 0$.
    For the distribution $\mu_{S \cup T} = \Lambda^{\otimes S\cup T}(\rho_{S \cup T})$, classically conditioning on
    $x_T = b$ yields a distribution on the free sites $S$
    \[ \Lambda^{\otimes S\cup T}(\rho_{S \cup T})\big|_{b}(c)
      = \frac{\mu_{S \cup T}(b,c)}{\mu_T(b)}
      \overset{\eqref{eq:quantum-classical-channel}}= \frac{\pE_\rho[\cE_b\,\cE_c]}{\pE_\rho[\cE_b]}
    \qquad \text{for } c \in [q']^S \,. \]
    On the other hand, the quantum-conditioned distribution is
    \[
      \Lambda^{\otimes S}\Paren{(\rho|_b)_S}(c) \overset{\eqref{eq:quantum-classical-channel}}= \pE_{\rho|_b} [\cE_c]
      \overset{\eqref{eq:conditioning}}= \frac{\pE_\rho[\sqrt{\cE_b}\,\cE_c\,\sqrt{\cE_b}]}{\pE_\rho[\cE_b]}
      \overset{(*)}= \frac{\pE_\rho[\cE_b\,\cE_c]}{\pE_\rho[\cE_b]}
      = \Lambda^{\otimes S\cup T}(\rho_{S \cup T})\big|_{b}(c)
    \]
    for $c \in [q']^S$, where $(*)$ uses the commutivity of $\sqrt{\cE_b}$ and $\cE_c$ due to disjointness
    of $S$ and $T$.
  \end{proof}

  \cref{eq:cond-indep} now follows from
  \begin{equation} \label{eq:cond-indep-classical}
    \E_{T,\, b} \E_{ij \sim E} \norm{\mu(\cdot|b)_{ij} - \mu(\cdot|b)_i \otimes \mu(\cdot|b)_j}_1 \leq \eps/2 \,.
  \end{equation}
  Indeed, \cref{claim:quantum-classical-conditioning} implies $\mu(\cdot|b)_{ij} = (\mu|_b)_{ij}$ whenever $\{i, j\}$
  is disjoint from $T$, and similarly $\mu(\cdot|b)_i = (\mu|b)_i$.
  Also, $\Pr[\{i ,j\} \cap T \neq \emptyset] \leq \kappa/n$, and the $1$-norm difference in
  \cref{eq:cond-indep} is at most $2$ for each $(T,b,i,j)$.

  Pinsker's and Cauchy--Schwarz inequalities imply the LHS of \cref{eq:cond-indep-classical} is at most
  \begin{align*}
    &\sqrt{2\ln 2\,\E_T \E_{ij \sim E} \E_{b\sim\mu_T}
    D(\mu(\cdot|_b)_{ij} \;\|\; \mu(\cdot|_b)_i \otimes \mu(\cdot|_b)_j)} \\
    =\; &\sqrt{2\ln 2\cdot \E_T \E_{ij \sim E} I(x_i;x_j|x_T)} \,,
  \end{align*}
  where $(x_T,x_i,x_j)$ jointly distributes as the classical distribution $\mu_{T \cup \set{i,j}}$.
  When $i \in T$ or $j \in T$, $I(x_i;x_j|x_T) = 0$.
  Therefore the only contribution to the conditional mutual-information comes from $i,j \in \overline T \defas [n]
  \setminus T$.
  Thus
  \begin{equation} \label{eq:mutual-info-edge}
    \E_{ij \sim E} I(x_i;x_j|x_T) = \sum_{\substack{i, j \in \overline T \\ i \neq j}}
    \frac{I(x_i;x_j|x_T)}{n(n-1)} = \frac{Q - D}{n(n-1)} \,,
  \end{equation}
  where
  \[ Q = \sum_{i,j \in \overline T} I(x_i;x_j|x_T)\,, \qquad D = \sum_{i \in \overline T} H(x_i|x_T) \,. \]
  Since $D$ is the diagonal terms of $Q$ and conditional mutual information is non-negative, $Q \geq D$.
  Recall $|T| = s$.
  Rearranging, \cref{eq:mutual-info-edge} is at most
  \[ \frac Q{n(n-s)} = \sum_{i \in [n]} \sum_{j \in \overline T} \frac{I(x_i;x_j|x_T)}{n(n-s)}
  = \E_{i \in [n]} \E_{j \in \overline T} I(x_i;x_j|x_T) \,. \]

  We bound the conditional mutual-information following \cite[Lemma~4.5]{RaghavendraT12}, whose analysis applies
  because the local distributions $\set{\mu_T}$ are consistent under projection, forming a level-$d$ Sherali--Adams
  solution (\cref{lem:partial-trace-channel}).
  Since $I(x_i;x_j|x_T) = H(x_i|x_T) - H(x_i|x_T,x_j)$,
  \[ \E_{|T|=s} \E_{i \in [n]} \E_{j \in \overline T} I(x_i;x_j|x_T)
  = \E_{|T|=s} \E_{i \in [n]} H(x_i|x_T) - \E_{|S|=s+1} \E_{i \in [n]} H(x_i|x_S) \,. \]
  Averaging over all $s \in \set{0, \dots, \kappa}$, the RHS telescopes, so
  \[ \E_{T \in \cT} \E_{i \in [n]} \E_{j \in \overline T} I(x_i;x_j|x_T) = \frac1{\kappa+1}
  \Paren{\E_{i \in [n]} H(x_i) - \E_{|S| = \kappa+1} H(x_i|x_S)} \leq \frac{\E_i H(x_i)}{\kappa+1} \,, \]
  because conditional entropy is non-negative.
  The RHS is $O(\log q)/(\kappa+1)$, as $H(x_i) \leq \log q' = O(\log q)$.
  This proves \cref{eq:cond-indep-classical} whenever $\kappa \geq \Omega(\eps^{-2}\log q)$.
\end{proof}

\section{Concentration} \label{sec:concentration}

\Concentration*

\begin{proof}
  As mentioned in \cref{rem:pseudo-moment-matrix}, the pseudo-expectation $\pE_\rho$ of a pseudo-state $\rho$
  corresponds to a positive semidefinite matrix $M$.
  Let $d = t/2$.
  If $\rho$ has degree $t = 2d$, then $M$ is $\cW^n_d$-by-$\cW^n_d$, where $\cW^n_d$ is the set of $q$-ary $n$-qudit
  Heisenberg--Weyl operators of weight at most $d$.

  Fix $W \in {[k] \choose t}$ and let $\chi \defas \E_{e \in \cE} (\rho_e)_W - \E_{e \in V^{\underline k}} (\rho_e)_W$.
  Our goal is to bound $\norm{\chi}_F^2$.
  Expand $\norm{\chi}_F^2 = \frac 1{q^t} \sum_{\gamma \in \cW^t} \abs{\Tr \chi \gamma}^2$, because Heisenberg--Weyl
  operators $\cW^t$ on $t$ qudits are orthogonal and have Frobenius norm $\sqrt{q^t}$.
  We will first bound $\abs{\Tr \chi \gamma}$ for $\gamma \in \cW^t$.
  Given $e \in [n]^{\underline k}$, let $e_W \in [n]^{\underline t}$ be the $t$-tuple such that $(e_W)_i = e(W_i)$ for
  $i \in [t]$, where $W_1 < \dots < W_t$ are the elements of $W$.
  Note that $\Tr[(\rho_e)_W \gamma] = \Tr[\rho (e_W\circ \gamma)]$, where $e_W\circ \gamma$ is the Heisenberg--Weyl
  operator whose $(e_W)_i$-component is $\gamma_i$ for $i \in [t]$ (and is the identity outside of $e_W$).
  Thus
  \[ \Tr[\chi \gamma] = \E_{e \in \cE} \Tr[\rho (e_W\circ\gamma)]
  - \E_{e \in V^{\underline k}} \Tr[\rho (e_W\circ\gamma)] = \Tr[(L^{\gamma,W})^\dagger M] \,, \]
  for some $\cW^n_d$-by-$\cW^n_d$ matrix $L^{\gamma,W}$.
  More precisely, $L^{\gamma,W}$ has entries
  \begin{equation} \label{eq:deviation-matrix}
    \begin{split}
      L^{\gamma,W}_{\sigma\tau} \defas
      \quad & \Pr_{e \in \cE} [e_{W^\text{left}} \circ \gamma^\text{left} = \sigma^\dagger,
      e_{W^\text{right}} \circ \gamma^\text{right} = \tau] \\
      - & \Pr_{e \in V^{\underline k}} [e_{W^\text{left}} \circ \gamma^\text{left} = \sigma^\dagger,
      e_{W^\text{right}} \circ \gamma^\text{right} = \tau] \,,
    \end{split}
  \end{equation}
  for $\sigma,\tau \in \cW^n_d$, where $W^\text{left} = (W_1, \dots, W_d)$ and $W^\text{right} =
  (W_{d+1}, \dots, W_t)$ are the first and second halves of $W$, and similarly $\gamma = \gamma^\text{left} \otimes
  \gamma^\text{right}$ is decomposed into Heisenberg--Weyl operators on the first and second $d$ qudits.

  For any $\gamma \in \cW^t$, \cref{lem:deviation-bound} implies $\abs{\Tr \chi\gamma} \leq \eta/\sqrt{q^t}$
  except with probability $o_n(1)$.
  Then $(\eta,q,t,t)$-concentration follows by a union bound over all $\gamma$ and $W$.
\end{proof}

\begin{lemma} \label{lem:deviation-bound}
  Fix even $t \leq k$, $q$-ary $\gamma \in \cW^t$, $W \in {[k] \choose t}$.
  For any $\eta > 0$, there is $C = C(\eta,q,t)$ such that for $m \geq Cn^{t/2}$, except with probability $o_n(1)$ over
  a directed $k$-uniform hypergraph $\cH = ([n],\cE)$ with $m$ expected hyperedges,
  the matrix $L^{\gamma,W}$ from \cref{eq:deviation-matrix} satisfies
  \begin{equation} \label{eq:deviation-bound}
    \sup_M \Abs{\Tr[(L^{\gamma,W})^\dagger M]} \leq \eta \,,
  \end{equation}
  where the supremum is over all matrices $M$ of a degree-$t$ $q$-ary $n$-qudit pseudo state $\rho$.
\end{lemma}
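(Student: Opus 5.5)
The plan is to bound $\Tr[(L^{\gamma,W})^\dagger M]$ spectrally, using only two facts about $M$. First, $M \succeq 0$. Second, $M$ has unit diagonal, since $M(\sigma,\sigma) = \pE_\rho[\sigma^\dagger\sigma] = \pE_\rho[I] = 1$ because Heisenberg--Weyl operators are unitary. Unit diagonal and PSD give $|M_{\sigma\tau}| \le 1$ and $\Tr M = |\cW^n_d| \lesssim_{q,d} n^d$, so the crude estimate is
\[ \Abs{\Tr[L^\dagger M]} \le \Norm{L}_{\mathrm{op}} \Tr M \lesssim_{q,d} n^d \Norm{L}_{\mathrm{op}} . \]
It therefore suffices to show that, after a harmless cleanup, $\Norm{L}_{\mathrm{op}} \le \eta\, n^{-d}/C_{q,d}$ with high probability once $m \ge C n^{t/2} = C n^d$.

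First step: make $L$ explicit. By \cref{eq:deviation-matrix}, $L^{\gamma,W}$ is supported on rows $\sigma = (e_{W^{\mathrm{left}}}\circ\gamma^{\mathrm{left}})^\dagger$ and columns $\tau = e_{W^{\mathrm{right}}}\circ\gamma^{\mathrm{right}}$, which are indexed by ordered $d$-tuples $a,b \in [n]^{\underline d}$. Up to this fixed relabelling it equals
\[ \frac{1}{|\cE|} A_{ab} - \frac{1}{|V^{\underline k}|} N_{ab}, \]
where $A_{ab}$ counts hyperedges $e \in \cE$ with $e_{W^{\mathrm{left}}} = a$ and $e_{W^{\mathrm{right}}} = b$, and $N_{ab}$ is the analogous count over all of $V^{\underline k}$. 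Each potential hyperedge appears independently with probability $p$, so $\E A = pN$. Replacing $|\cE|$ by its expectation $p|V^{\underline k}|$ costs a factor $1+o(1)$ by Chernoff, and the resulting error term is $o(1)\cdot\Norm{N/|V^{\underline k}|}$, which contributes $o(1)$ after pairing with $M$ since $|M_{\sigma\tau}| \le 1$ and the entries of $N/|V^{\underline k}|$ sum to $1$. So the main term is the centered random matrix $(A - \E A)/\E|\cE|$.

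Second step: bound the centered matrix. Its entries are independent-ish: each hyperedge contributes to exactly one entry, and the entries are sums of $\Theta(n^{k-t})$ Bernoullis with mean $\bar p \asymp m/n^{t}$. A matrix Bernstein or trace-method bound gives $\Norm{A - \E A}_{\mathrm{op}} \lesssim \sqrt{n^d \bar p}$, up to degree issues discussed below. This yields
\[ n^d \Norm{L}_{\mathrm{op}} \lesssim n^d \cdot \frac{\sqrt{n^d\, m/n^{2d}}}{m} = \sqrt{n^d/m} \le C^{-1/2}, \]
which is at most $\eta$ for $C = C(\eta,q,t)$ large. The union bound over $\gamma,W$ is done in the caller.

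Main obstacle. At $m = Cn^d$, the matrix $A$ is a sparse random matrix with constant average row sum. Its operator norm is then dominated by rows of atypically high degree, of order $\sqrt{\log n/\log\log n}$, so the clean bound fails. I would handle this as in Feige--Ofek regularization. Split off the set $R$ of rows and columns with row sum exceeding $K\bar p n^d$, for a large constant $K$. On the complement, the trace method gives $\Norm{\cdot}_{\mathrm{op}} \lesssim_K \sqrt{n^d\bar p}$. On $R$, avoid spectral norms and use only $|M_{\sigma\tau}|\le1$: the contribution is at most $\frac1m\sum_{a\in R}\sum_b (A_{ab}+\E A_{ab})$, i.e.\ the fraction of hyperedges incident to high-degree $d$-tuples, plus the corresponding mean mass. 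Poisson tail bounds make this at most $e^{-\Omega(K)}$ with high probability, so it is also below $\eta/2$ for $K$ large. Combining the two pieces gives \cref{eq:deviation-bound}.
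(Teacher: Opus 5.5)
\paragraph{A step that fails.} Your reduction $\Abs{\Tr[L^\dagger M]}\le\Norm{L}_{\mathrm{op}}\Tr M$, with $\Tr M=|\cW^n_d|\asymp n^d$, is too lossy for part of the statement. The lemma has to hold for every $\gamma\in\cW^t$, including those with identity tensor factors, because the caller sums over all of $\cW^t$. For such $\gamma$ your relabelling onto $[n]^{\underline d}\times[n]^{\underline d}$ is not injective: many tuples collapse to the same row or column.

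Take $\gamma^{\mathrm{right}}=I^{\otimes d}$ with $\gamma^{\mathrm{left}}$ of full weight. Then every hyperedge lands in the single column $\tau=I$, and $L$ is one column whose entries are roughly $\mathrm{Poi}(C)$ fluctuations divided by $m$. So $\Norm{L}_{\mathrm{op}}=\Norm{L}_2\approx m^{-1/2}$, and $n^d\Norm{L}_{\mathrm{op}}\approx\sqrt{n^d/C}\to\infty$. Your sufficient condition $\Norm{L}_{\mathrm{op}}\lesssim\eta n^{-d}$ is therefore false, even though the lemma itself holds. The same blow-up occurs whenever $\gamma$ has fewer than $t$ non-identity factors.

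\paragraph{How to repair it.} Use the Gram factorization $M=G^\dagger G$ with unit columns. If $L$ is supported on rows $\mathcal R$ and columns $\mathcal C$, then
$\Abs{\Tr[L^\dagger M]}\le\Norm{G_{\mathcal R}}_F\Norm{L}_{\mathrm{op}}\Norm{G_{\mathcal C}}_F=\Norm{L}_{\mathrm{op}}\sqrt{|\mathcal R||\mathcal C|}$.
Then redo the norm estimate for the rectangular aggregated matrix. This gives $O\bigl(\sqrt{\max(|\mathcal R|,|\mathcal C|)/m}\bigr)=O(C^{-1/2})$.

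\paragraph{Where the remaining work lies.} Even after the repair, the proof rests on the regularization step, which you only assert. At $m=Cn^d$ the full-weight matrix has constant average row sum. The trace method does not directly apply to a matrix whose deleted rows depend on the randomness. What you need is a Kahn--Szemer\'edi/Feige--Ofek light/heavy-pair argument (nets, Bernstein, a discrepancy property), plus concentration of the hyperedge mass on high-degree tuples. That argument must be adapted to entries that are not independent: each hyperedge lands in one random entry, and entries inside a common $k$-set are correlated. This is plausible, since all the sums are over independent hyperedges, but it is the bulk of the proof.

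\paragraph{What the paper does instead.} The paper never touches operator norms. PSD with unit diagonal makes $M$ a Gram matrix. Complex Grothendieck bounds the supremum by $K_G^{\C}$ times the supremum over unit complex scalars $u_\sigma,v_\tau$. These scalars are discretized coordinatewise, which only costs $O(\eps)$ because $\sum_{\sigma,\tau}|L_{\sigma\tau}|\le2$. For fixed $(u,v)$ the bilinear form is a sum of independent bounded terms over hyperedges, so Bernstein gives failure probability $\exp(-\Omega(\eta^2 m))$. This beats the $\exp(O(n^d))$ union bound once $m\ge Cn^d$.

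Because this controls a cut-norm-type quantity rather than a spectral norm, neither high-degree tuples nor identity factors in $\gamma$ cause any trouble. You need no regularization and no case analysis on the shape of $L$.
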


\begin{proof}
  Since $M$ is positive semidefinite and with diagonal entries $1$ (because matrices in $\cW^n_d$ are unitary), $M$ is
  the Gram matrix of some unit vectors $\set{v_\sigma}$.
  Therefore the supremum in \cref{eq:deviation-bound} is at most
  \begin{equation} \label{eq:grothendieck-opt}
    \sup \Abs{\sum_{\sigma, \tau} L^{\gamma,W}_{\sigma\tau} u_\sigma^\dagger v_\tau}
  \end{equation}
  over unit vectors $u_\sigma, v_\tau \in \C^D$ of arbitrary dimension $D$.
  The complex Grothendieck inequality implies that the latter supremum
  is at most the complex Grothendieck constant $K^\C_G$ times \cref{eq:grothendieck-opt} over unit magnitude complex
  numbers $u_\sigma, v_\tau \in \C$ (e.g.~\cite{friedland2018elementary}).

  Let $\cN \subset \S$ be a finite $\eps$-net of the complex unit sphere $\S = \set{ z \in \C | \abs{z} = 1}$.
  For any $u \in \S^{\cW^n_d}$, round $u$ coordinate-wise to obtain $\tilde u$, so that $\tilde u_\sigma$ is the closest
  point in $\cN$ to $u_\sigma \in \S$ for $\sigma \in \cW^n_d$.
  Likewise round $v \in \S^{\cW^n_d}$ coordinate-wise to obtain $\tilde v$.
  Then
  \[ \Abs{\sum_{\sigma, \tau} L^{\gamma,W}_{\sigma\tau} u_\sigma^\dagger v_\tau
  - \sum_{\sigma, \tau} L^{\gamma,W}_{\sigma\tau} \tilde u_\sigma^\dagger \tilde v_\tau} \leq 4\eps\,, \]
  because $\sum_{\sigma,\tau}\abs{L^{\gamma,W}_{\sigma\tau}} \leq 2$ by \cref{eq:deviation-matrix} and
  $\abs{u_\sigma^\dagger v_\tau - \tilde u_\sigma^\dagger \tilde v_\tau} \leq 2\eps$ by triangle inequality.
  That means to bound \cref{eq:grothendieck-opt}, we can first bound the supremum over a finite set of vectors $u,v \in
  \cN^{\cW^n_d}$ by $\eta/2K^\C_G$, and then the supremum over $u,v \in \S^{\cW^n_d}$ will be at most $\eta/K^\C_G$,
  whenever $\eps = \eta/8K^\C_G$.

  Fix any $u,v \in \cN^{\cW^n_d}$.
  Recall that each undirected $\hat e \in {[n] \choose k}$ appears in $\cH$ independently with probability $p$, and each
  realized $\hat e$ is uniformly ordered as an ordered $k$-tuple $e \in [n]^{\underline k}$.
  For each $\hat e \in {[n] \choose k}$, define the random variable
  \[ X_{\hat e} = \Iv{\hat e \in \cH} u^\dagger_{\sigma(e)} v_{\tau(e)} \,, \]
  where $\sigma(e) \defas e_{W^\text{left}} \circ (\gamma^\text{left})^\dagger$ and $\tau(e) \defas
  e_{W^\text{right}} \circ \gamma^\text{right}$.
  Then random sum
  \[ S^{uv} \defas \sum_{\hat e \in {[n] \choose k}} X_{\hat e} \]
  has expectation
  \[ \E_\cH S^{uv} = m \E_{e \in V^{\underline k}} u^\dagger_{\sigma(e)} v_{\tau(e)} \,, \]
  and we have used $m = p {n \choose k}$.
  Then the mean-$0$ random variables $Y_{\hat e} \defas X_{\hat e} - \E X_{\hat e}$ satisfy
  \begin{align*}
    \abs{Y_{\hat e}} &\leq \abs{X_{\hat e}} + \abs{\E X_{\hat e}} \leq 2 \sup_e \abs{u_{\sigma(e)}^\dagger
    v_{\tau(e)}} \leq 2 \\
    \E [Y_{\hat e}^2] &= \Var[X_{\hat e}] \leq \E[X_{\hat e}^2] \leq p
    \sup_e \abs{u_{\sigma(e)}^\dagger v_{\tau(e)}}^2 \leq p \,.
  \end{align*}
  Bernstein inequality (\cref{lem:bernstein}) applied to the real and imaginary parts implies
  \[ \Pr_\cH \Brac{\abs{S^{uv} - \textstyle \E_\cH S^{uv}} \geq \eta m/2K^\C_G} \leq \exp(-\Omega(\eta^2 m)) \,. \]
  Provided $m \geq C(\eta,q,t) n^d$, taking a union bound over all $\abs{\cN^{\cW^n_d}}^2 =
  \exp(O_\eta(n)^d)$ choices of $(u,v)$, whp $\abs{S^{uv} - \E_\cH S^{uv}} < \eta m/2K^\C_G$ simultaneously for all
  $u,v \in \cN^{\cW^n_d}$.
  This implies \cref{eq:grothendieck-opt} is at most $\eta/2K^\C_G$ over $u,v \in \cN^{\cW^n_d}$.
\end{proof}

\begin{lemma}[Bernstein Inequality] \label{lem:bernstein}
  Let $Y_1, \dots, Y_\ell$ be independent $0$-mean random variables such that $\abs{Y_e} \leq B$ for $e \in [\ell]$.
  Then for any $a > 0$,
  \[ \Pr\Brac{\sum_{e \in [\ell]} Y_e > a} \leq
  \exp\Paren{\frac{-\frac12 a^2}{\sum_{e \in [\ell]} \E[Y_e^2] + \frac 13 Ba}}. \]
\end{lemma}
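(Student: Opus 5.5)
The plan is a one-sided Bernstein bound, applied separately to the real and imaginary parts of the mean-zero sum, proved by the standard exponential-moment (Chernoff) method. Fix $\lambda>0$. By Markov's inequality applied to $e^{\lambda\sum_e Y_e}$ and independence,
\[
  \Pr\Brac{\sum_{e\in[\ell]} Y_e > a}\le e^{-\lambda a}\prod_{e\in[\ell]}\E\bigl[e^{\lambda Y_e}\bigr].
\]
So the task reduces to bounding each moment generating factor and then optimizing over $\lambda$.

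For a single factor, expand $e^{\lambda Y_e}=1+\lambda Y_e+\sum_{r\ge2}\lambda^rY_e^r/r!$ and use $\E Y_e=0$. Then bound $\abs{\E Y_e^r}\le B^{r-2}\E[Y_e^2]$, which follows from $\abs{Y_e}\le B$. Combined with $r!\ge 2\cdot3^{r-2}$, this gives for $0\le\lambda<3/B$
\[
  \E e^{\lambda Y_e}\le 1+\frac{\lambda^2\E[Y_e^2]}{2}\sum_{r\ge0}\Paren{\frac{\lambda B}{3}}^r
  =1+\frac{\lambda^2\E[Y_e^2]}{2(1-\lambda B/3)}
  \le\exp\Paren{\frac{\lambda^2\E[Y_e^2]}{2(1-\lambda B/3)}}.
\]

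Writing $\sigma^2\defas\sum_e\E[Y_e^2]$ and multiplying the factors, the tail probability is at most
\[
  \exp\Paren{-\lambda a+\frac{\lambda^2\sigma^2}{2(1-\lambda B/3)}}.
\]
Take $\lambda=a/(\sigma^2+Ba/3)$, which automatically satisfies $\lambda B/3<1$. A direct computation gives $1-\lambda B/3=\sigma^2/(\sigma^2+Ba/3)$, so the second term becomes $\tfrac12 a^2/(\sigma^2+Ba/3)$ and the exponent equals $-\tfrac12 a^2/(\sigma^2+\tfrac13 Ba)$, as claimed.

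The only mildly delicate step is the factorial estimate $r!\ge2\cdot3^{r-2}$, which is checked by induction for $r\ge2$. Everything else is routine; no earlier result of the paper is needed.
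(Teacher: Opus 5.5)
Your proof is correct. The paper gives no proof of this lemma and simply quotes Bernstein's inequality as a standard fact. Your Chernoff/moment-generating-function argument, using $\abs{\E[Y_e^r]}\le B^{r-2}\E[Y_e^2]$, the bound $r!\ge 2\cdot 3^{r-2}$, and the choice $\lambda=a/(\sigma^2+Ba/3)$, is the standard proof behind it, and the algebra checks out.

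Two small remarks:
\begin{itemize}
\item Your $\lambda$ satisfies $\lambda B/3<1$ only when $\sigma^2>0$, so it is not automatic. The degenerate case $\sigma^2=0$, where every $Y_e=0$ almost surely and the probability is $0$, should be dispatched separately.
\item The lemma concerns real-valued $Y_e$. Splitting into real and imaginary parts is how the paper applies it in \cref{lem:deviation-bound}, not part of its proof.
\end{itemize}
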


\section*{Acknowledgments}
We thank Tomasso d'Orsi for discussion on an initial stage of this research. This question was first communicated to J.X. by Xinyu Wu in 2020. J.X. also thanks Pravesh K. Kothari, Tselil Schramm, Aaron Potechin, and Neng Huang for many helpful discussions at various points over the years. 
%

\paragraph{AI Disclosure:} The authors formulated the two key ingredients of the refutation algorithm: the reduction to pairwise-independent
states through the classical analogue, and the strong refutation of pairwise-uniform states using bow-tie structures.
AI assistance was used to work out the technical details. We used AI models GPT Pro, Laguna S 2.1, Ornith 1.5 35B A3B,
Qwen 3.8 27B to assist with

\begin{itemize}
  \item The quantum channels $\Phi^{\otimes k},\Psi^{\otimes k}$ and their analysis in the proof
    \cref{thm:almost-twise-indep} were found after discussing with Ornith 1.5 and Qwen 3.8 over multiple days, together
    with our simplification.
  \item It is folklore that a quantum $k$-SAT clause whose forbidden vector is a product state admits a satisfying
    pairwise-uniform state, by extending the classical construction for a forbidden computational-basis assignment.
    Through conversations with GPT Pro 5.5, we learned that, for $k=3$, the same conclusion holds for an arbitrary,
    possibly entangled, forbidden vector; see Appendix~\ref{sec:pairwise-uniform-state-proof}.  Although this observation
    is not used in our refutation algorithm, we include an independently verified proof because it may be of independent
    interest.
  \item Our quantum analogue of \cite[Theorem~5.6]{BRS11} (\cref{thm:cond-indep-complete}) was found by
    suggesting our high-level proof approach to Ornith 1.5 and discussing for multiple days, together with our
    simplification.
\end{itemize}
The authors verified the correctness and originality of all content including references.

\clearpage
\newpage
\appendix
\clearpage
\newpage

\clearpage
\newpage
\bibliographystyle{alpha}
\bibliography{main}
\clearpage
\newpage
\appendix
\section{Deferred Lemmas}
\label{sec:easy-results}

\begin{lemma} \label{lem:outer-product-diff-norm}
  For any pure states $\psi$ and $\phi$,
  \[ \norm{\ket\psi\bra\psi - \ket\phi\bra\phi}_\infty \leq 2\norm{\psi-\phi}\,. \]
\end{lemma}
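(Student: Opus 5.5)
Write $P\defas\ket\psi\bra\psi-\ket\phi\bra\phi$. The plan is a direct computation of the spectrum of $P$, followed by an elementary comparison with $\norm{\psi-\phi}$. Here $\psi,\phi$ are unit vectors, and the bound should hold for every choice of representatives. The paper applies it with $\phi=\tilde\psi$, the net point closest to $\psi$, so we do not assume any phase alignment.

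First observe that $P$ is Hermitian, traceless, and has rank at most two, since it is supported on $\Span\{\psi,\phi\}$. Its two nonzero eigenvalues are therefore $\pm\lambda$. From $\Tr P^2=2-2\abs{\inner{\psi,\phi}}^2=2\lambda^2$ we get
\[ \norm{P}_\infty=\lambda=\sqrt{1-\abs{\inner{\psi,\phi}}^2}. \]
If one prefers to avoid the eigenvalue argument, a cruder route also suffices. Write $P=\ket\psi(\bra\psi-\bra\phi)+(\ket\psi-\ket\phi)\bra\phi$. Each term has operator norm at most $\norm{\psi-\phi}$ because the vectors are unit, so $\norm{P}_\infty\le 2\norm{\psi-\phi}$ immediately.

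The second route already gives the claimed constant 2, so I would present it as the proof. The first computation would only be kept as a remark giving the sharper bound $\sqrt{1-\abs{\inner{\psi,\phi}}^2}\le\norm{\psi-\phi}$. That inequality follows from $1-\abs{z}^2\le(1-\abs z)\cdot 2\le 2(1-\Re z)=\norm{\psi-\phi}^2$ for $z=\inner{\psi,\phi}$.

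There is essentially no obstacle here. The only care needed is to use the triangle inequality for the operator norm together with $\norm{\ket a\bra b}_\infty=\norm a\norm b$, and to note that no phase alignment is required, since the bound holds for any representatives $\psi,\phi$.
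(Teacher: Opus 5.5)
Your proof is correct, and your main argument takes a different route from the paper's. The paper argues variationally. For a unit vector $\sigma$ it factors $\abs{\braket{\sigma|\psi}}^2-\abs{\braket{\sigma|\phi}}^2$ as a difference of squares, bounds the sum factor by $2$ and the difference factor by $\norm{\psi-\phi}$ via Cauchy--Schwarz, and then uses the spectral theorem to identify $\sup_\sigma \abs{\bra{\sigma}P\ket{\sigma}}$ with $\norm{P}_\infty$, which relies on $P$ being Hermitian.

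Your decomposition $P=\ket\psi(\bra\psi-\bra\phi)+(\ket\psi-\ket\phi)\bra\phi$ needs only the triangle inequality and $\norm{\ket a\bra b}_\infty=\norm{a}\norm{b}$. It therefore avoids the Hermiticity and spectral step entirely. It also gives the more general bound $(\norm{\psi}+\norm{\phi})\norm{\psi-\phi}$ for arbitrary vectors, and it is shorter.

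Your eigenvalue remark goes further than either argument. It computes $\norm{P}_\infty=\sqrt{1-\abs{\inner{\psi,\phi}}^2}$ exactly and shows the constant $2$ can be improved to $1$. That would halve the $2\eps$ losses wherever the lemma is applied, although nothing in the paper depends on the constant.
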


\begin{proof}
  For any pure state $\sigma$,
  \begin{equation} \label{eq:pure-state-test}
    \begin{split}
      \abs{\Tr(\ket\sigma\bra\sigma(\ket\psi\bra\psi - \ket\phi\bra\phi))}
      &= \abs{\abs{\braket{\sigma|\psi}}^2 - \abs{\braket{\sigma|\phi}}^2} \\
      &= (\abs{\braket{\sigma|\psi}} + \abs{\braket{\sigma|\phi}}) \cdot \abs{\abs{\braket{\sigma|\psi}} -
      \abs{\braket{\sigma|\phi}}} \,.
    \end{split}
  \end{equation}
  We bound the first factor
  \[ \abs{\braket{\sigma|\psi}} + \abs{\braket{\sigma|\phi}} \leq \norm{\sigma}\norm{\psi} +
  \norm{\sigma}\norm{\phi} = 2 \]
  by Cauchy--Schwarz.
  We also bound the second factor
  \[ \abs{\abs{\braket{\sigma|\psi}} - \abs{\braket{\sigma|\phi}}} \leq \abs{\braket{\sigma|\psi} -
  \braket{\sigma|\phi}} \leq \norm{\sigma} \norm{\psi-\phi} \]
  by triangle inequality and then Cauchy--Schwarz.
  Plugging the previous two equations into \cref{eq:pure-state-test},
  \[ \abs{\Tr(\ket\sigma\bra\sigma(\ket\psi\bra\psi - \ket\phi\bra\phi))} \leq 2\norm{\psi-\phi} \,. \]
  The supremum over $\sigma$ of the left-hand-side equals $\norm{\ket\psi\bra\psi - \ket\phi\bra\phi}_\infty$ by the
  spectral theorem.
\end{proof}

\section{Pseudo-Expectation}

\begin{lemma} \label{lem:conditioning}
  Given any degree-$2(d+s)$ pseudo-expectation $\pE_\rho$, the conditional pseudo-expectation $\pE_{\rho|_x}$ in
  \cref{def:conditioning} is a pseudo-expectation of degree $2d$, whenever $\pE_\rho[\cE_x] > 0$.
\end{lemma}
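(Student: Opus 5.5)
We need to verify the two defining properties of a degree-$2d$ pseudo-expectation for $\pE_{\rho|_x}$, together with well-definedness and linearity. Linearity in $H$ is immediate from the definition \eqref{eq:conditioning}, since $\pE_\rho$ is linear and $H \mapsto \sqrt{\cE_x} H \sqrt{\cE_x}$ is linear. The denominator $\pE_\rho[\cE_x]$ is positive by assumption, so the ratio is well defined.

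For well-definedness we also need $\sqrt{\cE_x} H \sqrt{\cE_x}$ to lie in the domain $\cL^n_{2(d+s)}$ of $\pE_\rho$ whenever $H \in \cL^n_{2d}$. Write $H$ as a sum of terms $A_W \otimes I_{[n]\setminus W}$ with $|W| \le 2d$. The operator $\sqrt{\cE_x}$ acts nontrivially only on $T$, with $|T| = s$. Hence each conjugated term is supported on $W \cup T$, which has size at most $2d + s \le 2(d+s)$.

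Normalization is immediate: taking $H = I$ gives
\[ \pE_{\rho|_x}[I] = \frac{\pE_\rho[\sqrt{\cE_x}\sqrt{\cE_x}]}{\pE_\rho[\cE_x]} = 1. \]

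For positivity, take $H \in \cL^n_d$ and set $G \defas H\sqrt{\cE_x}$. Then
\[ \sqrt{\cE_x}\, H^\dagger H \sqrt{\cE_x} = G^\dagger G, \]
using that $\sqrt{\cE_x}$ is Hermitian. The operator $G$ is supported on sets of size at most $d + s$, so $G \in \cL^n_{d+s}$. The positivity axiom of the degree-$2(d+s)$ pseudo-expectation then gives $\pE_\rho[G^\dagger G] \ge 0$. Dividing by the positive denominator yields $\pE_{\rho|_x}[H^\dagger H] \ge 0$.

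The only point needing care is this degree bookkeeping. Multiplying by the $s$-local factor $\sqrt{\cE_x}$ raises the locality by at most $s$. This is exactly why the degree drops from $2(d+s)$ to $2d$: the squared operator $G$ must stay within degree $d+s$. No other obstacle arises.
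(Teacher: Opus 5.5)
Your proposal is correct and follows the paper's proof: normalization from $\sqrt{\cE_x}\sqrt{\cE_x}=\cE_x$, and positivity by writing $\sqrt{\cE_x}H^\dagger H\sqrt{\cE_x}=G^\dagger G$ with $G=H\sqrt{\cE_x}\in\cL^n_{d+s}$. Your extra check that $\sqrt{\cE_x}H\sqrt{\cE_x}$ has degree at most $2d+s\le 2(d+s)$, so the conditional functional is well defined, is a correct detail that the paper leaves implicit.
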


\begin{proof}
  $\pE_{\rho|_x}[I] = \pE_\rho[\sqrt{\cE_x}\sqrt{\cE_x}]/\pE_\rho[\cE_x] = 1$.
  Given any $H \in \cL^n_d$, $\sqrt{\cE_x}H^\dagger H\sqrt{\cE_x} = G^\dagger G$ with $G = H\sqrt{\cE_x}
  \in \cL^n_{d+s}$, so $\pE_{\rho|_x}[H^\dagger H] = \pE_\rho[G^\dagger G]/\pE_\rho [\cE_x] \geq 0$.
\end{proof}

\begin{lemma} \label{lem:reduced-pseudo-expectation}
  Given any degree-$2d$ $n$-qudit pseudo-expectation $\pE_\rho$, any $T \subseteq [n]$, the reduced pseudo-expectation
  $\pE_{\rho_T}$ is a pseudo-expectation of degree $2d$.
\end{lemma}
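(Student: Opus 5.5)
The statement to prove is \cref{lem:reduced-pseudo-expectation}: the reduced functional $\pE_{\rho_T}$ is a degree-$2d$ pseudo-expectation on the qudits in $T$. The plan is to check the two defining properties directly from \eqref{eq:reduced-pseudo-state}, transporting each back to $\pE_\rho$. Linearity over $\C$ is immediate, since $H \mapsto H \otimes I_{[n]\setminus T}$ is $\C$-linear.

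First, I will check that $\pE_{\rho_T}$ is well-defined on $\cL^T_{2d}$. Take $H \in \cL^T_{2d}$, a linear combination of terms $A \otimes I_{T\setminus W}$ with $W \subseteq T$ and $|W| \le 2d$. Then $H \otimes I_{[n]\setminus T}$ is a combination of $A \otimes I_{[n]\setminus W}$ with the same $W$, so it lies in $\cL^n_{2d}$. The same embedding maps $\cL^T_d$ into $\cL^n_d$.

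Normalization follows because the identity on $T$ embeds to $I_T \otimes I_{[n]\setminus T} = I$, giving $\pE_{\rho_T}[I_T] = \pE_\rho[I] = 1$.

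For positivity, let $H \in \cL^T_d$ and set $G \defas H \otimes I_{[n]\setminus T} \in \cL^n_d$. Since tensoring with the identity is a $*$-homomorphism,
\[ (H^\dagger H) \otimes I_{[n]\setminus T} = (H \otimes I)^\dagger (H \otimes I) = G^\dagger G. \]
Hence $\pE_{\rho_T}[H^\dagger H] = \pE_\rho[G^\dagger G] \ge 0$.

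There is no genuine obstacle here. The only point needing care is the degree bookkeeping: embedding into the larger system must not increase the weight, which holds because the embedding only adds identity factors.
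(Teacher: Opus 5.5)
Your proof is correct and follows the paper's argument exactly: normalization because $I_T \otimes I_{[n]\setminus T} = I$, and positivity because $G = H \otimes I_{[n]\setminus T} \in \cL^n_d$ satisfies $G^\dagger G = (H^\dagger H) \otimes I_{[n]\setminus T}$. Your explicit check that the embedding adds only identity factors, so $\cL^T_{2d}$ lands in $\cL^n_{2d}$, is a point the paper leaves implicit.
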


\begin{proof}
  For every operator $H \in \cL^T_d$, $G \defas H \otimes I_{[n] \setminus T} \in
  \cL^n_d$ and $(H^\dagger H) \otimes I_{[n] \setminus T} = G^\dagger G$, so $\pE_{\rho_T}$ inherits positivity from
  $\pE_\rho$.
  Finally, $\pE_{\rho_T} [I_T] = \pE_\rho[I_T \otimes I_{[n] \setminus T}] = 1$ by normalization of $\pE_\rho$.
\end{proof}

\begin{lemma} \label{lem:genuine-state}
  Given an $n$-qudit pseudo-expectation operator $\pE_\rho$ of degree $2n$, there is a unique genuine quantum state
  $\hat\rho$ such that $\pE_\rho [H] = \Tr[\hat\rho H]$ for every operator $H$.
\end{lemma}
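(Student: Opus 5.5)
The plan is to build $\hat\rho$ from the values of $\pE_\rho$ on a basis, and then show it is positive semidefinite and has unit trace. Since $\pE_\rho$ has degree $2n$, its domain $\cL^n_{2n}$ is the whole operator space $\cL^{\otimes n}$: every operator on $n$ qudits has weight at most $n \le 2n$. So $\pE_\rho$ is a $\C$-linear functional on the full matrix algebra $\C^{q^n\times q^n}$.

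\emph{Existence and uniqueness of a representing matrix.} The pairing $(A,H)\mapsto \Tr[AH]$ is nondegenerate on $\cL^{\otimes n}$. Hence every linear functional has the form $H\mapsto\Tr[\hat\rho H]$ for exactly one matrix $\hat\rho$. Concretely, expand in the Heisenberg--Weyl basis $\cW^n$, which is orthogonal with $\Tr[G^\dagger G']=q^n\delta_{GG'}$. The representing matrix is
\[ \hat\rho \defas \frac1{q^n}\sum_{G\in\cW^n}\pE_\rho[G]\,G^\dagger . \]
Checking the identity on each basis element $G'$ and extending by linearity gives $\Tr[\hat\rho G']=\pE_\rho[G']$. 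Uniqueness follows from nondegeneracy: if $\Tr[AH]=0$ for all $H$, then taking $H=A^\dagger$ gives $\norm{A}_F^2=0$.

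\emph{Genuine state.} Normalization gives $\Tr\hat\rho=\pE_\rho[I]=1$. For positivity, the degree-$2n$ condition covers every $H\in\cL^n_n=\cL^{\otimes n}$. In particular, take $H=\ket v\bra w$ for any unit vectors $v,w$. Then $H^\dagger H=\ket w\bra w$, so
\[ \bra w\hat\rho\ket w = \Tr[\hat\rho\ket w\bra w]=\pE_\rho[H^\dagger H]\ge 0 \]
for every $w$. A matrix with nonnegative quadratic form over $\C$ is Hermitian and positive semidefinite, because the complex polarization identity forces $\hat\rho=\hat\rho^\dagger$. Hence $\hat\rho$ is a density matrix.

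The main point to be careful about is that positivity must be available for operators of full weight $n$. Degree $2n$ provides exactly this, since $\cL^n_n$ is already everything. The only other subtlety is Hermiticity: we do not assume $\pE_\rho$ is real on Hermitian operators a priori, and this is recovered from positivity over $\C$ as above.
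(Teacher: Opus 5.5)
Your proof is correct and follows essentially the same route as the paper. You expand in the Heisenberg--Weyl basis to get the representing matrix, read off $\Tr\hat\rho=\pE_\rho[I]=1$ and positivity from $\pE_\rho[H^\dagger H]\ge 0$ over all of $\cL^n_n$, and get uniqueness from nondegeneracy of the trace pairing. The only variation is that you get Hermiticity of $\hat\rho$ directly from $\bra w\hat\rho\ket w\ge 0$ for all complex $w$ via polarization, whereas the paper first shows $\pE_\rho$ is real on Hermitian operators (writing $H=P-N$ with $P,N\succeq 0$) and uses that to check $\hat\rho^\dagger=\hat\rho$; your coefficient convention $\pE_\rho[G]\,G^\dagger$ also avoids the paper's reliance on $G^\dagger\in\cW^n$.
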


\begin{proof}
  Note that $\pE_\rho P \geq 0$ for every positive semi-definite operator $P$ in $\cL^n_n$, because $P =
  \sqrt{P}^\dagger\sqrt{P}$ with $\sqrt{P} \in \cL^n_n$, and $\pE_\rho$ is positive.
  Then $\pE_\rho H$ is real for every Hermitian $H$, because $H = P - N$ for some positive-semidefinite $P$ and $N$.
  Therefore $\pE_\rho [X]^\dagger = \pE_\rho [X^\dagger]$ for every operator $X$, by expanding $X = A + iB$ for
  Hermitian $A$ and $B$ and linearity of $\pE_\rho$.

  Define
  \[ \hat\rho \defas \frac1{q^n} \sum_{G \in \cW^n} \pE_\rho [G^\dagger] G\,. \]
  Then
  \[ \hat\rho^\dagger = \frac1{q^n} \sum_{G \in \cW^n} \pE_\rho [G^\dagger]^\dagger G^\dagger
  = \frac1{q^n} \sum_{G \in \cW^n} \pE_\rho [G] G^\dagger = \hat\rho \]
  where the middle equality uses $\pE_\rho[G^\dagger]^\dagger = \pE_\rho G$ and $G^\dagger \in \cW^n$ for $G \in \cW^n$.
  For any $H \in \cW^n$,
  \[ \Tr[\hat\rho H] = \frac1{q^n} \sum_{G \in \cW^n} \pE_\rho[G^\dagger] \Tr[GH] = \pE_\rho[H]\,, \]
  because $\cW^n$ is an orthogonal basis of $\cL^n$, each of Frobenius norm $\sqrt{q^n}$.
  In particular, $\Tr[\hat\rho] = \pE_\rho [I] = 1$, since $\pE_\rho$ is normalized.
  Also, $\Tr[\hat\rho H] = \pE_\rho[H]$ for every operator $H$ by linearity.
  Finally, for any operator $H$ (necessarily of degree at most $n$),
  $\Tr[\hat\rho H^\dagger H] = \pE_\rho [H^\dagger H] \geq 0$ since $\pE_\rho$ is positive,
  implying $\hat\rho$ is positive semidefinite.
  Uniqueness follows from $\norm{\hat\rho - \hat\sigma}_F^2 = \frac 1{q^n} \sum_{G \in \cW^n}
  \abs{\Tr [\hat\rho G] - \Tr[\hat\sigma G]}^2$.
\end{proof}

\begin{lemma} \label{lem:partial-trace-channel}
  For any operator $\rho$ on a set $S$ of qudits, any single-qudit trace-preserving linear map $\Lambda$, any
  $R \subseteq S$, we have $\Tr_R \Lambda^{\otimes S}(\rho) = \Lambda^{\otimes (S \setminus R)}(\Tr_R \rho)$.
\end{lemma}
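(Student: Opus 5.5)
We prove $\Tr_R \Lambda^{\otimes S}(\rho) = \Lambda^{\otimes (S \setminus R)}(\Tr_R \rho)$ by reducing to product operators via linearity. Both sides are linear in $\rho$: the partial trace $\Tr_R$ is linear, and so is $\Lambda^{\otimes S}$, since it is the tensor product of linear maps. Operators on $S$ are spanned by product operators $\bigotimes_{s\in S} X_s$; for instance, the Heisenberg--Weyl operators $\cW^S$ form a basis of product form. It therefore suffices to verify the identity for $\rho = \bigotimes_{s \in S} X_s$.

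For such a product operator, the left side is computed directly. By definition $\Lambda^{\otimes S}(\rho) = \bigotimes_{s\in S}\Lambda(X_s)$. Since $\Tr_R$ acts on a product operator by taking the trace of each factor in $R$ and leaving the others, we get
\[ \Tr_R \Lambda^{\otimes S}(\rho) = \Paren{\prod_{r\in R}\Tr \Lambda(X_r)} \bigotimes_{s\in S\setminus R}\Lambda(X_s)\,. \]
Trace preservation of $\Lambda$ gives $\Tr\Lambda(X_r) = \Tr X_r$. Therefore the left side equals
\[ \Paren{\prod_{r\in R}\Tr X_r}\bigotimes_{s\in S\setminus R}\Lambda(X_s)\,. \]

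The right side gives the same expression. First, $\Tr_R\rho = \Paren{\prod_{r\in R}\Tr X_r}\bigotimes_{s\in S\setminus R}X_s$. Applying $\Lambda^{\otimes(S\setminus R)}$, and pulling the scalar out by linearity, yields exactly the expression above. This proves the identity on product operators, and hence for all $\rho$ by linearity.

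The only points needing care are the following:
\begin{enumerate}
  \item Defining $\Lambda^{\otimes S}$ on non-product operators as the unique linear extension, which is what makes the reduction to product operators legitimate.
  \item Noting that trace preservation is needed only on the traced-out factors $r \in R$.
\end{enumerate}
Positivity of $\Lambda$ is never used.
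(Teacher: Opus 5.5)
Your proof is correct and follows the paper's route: reduce by linearity to product operators, then use trace preservation on the traced-out factors. The only cosmetic difference is in how far you factor. The paper splits $\rho=\sum_i A_i\otimes B_i$ across $S\setminus R$ and $R$ and invokes trace preservation of $\Lambda^{\otimes R}$ as a whole, whereas you factor down to single sites, which makes that trace-preservation step explicit factor by factor.
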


\begin{proof}
  Let $T = S \setminus R$.
  Expand $\rho = \sum_i A_i \otimes B_i$, where $A_i$ acts on $T$ and $B_i$ acts on $R$.
  \begin{align*}
    \Tr_R \Lambda^{\otimes S}(A_i \otimes B_i)
    &= \Tr_R (\Lambda^{\otimes T}(A_i) \otimes \Lambda^{\otimes R}(B_i)) \\
    &= \Lambda^{\otimes T}(A_i) \Tr_R \Lambda^{\otimes R} (B_i) \\
    &\overset{(*)}= \Lambda^{\otimes T}(A_i) \Tr B_i = \Lambda^{\otimes T}(A_i \Tr B_i) \\
    &= \Lambda^{\otimes T}(\Tr_R A_i \otimes B_i) \,,
  \end{align*}
  where $(*)$ is the trace-preservation of $\Lambda^{\otimes R}$.
  The result follows by summing over all $i$ and the linearity of the partial trace and $\Lambda$.
\end{proof}

\begin{lemma}[Pseudo-expectation Cauchy--Schwarz] \label{lem:pseudo-cs}
  For any $P, Q \in \cL^n_d$, any degree-$2d$ pseudo-expectation $\pE_\rho$,
  \[ \Abs{\pE_\rho P^\dagger Q} \leq \sqrt{\pE_\rho P^\dagger P} \sqrt{\pE_\rho Q^\dagger Q}. \]
\end{lemma}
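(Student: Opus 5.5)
The plan is to reduce to the scalar Cauchy--Schwarz inequality for a positive semidefinite Hermitian form, exactly as in the classical Sum-of-Squares setting. Define the sesquilinear form $\langle P, Q\rangle_\rho \defas \pE_\rho[P^\dagger Q]$ on $\cL^n_d$. For $P,Q \in \cL^n_d$, the product $P^\dagger Q$ lies in $\cL^n_{2d}$, so the form is well defined. It is linear in $Q$ and conjugate-linear in $P$, because $(aP)^\dagger = \bar a P^\dagger$ and $\pE_\rho$ is $\C$-linear.

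\emph{Step 1: Hermitian symmetry.} I first show that $\pE_\rho[(P^\dagger Q)^\dagger] = \overline{\pE_\rho[P^\dagger Q]}$, i.e.\ $\langle Q,P\rangle_\rho = \overline{\langle P,Q\rangle_\rho}$. Apply positivity to $H = P + zQ \in \cL^n_d$ for $z \in \C$:
\[ 0 \le \pE_\rho[H^\dagger H] = \pE_\rho[P^\dagger P] + |z|^2\pE_\rho[Q^\dagger Q] + z\,\pE_\rho[P^\dagger Q] + \bar z\,\pE_\rho[Q^\dagger P]. \]
The first two terms are real and nonnegative by positivity. Taking $z=1$ shows that $\pE_\rho[P^\dagger Q] + \pE_\rho[Q^\dagger P]$ is real. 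Taking $z=i$ shows that $i(\pE_\rho[P^\dagger Q] - \pE_\rho[Q^\dagger P])$ is real. Together these give conjugate symmetry. This is the same argument as in \cref{lem:genuine-state}, but carried out within degree $d$ rather than using a genuine state.

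\emph{Step 2: Cauchy--Schwarz.} With $\langle\cdot,\cdot\rangle_\rho$ now a positive semidefinite Hermitian form, I run the standard argument. Set $a = \pE_\rho[P^\dagger P]$, $c = \pE_\rho[Q^\dagger Q]$ and $\beta = \pE_\rho[P^\dagger Q]$. If $\beta = 0$ there is nothing to prove. Otherwise choose $z = -t\,\bar\beta/|\beta|$ with $t \in \R$, which makes the cross terms equal to $-2t|\beta|$. The expansion then gives $a - 2t|\beta| + t^2 c \ge 0$ for all real $t$.

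If $c > 0$, the discriminant condition yields $|\beta|^2 \le ac$. If $c = 0$, letting $t \to +\infty$ forces $|\beta| = 0$, so the inequality holds trivially. In either case $|\beta| \le \sqrt{a}\sqrt{c}$.

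The only point needing care is Step 1. The definition of pseudo-expectation does not explicitly assume that $\pE_\rho$ is Hermitian, so conjugate symmetry has to be derived from positivity before the discriminant argument applies. Everything else is routine.
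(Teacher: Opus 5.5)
Your proof is correct and follows essentially the same route as the paper's. Both expand positivity of $\pE_\rho$ on a linear combination of $P$ and $Q$ after a phase rotation that makes the cross term equal to $|\pE_\rho[P^\dagger Q]|$, and both handle the degenerate case with a limiting argument. Your Step 1 also makes explicit the conjugate symmetry $\pE_\rho[Q^\dagger P]=\overline{\pE_\rho[P^\dagger Q]}$, which the paper uses without justification when it writes $\pE P^\dagger Q+\pE Q^\dagger P=2|\pE P^\dagger Q|$.
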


\begin{proof}
  Let $\beta \defas \pE_\rho P^\dagger Q$. WLOG $\beta = \abs{\beta}$, because $\beta = \gamma \abs{\beta}$ for some
  $\abs{\gamma} = 1$, and the special case (of $\beta = \abs{\beta}$) implies the general inequality by multiplying
  $\gamma$ to $P$.

  The rest of the proof follows \cite[Lemma~3.5]{BarakBHKSZ12}.
  Suppose $\pE_\rho P^\dagger P > 0$ and $\pE_\rho Q^\dagger Q > 0$.
  By linearity of the inequality, WLOG $\pE_\rho P^\dagger P = \pE_\rho Q^\dagger Q = 1$.
  Expand $\pE_\rho (P-Q)^\dagger(P-Q) \geq 0$, we get $2 = \pE_\rho P^\dagger P + \pE_\rho Q^\dagger Q \geq \pE
  P^\dagger Q + \pE Q^\dagger P = 2\Abs{\pE P^\dagger Q}$.

  Suppose $\pE_\rho P^\dagger P = 0$.
  Let $\alpha > 0$.
  Expand $\pE_\rho (P-\alpha Q)^\dagger(P-\alpha Q) \geq 0$, we get $\alpha^2 \pE_\rho Q^\dagger Q
  \geq \alpha \pE_\rho P^\dagger Q + \alpha \pE_\rho Q^\dagger P = 2\alpha \abs{\pE_\rho P^\dagger Q}$.
  Therefore $\abs{\pE_\rho P^\dagger Q} = 0$ by taking $\alpha \to 0$.
\end{proof}

\section{Pairwise-Uniform Avoidance for Three Qubits}
\label{sec:pairwise-uniform-state-proof}

Let $A=B=C\in \C^2$.  A state $\rho$ on $A\ot B\ot C$ is \emph{pairwise uniform} if its two-qubit marginals satisfy
\[
  \Tr_A(\rho)=\frac{I_{BC}}{4},\qquad
  \Tr_B(\rho)=\frac{I_{AC}}{4},\qquad
  \Tr_C(\rho)=\frac{I_{AB}}{4}.
\]

\begin{proposition}
  \label{prop:pairwise-uniform-avoiding-vector}
  For every unit vector $w\in A\ot B\ot C$, there exists a pairwise-uniform state $\rho$ such that $\rho w=0$.
  In particular, $\Tr(\rho\ket{w}\bra{w})=0$.
\end{proposition}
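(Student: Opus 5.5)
The plan is to build $\rho$ explicitly, block-diagonal with respect to an orthonormal basis of qubit $A$, with each block a two-qubit state on $BC$ that annihilates the corresponding component of $w$. Pairwise uniformity and its Pauli characterization are both invariant under local unitaries $U_A\otimes U_B\otimes U_C$, since these map $I/4$ to $I/4$ on every pair. So we may rotate $w$ freely. Recall that $\rho$ is pairwise uniform iff its Pauli expansion $\frac18\sum_{P} c_P P$ has $c_I=1$ and $c_P=0$ for all $P$ of weight $1$ or $2$.

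\textbf{Ansatz.} Fix an orthonormal basis $\{\ket{a_0},\ket{a_1}\}$ of $A$ and a traceless reflection $T$ on $BC$, that is, $T^2=I$ with $\Tr T=0$. Set
\[
  \rho \defas \tfrac12\Bigl(\ket{a_0}\bra{a_0}\otimes\tfrac{I+T}{4}+\ket{a_1}\bra{a_1}\otimes\tfrac{I-T}{4}\Bigr)
  =\tfrac18\bigl(I + Z_a\otimes T\bigr),
\]
where $Z_a=\ket{a_0}\bra{a_0}-\ket{a_1}\bra{a_1}$. Positivity holds because $T$ has spectrum $\{\pm1\}$. If $T$ has no weight-$1$ Pauli terms on $B$ or $C$, then $Z_a\otimes T$ is a sum of weight-$3$ Paulis only, and $\rho$ is pairwise uniform. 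The natural choice is a product reflection $T=U_B\otimes U_C$ with $U_B,U_C$ traceless reflections, i.e.\ of the form $\vec n\cdot\vec\sigma$. Its $\pm1$ eigenspaces $E_\pm\subset B\otimes C$ are each two-dimensional.

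\textbf{Annihilation condition.} Expand $w=\ket{a_0}\otimes\phi_0+\ket{a_1}\otimes\phi_1$. Then $\rho w=0$ iff $(I+T)\phi_0=0$ and $(I-T)\phi_1=0$, i.e.\ $\phi_0\in E_-$ and $\phi_1\in E_+$. Equivalently, we need $w\in \ket{a_0}\otimes E_-\;\oplus\;\ket{a_1}\otimes E_+$, which is exactly the $-1$ eigenspace of $Z_a\otimes U_B\otimes U_C$. This is a $4$-dimensional subspace of $\C^8$. So the proposition reduces to the following claim: every unit vector $w\in(\C^2)^{\otimes3}$ lies in the $-1$ eigenspace of some product $\sigma_{\vec a}\otimes\sigma_{\vec b}\otimes\sigma_{\vec c}$ with $\vec a,\vec b,\vec c$ unit vectors in $\R^3$.

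\textbf{Main obstacle.} The hard step is proving this claim. A naive parameter count is unfavorable: the product has $6$ real parameters, while the condition imposes $8$ real equations. However, the function $f(\vec a,\vec b,\vec c)=\bra w\sigma_{\vec a}\otimes\sigma_{\vec b}\otimes\sigma_{\vec c}\ket w$ is a real trilinear form on $S^2\times S^2\times S^2$. Since $w$ lies in the $-1$ eigenspace iff $f=-1$, it suffices to show $\min f=-1$, i.e.\ that the correlation tensor $t_{ijk}=\bra w\sigma_i\sigma_j\sigma_k\ket w$ has spectral norm exactly $1$. Note that $f$ is odd in $\vec a$, so $\min f=-\max f$.

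I would attack this via the canonical form. Use local unitaries and the Acín form $w=\lambda_0\ket{000}+\lambda_1e^{i\varphi}\ket{100}+\lambda_2\ket{101}+\lambda_3\ket{110}+\lambda_4\ket{111}$. Alternatively, use the Schmidt decomposition across $A|BC$, writing $w=\ket{0}\phi_0+\ket1\phi_1$ with $\phi_0\perp\phi_1$.

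In the Schmidt picture, it is enough to find a product reflection $U_B\otimes U_C$ on $BC$, together with a possibly rotated orthonormal basis of $A$, such that the rotated components satisfy $\phi_0'\in E_-$ and $\phi_1'\in E_+$. Rotating the $A$-basis replaces $(\phi_0,\phi_1)$ by a unitary mixture of them, inside the fixed $2$-dimensional span $S$. I would first try to choose $U_B\otimes U_C$ so that $S$ meets both $E_+$ and $E_-$ in one dimension each, with those two lines orthogonal. This can be arranged, for instance, by taking $E_\pm$ spanned by suitable product or Bell-type vectors adapted to $S$, using the two-qubit Schmidt form of a vector in $S$.

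If this direct construction fails for degenerate $w$, I would fall back on two tools. The first is a continuity/compactness argument: the set of pairwise-uniform states orthogonal to $w$ is closed in $w$, so it suffices to treat generic $w$. The second is SDP duality: infeasibility would force $\ket w\bra w\succeq\lambda I+L$ with $\lambda>0$ and $L$ of Pauli weight $1$ or $2$. I would try to refute this by averaging $L$ with its image under the antiunitary $(Y\otimes Y\otimes Y)K$, which is order-preserving and negates the odd-weight terms, and then using a trace and rank count.
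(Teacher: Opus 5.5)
Your argument reduces to the claim that every $w$ lies in the $-1$ eigenspace of some $\sigma_{\vec a}\otimes\sigma_{\vec b}\otimes\sigma_{\vec c}$. That claim is false, so there is a genuine gap.

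\textbf{Counterexample.} Take $w=\cos\phi\ket{000}+\sin\phi\ket{111}$ with $\phi=\pi/8$. Only $ZZZ$ and the strings in $\{X,Y\}^{\otimes 3}$ have nonzero expectation in $w$, and one finds
\[
  f(\vec a,\vec b,\vec c)=\cos 2\phi\,a_zb_zc_z+\sin 2\phi\,\Re\bigl((a_x+ia_y)(b_x+ib_y)(c_x+ic_y)\bigr).
\]
Cauchy--Schwarz then gives $\abs{f}\le\max(\abs{\cos2\phi},\abs{\sin2\phi})=1/\sqrt2<1$.

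\textbf{Enlarging $T$ does not help.} Pairwise uniformity of $\frac18(I+Z_a\otimes T)$ forces $T=\sum_{i,j}t_{ij}\,\sigma_i\otimes\sigma_j$. The condition $\rho w=0$ forces $T\phi_0=-\phi_0$ and $T\phi_1=\phi_1$. Since $T$ is Hermitian, $\phi_0\perp\phi_1$, so $w=\ket{a_0}\phi_0+\ket{a_1}\phi_1$ is a Schmidt decomposition across $A|BC$. For the $w$ above the Schmidt coefficients are distinct, which forces $\{\phi_0,\phi_1\}$ to be proportional to $\{\ket{00},\ket{11}\}$. That is impossible, because $\bra{00}T\ket{00}=t_{zz}=\bra{11}T\ket{11}$ for every such $T$. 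The same computation rules out your Schmidt-picture construction.

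\textbf{The fallbacks do not close the gap.} Your parameter count was the right warning. The $w$ reachable by the product ansatz form a set of real dimension at most $6+6=12$ inside the $14$-real-dimensional space of pure states. So \emph{generic} $w$ fail, and the continuity fallback reduces you to exactly the bad case. The SDP-duality sketch names no mechanism that would produce a contradiction, so no proof remains.

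\textbf{The missing idea.} Block-diagonalize over a basis of $BC$ rather than of $A$, and allow four reflections on $A$ that in general do not commute; your $Z_a\otimes T$ corresponds to the special case where they commute.
\begin{itemize}
  \item The paper identifies $B\otimes C$ with $M_2(\C)$ and uses the real $4$-dimensional quaternion subspace $\mathcal F=\{\bigl(\begin{smallmatrix}a&b\\-\bar b&\bar a\end{smallmatrix}\bigr):a,b\in\C\}$. Its real orthonormal bases are complex orthonormal bases of maximally entangled vectors.
  \item Write $w=\sum_k a_k\otimes f_k$ and take a real SVD of $x\mapsto\sum_k x_ka_k$ from $\R^4$ to $A$, viewed as $\R^4$. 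This gives $w=\sum_{j=1}^4 s_j u_j\otimes\beta_j$ with $\{u_j\}$ a real orthonormal basis of $A$, hence $\sum_j\ket{u_j}\bra{u_j}=2I_A$.
  \item Then $H=\sum_j(I-2\ket{u_j}\bra{u_j})\otimes\ket{\beta_j}\bra{\beta_j}$ is a Hermitian involution with $Hw=-w$. Its partial traces over $A$, $B$, and $C$ all vanish: each reflection is traceless, the reflections sum to zero, and each $\beta_j$ has maximally mixed marginals.
\end{itemize}
So $\rho=(I+H)/8$ is the required pairwise-uniform state.
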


\begin{proof}
  Identify $B\ot C$ with $M_2(\C)$ by vectorization and consider the real subspace
  \[
    \mathcal F
    =
    \left\{
      \begin{pmatrix}
        a & b \\
        -\overline b & \overline a
      \end{pmatrix}
      :a,b\in\C
    \right\}.
  \]
  A direct calculation shows that the Hilbert--Schmidt inner product is real on $\mathcal F$, that
  $M_2(\C)=\mathcal F\oplus i\mathcal F$ over $\R$, and that
  \[
    XX^\dagger=X^\dagger X=\frac{\norm{X}_{\mathrm F}^2}{2}I_2
    \qquad (X\in\mathcal F).
  \]
  Thus every real orthonormal basis of $\mathcal F$ is a complex orthonormal basis of $B\ot C$ consisting of
  maximally entangled vectors.

  Fix such a basis $f_1,\ldots,f_4$ and write $w=\sum_k a_k\ot f_k$.  Regard $A$ as the real inner-product space
  $A_{\R}$ with $\langle x,y\rangle_{\R}=\Re\langle x,y\rangle$.  A real singular-value decomposition of
  $x\mapsto\sum_kx_ka_k$ gives real orthonormal bases $\{o_j\}$ of $\R^4$ and $\{u_j\}$ of $A_{\R}$ such that
  $\sum_k(o_j)_ka_k=s_ju_j$ for some $s_j\geq0$.  Setting $\beta_j=\sum_k(o_j)_kf_k$ gives
  \[
    w=\sum_{j=1}^4s_ju_j\ot\beta_j,
    \qquad
    \Tr_B\bigl(\ket{\beta_j}\bra{\beta_j}\bigr)
    =\frac{I_C}{2},
    \qquad
    \Tr_C\bigl(\ket{\beta_j}\bra{\beta_j}\bigr)
    =\frac{I_B}{2},
  \]
  where the $\beta_j$ form a complex orthonormal basis of $B\ot C$.

  Parseval for the real orthonormal bases $\{u_j\}_{j=1}^4$ and $\{iu_j\}_{j=1}^4$ of $A_{\R}$ gives
  \[
    \sum_{j=1}^4\abs{\braket{u_j|z}}^2
    =\sum_{j=1}^4\bigl(\Re\braket{u_j|z}\bigr)^2
      +\sum_{j=1}^4\bigl(\Re\braket{iu_j|z}\bigr)^2
    =2\norm{z}^2
    \qquad (z\in A).
  \]
  Hence $\sum_j\ket{u_j}\bra{u_j}=2I_A$.  The reflections
  $R_j=I_A-2\ket{u_j}\bra{u_j}$ therefore satisfy
  \[
    R_j^2=I_A,
    \qquad R_ju_j=-u_j,
    \qquad \Tr(R_j)=0,
    \qquad \sum_jR_j=0.
  \]
  Define
  \[
    H=\sum_{j=1}^4R_j\ot\ket{\beta_j}\bra{\beta_j}.
  \]
  Orthogonality of the $\beta_j$ implies that $H$ is a Hermitian involution and $Hw=-w$.  Moreover,
  \[
    \begin{aligned}
      \Tr_A(H)&=0,\\
      \Tr_B(H)&=\frac12\left(\sum_jR_j\right)\ot I_C=0,
      &\qquad
      \Tr_C(H)&=\frac12\left(\sum_jR_j\right)\ot I_B=0.
    \end{aligned}
  \]
  Consequently, $\rho=(I_{ABC}+H)/8$ is positive semidefinite, has trace one and pairwise-uniform marginals, and
  satisfies $\rho w=0$.
\end{proof}

\end{document}